\definecolor{b1}{RGB}{0,101,189} 
\definecolor{a1}{RGB}{227,114,34} 
\newcommand{\ket}[1]{|#1\rangle} 
\newcommand{\Hil}{\mathcal{H}} 
\newcommand{\id}{\mathds{1}}
\newcommand{\dom}{\operatorname{dom}}
\newcommand{\Lie}[1]{\left\langle #1 \right\rangle_{\text{Lie}} } 
\newcommand{\LieC}[1]{\left\langle #1 \right\rangle_{\text{Lie},\mathbb{C}} } 
\newcommand{\SU}{\widetilde{\mathcal{SU}}}
\newcommand{\U}{\widetilde{\mathcal{U}} }
\newcommand{\su}{\widetilde{\mathfrak{su}}}
\newcommand{\G}{\mathcal{G}} 
\newcommand{\lC}{\mathfrak{l}_\mathbb{C}} 
\newcommand{\lCn}{\mathfrak{l}^{(n)}_\mathbb{C}}  
\newcommand{\lCK}{\mathfrak{l}^{[K]}_\mathbb{C}}  
\newcommand{\HP}{H_{\text{\tiny{P}}} } 
\newcommand{\HA}{H_{\text{\tiny{A}}} } 
\newcommand{\HI}{H_{\text{\tiny{I}} }} 
\newcommand{\HH}[1]{H_{\mathsf{\tiny{H}}}(#1) }
\newcommand{\HHtwo}{H_{\text{\tiny{H}}} }
\newcommand{\oPk}{\omega^{\text{\tiny{P}}}_k }
\newcommand{\oAk}{\omega^{\text{\tiny{A}}}_k }
\newcommand{\oIk}{\omega^{\text{\tiny{I}}}_k }
\newcommand{\oP}[1]{\omega^{\text{\tiny{P}}}_{#1} }
\newcommand{\oA}[1]{\omega^{\text{\tiny{A}}}_{#1} }
\newcommand{\oI}[1]{\omega^{\text{\tiny{I}}}_{#1} }
\newcommand{\tH}{{\textsf{\tiny{H}}} }
\newcommand{\DriftJ}{H_\mathsf{\tiny{drift}}} 
\newcommand{\JC}{\textsf{\tiny{JC}} } 
\newcommand{\JCH}{\mathsf{\tiny{JCH}} } %
\newcommand{\fset}{D_M} 
\newtheorem{dfn}{Definition}[section]
\newtheorem{prop}[dfn]{Proposition}
\newtheorem{thm}[dfn]{Theorem}
\newtheorem{lem}[dfn]{Lemma}
\title{Controllability of the Jaynes-Cummings-Hubbard model}
\author[1]{Margret Heinze\footnote{margret.heinze@ma.tum.de}}
\author[1,2]{Michael Keyl\footnote{michael.keyl@tum.de}}
\affil[1]{Zentrum Mathematik, Technische Universit\"at M\"unchen, 85748 Garching, Germany}
\affil[2]{Dahlem Center for Complex Quantum Systems, Freie Universit\"at Berlin, 14195 Berlin, Germany}
\date{}
\begin{document}
	
\maketitle

\begin{abstract}

In quantum control theory, the fundamental issue of controllability covers the questions whether and under which conditions a system can be steered from one pure state into another by suitably tuned time evolution operators. 
Even though Lie theoretic methods to analyze these aspects are well-established for finite dimensional systems, they fail to apply to those with an infinite number of levels. The Jaynes-Cummings-Hubbard model -- describing two-level systems in coupled cavities -- is such an infinite dimensional system.

In this contribution we study its controllability. In the two cavity case we exploit symmetry arguments; we show that one part of the control Hamiltonians can be studied in terms of infinite dimensional block diagonal Lie algebras while the other part breaks this symmetry to achieve controllability. An induction on the number of cavities extends this result to the general case. Individual control of the qubit and collective control of the hopping between cavities is sufficient for both pure state and strong operator controllability. We additionally establish new criteria for the controllability of infinite dimensional quantum systems admitting symmetries.

\end{abstract}

\section{Introduction}

The experimental investigation of a quantum system demands control over its dynamics. In this context, the interdisciplinary and rapidly evolving field of quantum control theory aims at establishing a theoretical footing and systematic methods. The implicit goal is to be able to drive systems into a desired target state by time-dependently tuning selected couplings. Reviews can be found in~\cite{dong2010quantum,glaser2015training}.
One of the fundamental questions in quantum control theory concerns a system's controllability. It asks whether and under which conditions a system can be steered into any desired state. 
There exist other variants of controllability, also asking whether it is possible to implement any unitary operator on the system by a suitably tuned time evolution operator. 

This notion connects controllability to quantum computation where certain universal unitary gate sets have to be implemented. Promising proposals for the physical implementation of a quantum computer use a trapped ion as basis~\cite{cirac1995quantum}, a single atom in a cavity~\cite{ye1999trapping}, a superconducting qubit in a microwave resonator~\cite{shnirman1997quantum} or a quantum dot qubit in a nanocavity~\cite{imamog1999quantum}. They range from quantum optical systems to superconductor or semiconductor solid state systems.
All such systems have in common that they involve interactions of light and matter and are described by cavity quantum electrodynamics.
The light fields contain even in the simplest description an infinite number of levels. 
Many of the exciting and challenging control problems concern infinite-dimensional systems that treat light-matter interaction.

Whereas the controllability of finite dimensional systems is well-studied and treatable via the Lie algebra rank condition \cite{sussmann1972controllability, jurdjevic1972control, brockett1972system, brockett1973lie}, it is more intricate in infinite dimensions since one has to deal with technical difficulties such as unbounded operators, different norms and the potential irreversibility of the dynamics. 
For infinite dimensional systems, Huang et al.~\cite{huang1983controllability} laid foundations introducing the concept of analytic controllability but their generic result is a no-go theorem. After further negative results concerning the exact controllability~\cite{ball1982controllability, turinici2012mathematical} different notions of approximate controllability were developed. Several approaches to handle the above mentioned problems have been introduced and used to study controllability of different relevant systems~\cite{adami2005controllability,beauchard2005local,wu2006smooth,nersesyan2009growth, chambrion2009controllability,beauchard2010local, bliss2014quantum,keyl2014controlling, boscain2015approximate}. 
However, for infinite dimensional systems, no general controllability criteria exist and only few universal results are known. Hence it is necessary to develop new tools and analyze the controllability of relevant examples.

One natural candidate is the famous Jaynes-Cummings (JC) model~\cite{jaynes1963comparison} that is of wide experimental and theoretical interest in the fields of quantum optics and solid state physics. Describing a two-level system (qubit) that interacts with one mode of a quantum harmonic oscillator (cavity mode) the JC model forms the basis of cavity quantum electrodynamics and is used to model the above mentioned proposals for a quantum computer. 
The Jaynes-Cummings-Hubbard (JCH) model, also referred to as Jaynes-Cummings lattice, generalizes the JC model in the following sense: it describes an array or lattice of JC models that couple to each other via a Hubbard like interaction, i.e.~hopping of bosonic excitations.
It was introduced in 2006 and 2007 independently in~\cite{angelakis2007photon,hartmann2006strongly,greentree2006quantum} to describe quantum phase transitions. In circuit quantum electrodynamics it is used as a basic model when considering circuits of superconducting qubits~\cite{schmidt2013circuit} and has been experimentally studied (two cavity JC dimer in a superconducting circuit platform~\cite{raftery2014observation}).
However, its controllability has not been discussed yet.

This article studies its controllability in the full infinite dimensional setting using approximations in the strong operator topology. We treat both notions, approximating pure states and unitary operators. 
Our main contribution is the following statement: For the Jaynes-Cummings-Hubbard model, individual control of all the qubits and collective control of the hopping interaction between cavities is sufficient for the model's controllability. Hence we can find control functions to approximately tune the system from a given initial pure state to any desired target state (pure-state controllability); we can also approximate any unitary operator on the system in the strong operator topology by a suitably tuned time evolution operator (operator controllability).

This article is organized as follows: In Section~\ref{sec:Problem_Results}, we define controllability and state our main result, a JCH control system that is controllable. The proof builds on former results by Keyl et al.~\cite{keyl2014controlling} who introduced a strategy to analyze an infinite dimensional system's controllability by exploiting its symmetries. In Section~\ref{sec:symmetry}, we generalize this strategy to systems involving a non-tunable drift Hamiltonian and give a list of sufficient conditions for a system's controllability. 
The remainder of the article considers the JCH model. 
In Section~\ref{sec:spectral_self}, we use spectral analysis for unbounded operators to show that the JCH model satisfies the requirements of a quantum control system and analyze self-adjointness and recurrence of the control and drift Hamiltonians. 
Section~\ref{sec:2cavities} considers the JCH model for two cavities where we check the sufficient conditions for controllability from Section~\ref{sec:symmetry}. Section~\ref{sec:graph} extends this result to the general setting with $M\ge2$ cavities by treating the JCH model as a graph and exploiting a result on the controllability of overlapping systems. We also include a detailed discussion of our results. In Section~\ref{sec:conclusion}, we conclude with an outlook.

\section{Description of the problem and main result}\label{sec:Problem_Results}

\subsection{Quantum control theory in infinite dimensions}

The dynamics of a closed quantum system are described by the Schrödinger equation 
\begin{equation}\label{Schro}
	i\frac{d}{dt} \ket{\psi (t)}= H(t)\, \ket{\psi(t)}
\end{equation} with initial condition $ \ket{\psi(0)}= \ket{\psi_0} $. Here the system's pure state at time $t\in\mathbb{R}_+$ is described by a vector $ \ket{\psi(t)}\in \Hil$, i.e.~an element of a separable, potentially infinite dimensional Hilbert space $\Hil$. In quantum control theory, the system's Hamiltonian $H(t)$ can be written as
\begin{equation}\label{ConHam}
	 H(t)= H_0 + \sum_{k=1}^{d} u_k(t) H_k \; .
\end{equation}The drift Hamiltonian $H_0$ describes the system's free and uncontrolled evolution. The system is coupled to external controls via the control Hamiltonians $H_1$, $\dots,$ $H_d$. The amplitudes of these couplings can be time-dependently tuned what is represented by the scalar control functions $u_1$, $\dots,$ $u_d$.

\begin{dfn}[Quantum control system] \label{dfn:con_sys}
	Consider a quantum system such that its dynamics are described by Eq.~\eqref{Schro} with Hamiltonian of the form of Eq.~\eqref{ConHam}. A quantum control system is characterized by a tuple $(H_0,H_1,\dots,H_d,\mathcal{H}, \mathcal{A})$. The set $\mathcal{A}$ is a subset of the real-valued functions and denotes the set of admissible control functions. \\
	In this contribution, we make the following additional assumptions:
	\begin{itemize}
		\item[(i)] The set of admissible controls is $\hat{\mathcal{A}}:=\big\{(u_1,\dots,u_d)|\, u_1,\dots, u_d:\mathbb{R}_+ \to \mathbb{R},\text{ piecewise}\\ \text{constant, only one of them non-zero at each time } t\big\}$; \label{it:admissible}
		\item[(ii)] $H_0$ and $H_0+ H_k$ are self-adjoint for all $k=1,\dots, d$;
		\item[(iii)] the identity operator $\id$ is among the control Hamiltonians.
	\end{itemize}
\end{dfn}

Note that we do not restrict to finite dimensional Hilbert spaces. This has
several consequences. First, we have to specify a topology when discussing convergence and related issues. This will be the strong operator topology. Second, operators on $\Hil$ might be unbounded. In fact, for many interesting physical systems, operators such as position and momentum or bosonic creation and annihilation operators are unbounded.
The resulting (domain) problems when adding unbounded operators or taking exponentials of them motivate the assumptions in Definition~\ref{dfn:con_sys} as follows: 
$(i)$ and $(ii)$ imply that the time evolution operator $U(t)$ that is given by the relation $\ket{\psi(t)}=U(t)\ket{\psi(0)}$, is well-defined. This can be easily seen in the following argument: Due to~$(i)$, the Hamiltonian is piecewise constant and $U(t)$ is hence given by 
\begin{equation}\label{eq:time_evolution_operator}
	U(t)=e^{ i\Delta t_1(H_0+ \hat{u}_1 H_{k_1}) } \dots e^{ i\Delta t_n (H_0+ \hat{u}_n  H_{k_n})}
\end{equation} where $\Delta t_1+\dots+\Delta t_n=t$, $\Delta t_i\ge 0$, $\hat{u}_i\in\mathbb{R}$ and $k_1,\dots, k_n\in \{1,\dots, d\}$ for some $n\in\mathbb{N}$. As a finite product of exponentials of anti-self-adjoint operators -- by~$(ii)$ -- the time evolution operator~\eqref{eq:time_evolution_operator} is well-defined. 
We impose condition $(iii)$ for technical reasons; to ignore global phases which do not have any physical meaning.

Via tuning the control functions, one steers the system's Hamiltonian $H(t)$ and hence also $U(t)$. We consider the question which states can be reached by such a tuned time evolution operator from a given initial state.

\begin{dfn}[Pure-state controllability] 
	A quantum control system $(H_0,H_1,\dots,H_d,\mathcal{H}, \hat{\mathcal{A}})$ is pure-state controllable if and only if for all $\epsilon>0$ and $\ket{\psi_1},\ket{\psi_2}$ in the complex unit sphere in $\Hil$ there exist admissible controls $(u_1,\,\dots, \, u_d)\in \hat{\mathcal{A}}$ and a finite time $t\ge0$ such that $\|U(t)\ket{\psi_1}-\ket{\psi_2}\|\le \epsilon$.
\end{dfn}

This analysis can be lifted to the level of operators where the Schrödinger equation becomes $ i\frac{d}{dt} U(t)= H(t) U(t)$ with initial condition $U(0)=\id$. Let $\mathcal{U}(\Hil)$ denote the group of unitaries on $\Hil$. We clearly know that all possible time evolution operators are unitary but we can ask if the converse is also true. 

\begin{dfn}[Strong operator controllability] 
	The set of reachable unitaries $\mathcal{R}$ of a quantum control system $(H_0,H_1,\dots,H_d,\mathcal{H}, \hat{\mathcal{A}})$ is defined as the set of unitaries $U$ on $\Hil$ that satisfy:
	for each $ \epsilon>0$, and every finite number of pure states $\ket{\psi_1},\dots,\ket{\psi_f}\in \Hil$, where $f\in \mathbb{N}$, there exist admissible controls $(u_1,\,\dots, \, u_d)\in \hat{\mathcal{A}}$ and a finite time $t\ge0$ such that $ \|\big(U-U(t)\big)\ket{\psi_k} \|\le\epsilon$ for all $k=1,\dots,f $. 
	The quantum control system is strongly operator controllable if and only if $\mathcal{R}= \mathcal{U}(\Hil)$. 
\end{dfn}

In systems satisfying this condition, every unitary operator can be approximated in the strong operator topology by a suitably tuned time evolution operator. Clearly, strong operator controllability implies pure-state controllability. 
Note that in both definitions of controllability, we do not care about global phases since the identity operator is by assumption among the control Hamiltonians.

\subsection{Controllability of the Jaynes-Cummings-Hubbard model}

The Jaynes-Cummings-Hubbard (JCH) model describes a finite number of coupled cavities, each of which contains a quantum harmonic oscillator mode and a two-level system. Although the two-level system may be realized in different physical systems (electron in a quantum dot, superconducting qubit, $\dots$) we will refer to it as an atom. An atomic excitation of zero (one) depicts the atom's ground (excited) state whereas the harmonic oscillator excitations describe photons. For $M$ cavities, the system's Hilbert space is given by the $M$-fold tensor product of the photon Hilbert space $\text{L}^2(\mathbb{R})$ of square integrable functions and the atomic Hilbert space $\mathbb{C}^2$ so that
\begin{equation}
	\Hil_M:=\left(\text{L}^2 (\mathbb{R} ) \otimes \mathbb{C}^2 \right)^{\otimes M}.
\end{equation}
A basis of $\Hil_M$ is hence given by the tensor product of the canonical basis $b_i$ of $\mathbb{C}^2$ and the Hermite polynomials in number basis $m_i$ of $\text{L}^2(\mathbb{R})$. Let
\begin{equation}\label{eq:D}
	\fset :=\text{span} \{ \ket{m_1}\otimes \ket{b_1}\otimes \cdots \otimes \ket{m_M}\otimes \ket{b_M} |b_i\in\{0,1\},\, m_i\in \mathbb{N}_0,\, i=1,\dots, M\}
\end{equation} 
be the finite excitation subspace, that is spanned by vectors with finite photonic and atomic excitation number. 
On this set, we define the Jaynes-Cummings-Hubbard Hamiltonian  
\begin{equation}
	H_{\JCH}= \sum_{j=1}^M \left[ \oP{i} a^*_ia^{\phantom{*}}_i +\oA{i} \sigma^z_i+ \oI{i} (a^*_i \sigma^-_i +a^{\phantom{*}}_i \sigma^+_i) \right] + \sum_{i\sim j\in I} \omega^\tH_{i,j} (a^*_i a^{\phantom{*}}_j  +a^{\phantom{*}}_i a^*_j)\; .
\end{equation}
The operators $a^*$ and $a$ denote the bosonic creation and annihilation operators, respectively, and $\sigma^x$, $\sigma^y$ and $\sigma^z$ are the three Pauli matrices where we set $\sigma^\pm=(\sigma^x\pm i\sigma^y)/2$. Let $\oA{i}$, $\oP{i}$, and $\oI{i}$ be families of real values, where in cavity $i$ $\oA{i}$ denotes the atom's transition energy, $\oP{i}$ the energy cost for creating a photon, and $\oI{i}$ the strength of the atom-photon coupling. 
Lower indices $i$ and $j$ label the cavity that the corresponding operator acts on, so that the notation $\sigma^x_i$ is for example short hand for $=\id^{\otimes 2(i-1)} \otimes (\id \otimes \sigma^x) \otimes \id^{\otimes2(M-i)} $. 
The coupling between different cavities is given by the photon hopping term $\sum_{i\sim j\in I} \omega^\tH_{i,j} (a^*_i a^{\phantom{*}}_j  +a^{\phantom{*}}_i a^*_j)$ where~$\omega^\tH_{i,j}$ denotes the hopping rate between cavities $i$ and $j$. Hence the set $I$ represents the hopping structure in the following sense: a tuple $(i,j)$ is in $I$ if and only if hopping between cavities $i$ and $j$ is possible. We call $I$ the system's hopping interaction graph that is illustrated in Fig.~\ref{fig:JCHmodel}(A).

A single cavity $i$ is described by the JC model with Hamiltonian
\begin{equation}\label{eq:HJC_i}
	H_{\JC,i}= \oP{i} a^*_ia^{\phantom{*}}_i +\oA{i} \sigma^z_i+ \oI{i} (a^*_i \sigma^-_i + a^{\phantom{*}}_i \sigma^+_i )\, .
\end{equation} It contains an atom and a harmonic oscillator mode that interact via $a^*_i \sigma^-_i + a^{\phantom{*}}_i \sigma^+_i$, i.e.,~via exchange of excitations. This models spontaneous emission and absorption of photons. 
The controllability of the JC model has been studied, first restricted to a finite dimensional truncation, where the notions of approximating pure states~\cite{rangan2004control} and unitaries~\cite{yuan2007controllability} were considered. These articles were followed by the study of related models~\cite{ervedoza2009approximate, boscain2015control,paduro2015approximate} and an extension to the full infinite dimensional setting for pure states~\cite{brockett2003controllability,bloch2010finite,boscain2015control} and unitaries~\cite{keyl2014controlling}. In the latter, the authors studied convergence in the strong operator topology, using a symmetry strategy.
In this contribution, we do the same for the JCH model. So let us first define the control system associated with the JCH model.

\begin{figure}[ht]
	\centering
	\def\svgwidth{14cm}
	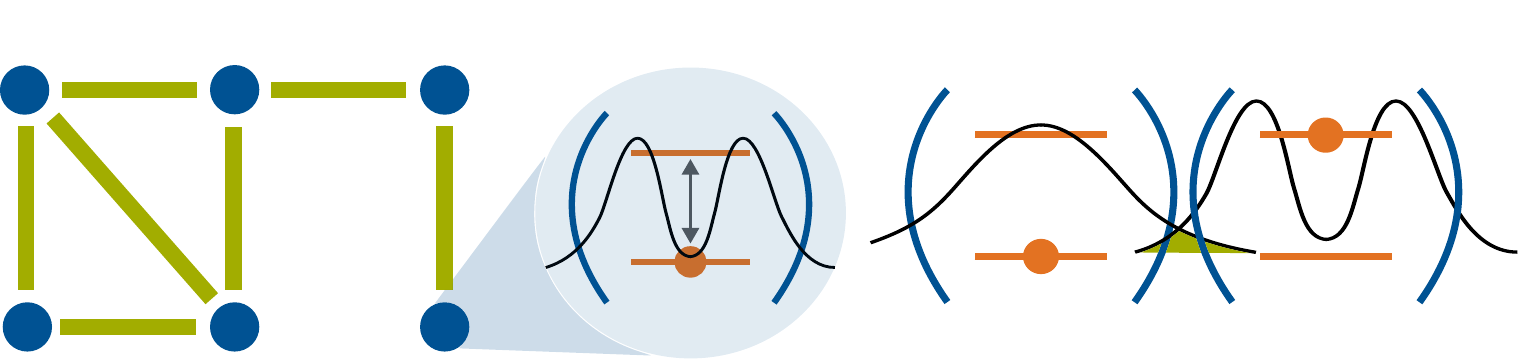
	\caption{(A) Graphical representation of a JCH model with six cavities (blue dots) where photon hopping between two cavities is represented by the (green) edges of a graph, called the hopping interaction graph. (B) One cavity -- depicted by two blue opposing mirrors -- is described by the Jaynes-Cummings model; it contains a two-level system (orange) and one harmonic oscillator mode (black). (C) JCH model for two cavities: Due to overlapping harmonic oscillator modes between neighboring cavities (green filling) photons can hop from one cavity to another.}
	\label{fig:JCHmodel}
\end{figure}

\begin{prop}[JCH Control System] \label{prop:JCH_con_sys}
	Consider the Hilbert space $\Hil_M=\left(\text{L}^2 (\mathbb{R} ) \otimes \mathbb{C}^2 \right)^{\otimes M}$ for some integer $M\ge 2$. On the set $\fset$ from Eq.~\eqref{eq:D}, define 
	\begin{align}\label{eq:drift_JCH}
		\DriftJ &:= \sum_{i=1}^M \left[ \oP{i} a^*_ia^{\phantom{*}}_i +\oA{i} \sigma^z_i+ \oI{i} (a^*_i \sigma^-_i + a^{\phantom{*}}_i \sigma^+_i )\right]
	\end{align} where $\{\oA{i}\}_{i=1,\dots,M},\{\oP{i}\}_{i=1,\dots,M},\{\oI{i}\}_{i=1,\dots,M}$ denote families of real, positive constants. Let $I$ be a subset of $\{(i,j)|1\le i<j\le M\}$ such that its elements $(i,j)\in I$ form edges of a connected graph. 
	Set 
	\begin{equation}\label{eq:def_HHI}
		\HH{I}:= \sum_{j\sim k\in I} (a^*_j a^{\phantom{*}}_k  +a^{\phantom{*}}_j a^*_k)\; .
	\end{equation}
	Then $(\DriftJ,\sigma^z_1,\dots,\sigma^z_M,\sigma^x_1,\dots,\sigma^x_M, \HH{I}, \id ,\Hil_M,\hat{\mathcal{A}})$ forms a quantum control system according to Definition~\ref{dfn:con_sys}.
\end{prop}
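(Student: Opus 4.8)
The plan is to check the three defining properties of Definition~\ref{dfn:con_sys}. Properties $(i)$ and $(iii)$ are immediate from the way the tuple is written down: the admissible set is exactly $\hat{\mathcal{A}}$, and the identity $\id$ is explicitly listed among the control Hamiltonians. Everything therefore reduces to property $(ii)$, namely that $\DriftJ$ and each of the operators $\DriftJ+\sigma^z_i$, $\DriftJ+\sigma^x_i$, $\DriftJ+\HH{I}$ and $\DriftJ+\id$ is self-adjoint. Since these expressions are only given on the finite-excitation subspace $\fset$, the precise statement I would prove is essential self-adjointness on the common, dense core $\fset$, with the operators of the proposition then understood as the respective closures.

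As a first step I would check that every summand occurring above is symmetric on $\fset$ and maps $\fset$ into itself: the number operators $a^*_i a_i$, the Pauli terms $\sigma^z_i,\sigma^x_i$, the identity, the Jaynes--Cummings couplings $a^*_i\sigma^-_i+a_i\sigma^+_i$ and the hopping terms $a^*_j a_k+a_j a^*_k$ are all Hermitian and preserve finiteness of the excitation. Essential self-adjointness would then follow from Nelson's analytic vector theorem once I exhibit $\fset$ as a set of analytic vectors for each operator. The key structural fact I would use is that the total excitation operator $\hat N=\sum_{i=1}^M\big(a^*_i a_i+\sigma^+_i\sigma^-_i\big)$ commutes with $\DriftJ$ and with $\HH{I}$ and has finite-dimensional eigenspaces. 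Hence $\DriftJ$, $\DriftJ+\HH{I}$, $\DriftJ+\sigma^z_i$ and $\DriftJ+\id$ leave invariant the finite-dimensional span of the excitation sectors met by a given $\psi\in\fset$; on that subspace they act as finite matrices, so $\|A^n\psi\|\le\lambda^n\|\psi\|$ and every $\psi\in\fset$ is an entire vector.

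Only the operators containing $\sigma^x_i$ fail to conserve $\hat N$, but $\sigma^x_i$ changes the total excitation by at most one per application, so after $n$ steps one remains in sectors of excitation at most $E+n$; bounding each application by the number-operator growth gives a factorial estimate $\|A^n\psi\|\le C^n (E+n)!/E!$, and $\sum_n\|A^n\psi\|\,t^n/n!$ still has a positive radius of convergence by the ratio test. Thus $\fset$ consists of analytic vectors in every case and Nelson's theorem yields essential self-adjointness, establishing $(ii)$. (Once $\DriftJ$ is self-adjoint one could instead treat $\DriftJ+\sigma^z_i$, $\DriftJ+\sigma^x_i$ and $\DriftJ+\id$ by Kato--Rellich, a bounded symmetric perturbation having relative bound zero.) I expect the genuine obstacle to be the term $\DriftJ+\HH{I}$: the hopping operator $a^*_j a_k$ grows linearly in the photon number, so its relative bound with respect to the free part $\sum_i\oP{i}a^*_i a_i$ need not be below one and a naive Kato--Rellich argument fails. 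The excitation-sector route is what circumvents this, and verifying that the sectors of $\hat N$ are finite-dimensional and that $\fset$ is invariant under each summand is the technical heart of the proof.
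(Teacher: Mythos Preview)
Your proposal is correct and leads to the same conclusion as the paper, but the route you take to establish analyticity of the vectors in $\fset$ is different.

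The paper's proof of Proposition~\ref{prop:JCH_con_sys} simply invokes its Lemma~\ref{lem:selfad}, which shows that $\fset$ is a total set of analytic vectors for each of the individual building blocks $a^*_ia_i$, $a^*_i+a_i$, $ia^*_i-ia_i$, $a^*_ia_j+a_ia^*_j$ by quoting classical facts about Hermite functions (Reed--Simon, Folland). Nelson's theorem then gives essential self-adjointness for linear combinations, and the bounded Pauli terms $\sigma^x_i,\sigma^z_i,\id$ are added afterwards via W\"ust's theorem (bounded self-adjoint perturbation of an essentially self-adjoint operator).

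Your argument instead exploits directly the charge-type symmetry with $\hat N$: for the excitation-conserving operators $\DriftJ$, $\DriftJ+\HH{I}$, $\DriftJ+\sigma^z_i$, $\DriftJ+\id$ the finite-dimensionality of each $\hat N$-sector makes every $\psi\in\fset$ an \emph{entire} vector trivially, and for $\DriftJ+\sigma^x_i$ your factorial bound $\|A^n\psi\|\le C^n(E+n)!/E!\,\|\psi\|$ (coming from the at-most-linear growth of the operator norm in the excitation cutoff) yields a positive radius of convergence. This is self-contained and anticipates exactly the structure the paper later sets up in Lemmas~\ref{lem:N_chargetype} and~\ref{lem:commutation}. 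The paper's route has the small advantage that it establishes analyticity for the \emph{individual} building blocks rather than just the particular sums, which it needs later (e.g.\ in the proof of Proposition~\ref{lem:commutation}, where commutation of unbounded operators is deduced via a common total set of analytic vectors). Your remark that Kato--Rellich also handles the bounded perturbations is exactly what the paper does via W\"ust's theorem.
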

The control system is presented in a proposition since we still have to show that the operators satisfy the assumptions of a quantum control system (Definition~\ref{dfn:con_sys}). The proof is given in Section~\ref{sec:Rec}. Our main result concerns the controllability of this system.

\begin{thm}[Controllability of the Jaynes-Cummings-Hubbard model]\label{thm:contr_JCH}
	Under the same assumptions as in Proposition~\ref{prop:JCH_con_sys} and for $M\ge2$, the control system
	\begin{equation}
		(\DriftJ,\sigma^z_1,\dots,\sigma^z_M,\sigma^x_1,\dots,\sigma^x_M, \HH{I}, \id ,\Hil_M,\hat{\mathcal{A}})
	\end{equation} is strongly operator controllable.
\end{thm}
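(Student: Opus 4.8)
The plan is to deduce the statement from the general sufficient criterion for strong operator controllability established in Section~\ref{sec:symmetry}, which is designed precisely for infinite dimensional systems carrying a symmetry together with a non-tunable drift. By Proposition~\ref{prop:JCH_con_sys} all Hamiltonians are self-adjoint on the common invariant core $\fset$, and the spectral analysis of Section~\ref{sec:spectral_self} supplies the recurrence of $\DriftJ$ and of each $\DriftJ+H_k$, which is what allows one to effectively cancel the drift in the strong operator topology. The symmetry I would exploit is the total excitation number
\[
	N_{\mathrm{tot}}=\sum_{i=1}^{M}\Big(a_i^*a_i^{\phantom{*}}+\tfrac12(\sigma_i^z+\id)\Big),
\]
which commutes with $\DriftJ$, with every $\sigma_i^z$ and with $\HH{I}$, and whose eigenspaces $\Hil_M^{(N)}\subset\Hil_M$ are all finite dimensional. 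The block-diagonal controls $\DriftJ,\sigma_i^z,\HH{I}$ preserve the orthogonal decomposition $\Hil_M=\bigoplus_N\Hil_M^{(N)}$, whereas each $\sigma_i^x$ flips a single qubit and maps $\Hil_M^{(N)}$ into $\Hil_M^{(N\pm1)}$, thereby breaking the symmetry.

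For the base case $M=2$ (Section~\ref{sec:2cavities}) I would verify the hypotheses of the Section~\ref{sec:symmetry} criterion in two steps. First, the block-diagonal step: show that $\Lie{i\DriftJ,\,i\sigma_1^z,\,i\sigma_2^z,\,i\HH{I}}$, restricted to each finite dimensional block, already equals the full special unitary algebra $\mathfrak{su}\big(\Hil_2^{(N)}\big)$, so that the symmetric part generates the infinite dimensional block-diagonal algebra $\bigoplus_N\mathfrak{su}\big(\Hil_2^{(N)}\big)$. Concretely I would use the Jaynes--Cummings couplings $a_i^*\sigma_i^-+a_i^{\phantom{*}}\sigma_i^+$ inside $\DriftJ$ and the hopping $\HH{I}$ to produce the off-diagonal generators within a block, the two $\sigma_i^z$ to produce the Cartan directions, and then close the Lie bracket while checking irreducibility on $\Hil_2^{(N)}$. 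Second, the symmetry-breaking step: show that $\sigma_1^x$ (equivalently $\sigma_2^x$) connects adjacent blocks $\Hil_2^{(N)}\leftrightarrow\Hil_2^{(N+1)}$ so that, together with the block algebras, the generated algebra acts irreducibly on all of $\Hil_2$; this is exactly the ingredient that promotes block-wise control to control of the whole infinite dimensional system.

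For the inductive step to arbitrary $M\ge2$ (Section~\ref{sec:graph}) I would argue directly on the hopping interaction graph $I$. Since $I$ is connected, order the cavities so that each newly added cavity is joined by an edge to the already-assembled connected part; attaching cavity $M$ to some previous cavity $j$ yields a two-cavity JCH subsystem on the edge $(j,M)$, which is controllable by the base case, overlapping the controllable $(M-1)$-cavity subsystem in the shared cavity $j$. Applying the controllability result for overlapping systems then glues the two controllable subsystems into a controllable system on $\Hil_M$, the common hopping structure at cavity $j$ providing the required overlap; connectedness of $I$ is precisely what makes this gluing available at every stage.

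The main obstacle I expect is the finite dimensional block computation in the base case: proving that $\Lie{i\DriftJ,\,i\sigma_1^z,\,i\sigma_2^z,\,i\HH{I}}$ restricts to the \emph{full} $\mathfrak{su}\big(\Hil_2^{(N)}\big)$ \emph{uniformly in} $N$. The block dimensions grow with $N$, and the relevant matrix entries carry the $\sqrt{k}$-type bosonic factors together with the parameters $\oP{i},\oA{i},\oI{i},\omega^\tH_{i,j}$, so one must rule out any spurious invariant subspace or residual commuting symmetry (for instance a cavity-exchange symmetry when the parameters coincide, which the individual $\sigma_i^z$ are meant to break) in every block at once. Intertwined with this is the genuinely infinite dimensional subtlety of reassembling the block-wise special unitary control into strong operator controllability on $\Hil_M$ while respecting the unbounded hopping and drift operators -- which is exactly the role played by the recurrence property and by the general machinery of Section~\ref{sec:symmetry}.
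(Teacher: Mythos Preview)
Your outline matches the paper's strategy: base case $M=2$ via the symmetry criterion of Section~\ref{sec:symmetry} with the total excitation number, then induction on $M$ via overlapping subsystems along the hopping graph. Two points, however, are glossed over and each requires real work.

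In the inductive step you say that the newly attached edge $(j,M)$ ``yields a two-cavity JCH subsystem \dots controllable by the base case,'' overlapping a controllable $(M{-}1)$-cavity subsystem. But the global system only offers the \emph{collective} drift $\DriftJ=\sum_k H_{\JC,k}$ and the \emph{collective} hopping $\HH{I}$ as generators; the two overlapping subsystems need their own drifts $\DriftJ',\DriftJ''$ and hoppings $H_\tH',H_\tH''$ before Proposition~\ref{prop:overlapping_Keyl} can be applied. The paper spends Lemmas~\ref{lem:suff_hop_JCH} and~\ref{lem:suff_drift_JCH} showing that these pieces can be extracted from the collective controls, essentially via $(a_k^*a_k)^{[K]}\in\LieC{\id^{[K]},(\sigma_k^z)^{[K]}}$ and suitable commutators. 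Without this step your gluing argument does not go through.

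In the base case, showing that the symmetric Lie algebra ``restricted to each finite dimensional block equals the full $\mathfrak{su}(\Hil_2^{(N)})$'' is not quite the right target: blockwise fullness does not by itself give \emph{independent} action on different blocks. Proposition~\ref{prop:suff_trunc_compl} requires $\bigoplus_{n\le K}\mathfrak{sl}(d_n;\mathbb{C})\subseteq\lCK$ for every $K$, and the paper needs a separate recursive argument (Proposition~\ref{prop_algebra_K}) beyond the single-block computation (Proposition~\ref{prop_ind_algebra_n}) to decouple the blocks. Likewise, ``$\sigma_1^x$ connects adjacent blocks'' is weaker than what is actually needed: the complementarity condition (Definition~\ref{dfn:compl}) imposes a specific partial-isometry structure on $P_-^{(n)}\sigma_L^x P_+^{(n+1)}$ relative to a decomposition $\Hil=\Hil_-\oplus\Hil_0\oplus\Hil_+$, which is what Proposition~\ref{prop:complementary} verifies.
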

The system's free and unsteered evolution is governed by~\eqref{eq:drift_JCH}, i.e.~the JC Hamiltonians of the cavities. Control is exerted on an individual atom $i$ via $\sigma^z_i$ and $\sigma^x_i$. The operator $\HH{I}$ represents collective control of the overall photon hopping rate (we note that assuming a connected hopping interaction graph $I$ is natural since we would consider unconnected graphs as two different separated physical systems). Theorem~\ref{thm:contr_JCH} states that this is sufficient to approximate any unitary operator on the joint atom-photon Hilbert space $\Hil_M$ by a suitably tuned time evolution operator. Obviously, this implies pure-state controllability, i.e.~that any pure states can be approximately interconverted.

Let us briefly outline the proof idea for Theorem~\ref{thm:contr_JCH}. It can be summarized in the following three steps:
\begin{itemize}
	\item Step 1 in Section~\ref{sec:Rec}: First, we check that the system from Proposition~\ref{prop:JCH_con_sys} satisfies the conditions of a quantum control system, i.e., that the operators $\DriftJ$, $\DriftJ+\sigma^z_1$, $\dots$, $\DriftJ+\HH{I}$ and $\DriftJ+\id$ are self-adjoint. We do this by constructing a joint dense domain of essential self-adjointness for all these operators. We also use spectral analysis to show that the system is recurrent, i.e., that the set of reachable unitaries forms a group called the dynamical group.
	\item Step 2 in Section~\ref{sec:2cavities}: We prove strong operator controllability for the JCH model for only two cavities in Theorem~\ref{thm:control_2}. The proof relies on a controllability criterion (Theorem~\ref{thm:sym_strategy} in section 3) which is a generalization of a previous result from \cite{keyl2014controlling}. The criterion exploits a system's symmetry which corresponds for the JCH model to a conservation of the total excitation number. This means that all but one control Hamiltonian conserve the sum of photonic and atomic excitations. For the symmetry obeying control Hamiltonians, we show an inclusion on an increasing sequence of finite dimensional Lie algebras. This is sufficient to uphold strong approximations on the level of the dynamical group. We add the control Hamiltonian $\sigma^x_1$ that breaks the symmetry and achieve strong operator controllability.
	\item Step 3 in Section~\ref{sec:graph}: We generalize the result to an arbitrary number $M\ge2$ of cavities. In order to do so, we interpret the hopping interaction structure of the set $I$ as a graph and use a simple argument on the controllability of overlapping quantum systems.
\end{itemize}

\section{Sufficient conditions for controllability in infinite dimensions}\label{sec:symmetry}
In this section we give a list of sufficient conditions for an infinite dimensional system to be strongly operator controllable.
Let us start with a short comparison to finite dimensional systems where the anti-self-adjoint~drift and control Hamiltonians~$iH_0,\dots, iH_d$ generate (via mutual commutators) a Lie algebra, the so-called system's algebra. The Lie algebra rank condition~\cite{sussmann1972controllability, jurdjevic1972control,brockett1972system,brockett1973lie}, a necessary and sufficient for controllability, states: The system is operator controllable if and only if the system's algebra is isomorphic to the unitary Lie algebra of degree $\dim \Hil$. This condition relates a statement on the level of Lie algebras to an equality between a corresponding Lie group and the set of reachable unitaries. However, the anti-self-adjoint operators on an infinite-dimensional Hilbert space do not even form a Lie algebra. That is why there is no such condition in infinite dimensions and we cannot access controllability questions on the Lie algebra level. Instead, we have to consider the set $\mathcal{R}$. It will be an aim to identify under which conditions it forms a group. But note in general, even if $\mathcal{R}=\mathcal{U}(\Hil)$, it is not a Lie group since $\mathcal{U}(\Hil)$ -- containing all unitary operators on $\Hil$ -- is a topological but not a Lie group with respect to the strong operator topology. 

In this section we make use of a symmetry that all but one control Hamiltonian admit. We establish conditions on finite subsystems, similar to the Lie algebra rank condition. This strategy was introduced by Keyl et al.~in~\cite{keyl2014controlling}. But in contrast to the former, we consider quantum control systems with a non-zero drift Hamiltonian. So we adapt several definitions, introduce the concept of approximate recurrence, and finally generalize the list of sufficient controllability criteria from~\cite{keyl2014controlling} to systems with a non-zero drift. 

\subsection{Dynamical group and recurrence}

\begin{dfn}[Dynamical group]\label{dfn:dyn_group}
	 Let $\Hil$ be a Hilbert space and $A_1,\dots, A_d$ self-adjoint operators. Then the smallest strongly closed subgroup of $\mathcal{U}(\Hil)$ (closed as a subset of $\mathcal{U}(\Hil)$) that contains all $\exp(itA_k)$ for $k\in \{1,\dots,d\}$ and $t\in \mathbb{R}$ is called the dynamical group generated by $A_1,\dots,A_d$. We denote it by $\G(A_1,\dots, A_d)$.
\end{dfn}

\begin{dfn}[Approximate recurrence]\label{dfn:rec}
	A quantum control system $(H_0,H_1,\dots,H_d,\mathcal{H}, \hat{\mathcal{A}})$ satisfies the recurrence condition if and only if for all $H\in \{H_0,H_0+H_1,\dots,H_0+H_d\}$, the following holds: for all $t_-\le0$, all $\epsilon>0$, and all $\ket{\psi_1},\dots,\ket{\psi_f}\in \Hil$ there exists a $t_+\ge0$ such that 
	\begin{equation}
		\big\|\left(e^{it_-H}-e^{it_+H}\right) \ket{\psi_l}\big\|\le \epsilon\quad \text{ for all } \quad l=1,\dots,f.
	\end{equation}
\end{dfn}
The recurrence property is an approximate one. It states that every strong neighborhood of an exponential of the Hamiltonian for some negative time contains another one for some positive time. It allows for studying a system's controllability on the level of a group.

\begin{prop}[Controllability of recurrent systems]\label{prop:suff_cond}
	Let $\mathcal{R}$ be the reachable set of a quantum control system $(H_0,H_1,\dots,H_d,\mathcal{H}, \hat{\mathcal{A}})$ and for all $j=1,\dots,d$, let the operator $H_0+H_j$ be essentially self-adjoint on $\dom(H_0)\cap \dom(H_j)$.
	Then
	\begin{itemize}
		\item $\mathcal{R}=\G( H_0,H_0+H_1,\dots,H_0+H_d )$ and
		\item the system is strongly operator controllable if and only if $ \mathcal{G} (H_0, H_1, \dots,H_d)=\mathcal{U}(\Hil)$.
	\end{itemize} 
\end{prop}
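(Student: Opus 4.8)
The plan is to prove the two bullets in sequence, obtaining the controllability criterion as an immediate consequence of the identity $\mathcal{R}=\G(H_0,H_0+H_1,\dots,H_0+H_d)$. Throughout I abbreviate $A_0:=H_0$ and $A_k:=H_0+H_k$ for $k=1,\dots,d$; by assumption~(ii) each $A_k$ is self-adjoint, and by hypothesis essentially self-adjoint on the common core $\dom(H_0)\cap\dom(H_k)$. I will also use that, by definition, $\mathcal{R}$ is exactly the closure inside $\mathcal{U}(\Hil)$ (in the strong operator topology) of the semigroup $\mathcal{R}_0$ of all admissible time-evolution operators $U(t)$ from Eq.~\eqref{eq:time_evolution_operator}, together with the fact that on the unitary group the strong and strong-$*$ topologies coincide, so that a symmetric, strongly closed subsemigroup is automatically a group.

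For the inclusion $\mathcal{R}\subseteq\G$ I would show that every elementary factor $e^{i\Delta t(H_0+\hat u H_k)}$ appearing in $U(t)$ lies in $\G$. Writing $H_0+\hat u H_k=(1-\hat u)A_0+\hat u A_k$, I would invoke the Trotter--Kato product formula on the common core $\dom(H_0)\cap\dom(H_k)$ to realise $e^{i\Delta t[(1-\hat u)A_0+\hat u A_k]}$ as a strong limit of products $[\,e^{i\Delta t(1-\hat u)A_0/n}\,e^{i\Delta t\hat u A_k/n}\,]^{n}$. Each factor in each such product is the exponential of a real multiple of some $A_j$ and hence lies in $\G$; since $\G$ is a group it contains the products, and since $\G$ is strongly closed it contains the limit. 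Thus $\mathcal{R}_0\subseteq\G$, and taking strong closures inside $\mathcal{U}(\Hil)$ gives $\mathcal{R}\subseteq\G$.

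For the reverse inclusion $\G\subseteq\mathcal{R}$ I would first note that $e^{itA_k}\in\mathcal{R}$ for all $t\ge 0$, being a one-segment evolution. The recurrence condition (Definition~\ref{dfn:rec}) then upgrades this to all $t\in\mathbb{R}$: for $t<0$ the operator $e^{itA_k}$ lies in the strong closure of $\{e^{isA_k}:s\ge 0\}$ and hence in the strongly closed set $\mathcal{R}$. The set $\{e^{itA_k}:t\in\mathbb{R},\,k=0,\dots,d\}$ is therefore contained in $\mathcal{R}$ and is symmetric, so the subsemigroup $G_0$ it generates is a group, $G_0\subseteq\mathcal{R}$, and its strong closure (inside $\mathcal{U}(\Hil)$) is a strongly closed subgroup containing all generators of $\G$. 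By minimality of $\G$ this yields $\G\subseteq\overline{G_0}\subseteq\mathcal{R}$. Combining the two inclusions proves the first bullet, and the second follows at once: strong operator controllability means $\mathcal{R}=\mathcal{U}(\Hil)$, which by the first bullet is equivalent to $\G(H_0,H_0+H_1,\dots,H_0+H_d)=\mathcal{U}(\Hil)$.

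I expect the main obstacle to be the rigorous justification of the Trotter step for unbounded operators: one must ensure that the relevant linear combinations $H_0+\hat u H_k$ are essentially self-adjoint on the common domain, so that Trotter--Kato applies with convergence in the strong operator topology. This is precisely the role of the essential self-adjointness hypothesis, together with the joint core constructed in Section~\ref{sec:spectral_self}. A secondary delicate point is the passage from the forward-time semigroup $\mathcal{R}_0$ to a genuine group: this is exactly where recurrence is indispensable, and where one uses that strong convergence of unitaries to a unitary forces strong convergence of their adjoints, so that strong closures of symmetric semigroups of unitaries remain groups.
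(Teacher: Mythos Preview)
Your argument for the first bullet is essentially correct and close in spirit to the paper's, but you have misread the second bullet and consequently omitted its entire content. You conclude with ``strong operator controllability \dots\ is equivalent to $\G(H_0,H_0+H_1,\dots,H_0+H_d)=\mathcal{U}(\Hil)$''. That is not what the proposition asserts: the second bullet is stated for the group $\G(H_0,H_1,\dots,H_d)$, generated by the \emph{unshifted} operators $H_k$ rather than by $H_0+H_k$. Proving it therefore requires the additional identity
\[
\G(H_0,H_0+H_1,\dots,H_0+H_d)\;=\;\G(H_0,H_1,\dots,H_d),
\]
and this is exactly where the essential self-adjointness hypothesis and Trotter's formula are meant to be used: each generator of one group is the closure of a sum of two generators of the other, so Trotter realises its exponential as a strong limit of products already in the other group. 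In the paper this step is the main computation; in your write-up it is missing entirely.

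A secondary point: your use of Trotter in the direction $\mathcal{R}\subseteq\G$ to handle arbitrary control amplitudes $\hat u$ requires essential self-adjointness of $H_0+\hat u H_k=(1-\hat u)A_0+\hat u A_k$ on a common core. The stated hypothesis only gives this for $\hat u=1$; appealing to Section~\ref{sec:spectral_self} is fine for the JCH application but not for the abstract proposition as formulated. The paper sidesteps this by treating $\mathcal{R}$ directly as the strong closure of forward-time products of $e^{itH}$ with $H\in\{H_0,H_0+H_1,\dots,H_0+H_d\}$, so no Trotter step is needed for the first bullet; Trotter is reserved for the passage between the two dynamical groups above.
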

\begin{proof}
	Let $H\in\{H_0, H_0+H_1,\dots, H_0+H_d\}$. The set $\mathcal{R}$ is defined as the strong closure of finite products of $\exp(i t H)$ where $t\ge 0$. Recurrence implies that if $\exp(itH)\in \mathcal{R}$ for some $t\ge 0$, then also $\exp(-itH)\in \mathcal{R}$. Hence $\mathcal{R}$ is the smallest strongly closed subgroup of $\mathcal{U}(\Hil)$ containing all $\exp(itH)$ for $t\in \mathbb{R}$. This  corresponds to the definition of the dynamical group generated by $ H_0,H_0+H_1,\dots,H_0+H_d$ which we denote by $\G_1$ and define $\G_2:= \G (H_0,H_1,\dots,H_d)$. 
	Assume that an operator $K$ is a generator of (w.l.o.g.) $\G_2$. Then one finds two generators of $\G_1$, that are denoted by $K_1$ and $K_2$ and satisfy $\overline{K_1+K_2}=K$. The notation $\overline{K_1+K_2}$ represents the closure of the operator ${K_1+K_2}$.  Since $K_1$ and $K_2$ are self-adjoint operators and $K_1+K_2$ is essentially self-adjoint on the intersection of the operators' domains, Trotter's formula (c.f~Theorem VIII.31 from~\cite{RESI1}) gives
	\begin{equation}
		\exp[it(\overline{K_1+K_2})] ={\text{s--} \lim}_{n\to \infty} \big[\exp(itK_1/n)\exp(itK_2/n)\big]^n .
	\end{equation}
	By definition, the unitaries $\exp(itK_1/n)$ and $\exp(itK_2/n)$ are elements of $\G_1$. Hence for every fixed $n$, the operator $ \big[\exp(itK_1/n)\exp(itK_2/n)\big]^n$ also is. 
	Since a dynamical group is strongly closed by definition, we conclude that $\exp[it(\overline{K_1+K_2})]$ is in $\G_1$. This applies for all generators of $\G_2$ so that we find $\G_2\subseteq \G_1$. 
	Without loss of generality, we can switch the roles of $\G_1$ and $\G_2$ to obtain the desired equality $\G_1=\G_2$. 
	To prove the second statement, remember the definition of strongly operator controllability: a system is strongly operator controllable if and only if $\mathcal{R}=\mathcal{U}(\Hil)$. Combining the first statement of this proposition and $\G( H_0,H_0+H_1,\dots,H_0+H_d )=\mathcal{R}$ yields that $\mathcal{G} (H_0, H_1, \dots,H_d)=\mathcal{U}(\Hil)$ is equivalent to strong operator controllability.
\end{proof}

This proposition can be informally stated as follows: when analyzing the controllability of a recurrent system, one can treat the drift as if it was an additional control Hamiltonian.
A more detailed analysis of recurrence (times) and its relation to controllability of infinite dimensional systems can be found in~\cite{bliss2014quantum} where the authors use the quantum recurrence theorem~\cite{bocchieri1957quantum,wallace2015recurrence}.

Let us add a brief remark why checking recurrence is necessary here, in contrast to the analysis in~\cite{keyl2014controlling}: in the latter there is no drift. Then allowing for negative control function values (-1) is sufficient to ensure that the set of reachable time evolution unitaries forms a group. When adding a drift, this is no longer sufficient as the Hamiltonian would always be of the form $H_0\pm H_i$ but the drift part $H_0$ would never get negative prefactors.

\subsection{Symmetry strategy}

Since this section's central theorem is a generalization of Theorems~2.3 and 2.4 from~\cite{keyl2014controlling}, we start by stating the relevant definitions and results from that article. We motivate and explain them but refer to the original article for proofs. 
The symmetry strategy builds on a $\text{U}(1)$ symmetry that all but one of the control Hamiltonian admit. 
\begin{dfn}[Positive, finitely degenerate charge type operator]\label{def:charge}
	 Let $N$ be a self-adjoint operator on $\Hil$ and $\pi$ a strongly continuous unitary representation of U(1) on $\Hil$ such that $\text{U}(1)\ni z =\exp(i\alpha) \mapsto \pi(z)=\exp(i\alpha N)\in \mathcal{U}(\Hil)$.  Denote by $n$ the eigenvalues of $N$ and by $P_n$ the corresponding eigenprojections. \\
	 Such an operator $N$ is positive, finitely degenerate charge type, if $n\in \mathbb{N}_0$ and all $n$ are of finite multiplicity, i.e.,~if $\dim P_n\Hil< \infty$. 
\end{dfn}
For simplicity, we will call such an operator $N$ ``charge type''. This operator defines a symmetry in the sense that we consider sets of operators commuting with $N$. Define the sets
\begin{align}
	\U (N):&=\{U\in \mathcal{U}(\Hil) \,|\, U \text{ commutes with } N \}\label{eq:UN}\\
	\tilde{\mathfrak{u} } (N):&=\{iH\in \mathcal{L}(\Hil)\,|\, H \text{ self-adjoint and commutes with }N \}\\
	\SU (N):&=\{U\in \U(N) | \det (P_nU P_n)=1 \text{ for all } n\in \mathbb{N}_0 \} \label{eq:SUN} \\
	\su(N):&=\{iH\in \tilde{\mathfrak{u}}(N)\,|\, \operatorname{tr} (P_n\,iH P_n)=0 \text{ for all } n\in \mathbb{N}_0 \}\ .
\end{align} 
It is important to note that $N$ is an operator and not a natural number and hence $\U (N)$ should not be confused with a unitary group of some degree $N$; it is the set of all unitaries on an infinite dimensional Hilbert space $\Hil$ that commute with the operator $N$. 
Also pay attention to another subtlety of the notation: The set $\su(N)$ is not the special subalgebra of $\tilde{\mathfrak{u}} (N)$ as well as $\SU (N)$ is not the special subgroup of $\U (N)$.

Theorem~2.1 of~\cite{keyl2014controlling} shows that $\tilde{\mathfrak{u}} (N)$ forms a Lie algebra with the commutator as its Lie bracket and that it is mapped by the exponential map onto the strongly closed Lie group $\U (N) $. Analogously, $\su (N)$ forms a Lie subalgebra of $\tilde{\mathfrak{u}} (N)$ and mapped by $\exp$ to the Lie group $\SU (N)$. The same theorem also states that if some control Hamiltonians commute with $N$ and satisfy the recurrence property, then they generate a Lie algebra which is mapped by the exponential map to the dynamical group generated by these Hamiltonians.

Let us shortly argue why symmetry prohibits controllability: If the drift and all control Hamiltonians admitted this symmetry, then all reachable unitaries in $\mathcal{R}$ also would. We could at most hope that $\mathcal{R}$ would coincide with $\U(N)$. The system would obviously not be strongly operator controllable since $\U(N)$ is a proper subgroup of $\mathcal{U}(\Hil)$. The relation between dynamical conservation laws and controllability restrictions is discussed in~\cite{turinici2001quantum} for finite dimensional systems. As described above, in infinite dimensions, commutation with a charge type operator also prohibits controllability. In order to reach all unitaries, we would have to add at least one control Hamiltonian that breaks this symmetry. Of course, in order to achieve controllability, this operator has to obey additional properties, motivating the following definition. 

\begin{dfn}[Complementarity]\label{dfn:compl}
Let $N$ be a charge type operator according to Definition~\ref{def:charge}. A self-adjoint operator $H$ with domain $\dom (H)\subseteq \Hil$ is called \textbf{complementary} to $N$ if there exists a decomposition of the Hilbert space $\mathcal{H} = \mathcal{H}_- \oplus \mathcal{H}_0 \oplus \mathcal{H}_+$ such that the following holds:
\begin{itemize}
	\item[(i)] Let $\mathcal{H}_\alpha = E_\alpha \mathcal{H}$ where $\alpha \in \{ +,0,-\}$ and $E_\alpha$ denotes the projections onto the subspaces $\mathcal{H}_\alpha$ and set $P_\pm^{(n)}:=E_\pm P_n$. They satisfy $\left[ E_\alpha,P_n\right]=0$ for all $n\in\mathbb{N}_0$;
	\item[(ii)] We have $\mathcal{H}_0 \subseteq \text{dom}(N)$ and $H \ket{\psi}=0$ for all $\ket{\psi} \in \mathcal{H}_0$;
	\item[(iii)]  We have $\text{dom}(N) \subseteq \text{dom}(H)$ and $H P_+^{(n+1)}\ket{\psi}= P_-^{(n)} H\ket{\psi}$ for all $n>1$ and that $P_-^{(n)} H P_+^{(n+1)}$ is a partial isometry with $P_+^{(n+1)}$ as its source and $P_-^{(n)}$ as its target projection.
	\item[(iv)] Consider the set $\SU(N)$ from Eq.~\eqref{eq:SUN} and the subspace $\mathcal{H}_{\phantom{-}}^{(0)}\oplus\mathcal{H}_+^{(1)}:=\left(P_{\phantom{-}}^{(0)}\oplus P_+^{(1)} \right) \mathcal{H}$ for the projection $P_{\phantom{-}}^{(0)}\oplus P_+^{(1)}$. The group generated by $\exp{(itH)}$ where $t\in \mathbb{R}$ and those $U\in \SU (N)$ which commute with $P_{\phantom{-}}^{(0)}\oplus P_+^{(1)}$ acts transitively on the space of one-dimensional projections in $\mathcal{H}^{(0)}\oplus\mathcal{H}_+^{(1)}$.
\end{itemize}
\end{dfn}

We now combine the results from~\cite{keyl2014controlling} with Proposition~\ref{prop:suff_cond} from the previous subsection and obtain a result that goes beyond prior work. 
We introduce a list of sufficient conditions for an infinite dimensional quantum control system with a non-zero drift to be controllable.

\begin{thm}[Sufficient conditions for controllability]\label{thm:sym_strategy}
	Let $(H_0,H_1,\dots,H_d=\id,\mathcal{H}, \hat{\mathcal{A}})$ be a quantum control system and $N$ a charge type operator. Assume that the following holds:
	 \begin{itemize}
	 	\item[(i)] Self-adjointness: For every $k\in \{1,\dots, d\}$ the operator $H_0+H_k$ is essentially self-adjoint on $\dom(H_0)\cap \dom(H_k)$.
	 	\item[(ii)] Recurrence: The control system satisfies the recurrence property.
	 	\item[(iii)] Symmetry: $H_0, H_1,\dots, H_{d-2}$ commute with $N$.
	 	\item[(iv)] Adapted Lie algebra rank condition: The dynamical group generated by $H_0, H_1,\dots, H_{d-2}, \id$ contains the group $\SU(N)$ from Eq.~\eqref{eq:SUN}. 
	 	\item[(v)] Breaking the Symmetry: $H_{d-1}$ is complementary to $N$. 
	 \end{itemize}
	 Then the control system is pure-state controllable. If additionally 
	 \begin{itemize}
	 	\item[(vi)] $\dim P_n\Hil >2$ for at least one $n\in \mathbb{N}_0$
	 \end{itemize}
	 holds, then it is also strongly operator controllable.
\end{thm}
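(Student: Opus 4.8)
The plan is to reduce the statement, by means of Proposition~\ref{prop:suff_cond}, to a single dynamical-group identity and then to invoke the driftless symmetry criteria of~\cite{keyl2014controlling} (their Theorems~2.3 and~2.4). First I would observe that assumptions~(i) and~(ii) are exactly the hypotheses of Proposition~\ref{prop:suff_cond}: essential self-adjointness of each $H_0+H_k$ on $\dom(H_0)\cap\dom(H_k)$ and the approximate recurrence property. Applying it gives
\begin{equation}
	\mathcal{R}=\G(H_0,H_0+H_1,\dots,H_0+H_d)=\G(H_0,H_1,\dots,H_d),
\end{equation}
and shows that strong operator controllability is equivalent to $\G(H_0,H_1,\dots,H_d)=\mathcal{U}(\Hil)$. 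In other words, recurrence allows me to treat the drift $H_0$ as if it were one additional control Hamiltonian, so that from here on I may analyze the driftless collection of generators $H_0,H_1,\dots,H_{d-1},\id$.

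Second, I would match this effective driftless system to the framework of~\cite{keyl2014controlling}. By assumption~(iii) the generators $H_0,H_1,\dots,H_{d-2}$ commute with the charge type operator $N$ of Definition~\ref{def:charge}, so together with $\id$ they are symmetry-respecting and the dynamical group they generate lies inside $\U(N)$; assumption~(iv) states precisely that this dynamical group already contains $\SU(N)$ from Eq.~\eqref{eq:SUN}. The one remaining generator $H_{d-1}$ is, by assumption~(v), complementary to $N$ in the sense of Definition~\ref{dfn:compl}. This is exactly the input demanded by the symmetry strategy: a dynamical group containing $\SU(N)$ furnished by the symmetry-obeying controls, plus a single operator breaking the $\mathrm{U}(1)$ symmetry. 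Their Theorem~2.3 then yields pure-state controllability of the driftless system, and since $\mathcal{R}$ coincides with this driftless reachable set, pure-state controllability of the original system follows. Adjoining hypothesis~(vi), that $\dim P_n\Hil>2$ for at least one $n$, their Theorem~2.4 upgrades the conclusion to $\G(H_0,H_1,\dots,H_d)=\mathcal{U}(\Hil)$, which by the equivalence in Proposition~\ref{prop:suff_cond} is strong operator controllability.

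The crux of the argument — and the point where it genuinely extends~\cite{keyl2014controlling} — is the first step. In the driftless analysis one may feed negative control amplitudes, so the reachable set is a group automatically; here the drift $H_0$ can never acquire a negative prefactor and the available Hamiltonians are only of the form $H_0+H_k$, so that $\mathcal{R}$ being the full dynamical group is not automatic and rests entirely on approximate recurrence. I therefore expect the delicate points to be the verification that recurrence~(ii) holds simultaneously for every $H\in\{H_0,H_0+H_1,\dots,H_0+H_d\}$ on the relevant finite families of vectors, and that the self-adjointness domains in~(i) align so that the Trotter argument underlying Proposition~\ref{prop:suff_cond} applies. Once this recurrence-based reduction is secured, the re-indexing into symmetry-respecting and symmetry-breaking generators and the appeal to the criteria of~\cite{keyl2014controlling} are routine.
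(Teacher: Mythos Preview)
Your proposal is correct and follows essentially the same route as the paper: apply Proposition~\ref{prop:suff_cond} (using assumptions~(i) and~(ii)) to identify $\mathcal{R}$ with the dynamical group $\G(H_0,H_1,\dots,H_d)$, thereby treating the drift as an additional control; then feed assumptions~(iii)--(v) (and~(vi)) into Theorems~2.3 and~2.4 of~\cite{keyl2014controlling} to conclude pure-state and strong operator controllability. Your identification of the recurrence step as the genuine novelty over the driftless setting is also exactly the point the paper emphasizes.
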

The only difference to Theorems~2.3 and 2.4 from~\cite{keyl2014controlling} is that here we have a non-zero drift term and the first two additional assumptions. 
\begin{proof}
	The set of reachable unitaries $\mathcal{R}$ is due to recurrence and Proposition~\ref{prop:suff_cond} equal to the dynamical group generated by $\id, H_0, H_1,\dots, H_{d}$. From this point on, we can treat the control system as if $H_0$ was one of the control Hamiltonians and follow the proof of Theorems 2.3 and 2.4 from \cite{keyl2014controlling} until we get $\mathcal{G}(H_0,\id,H_1,\dots,H_d)=\mathcal{U}(\Hil)$. With Proposition~\ref{prop:suff_cond} this gives that the system is strongly operator controllable.
\end{proof}

Note that the `adapted Lie algebra rank condition' is stated here on the level of Lie groups. But we will give an equivalent condition on the algebra level motivating the name `Lie algebra rank condition': Proposition~\ref{prop:suff_trunc_compl} states that it is sufficient to show an inclusion of finite dimensional Lie subalgebras of $\su (N)$. Its proof mainly relies on Theorem~2.1 from~\cite{keyl2014controlling} (stating that the group $\SU(N)$ as well as the dynamical group of operators commuting with $N$ are generated by the corresponding Lie algebras).
But let us first introduce some additional notation.
Given a charge type operator $N$ with eigenprojections $P_n$, we can decompose $\Hil$ into 
\begin{equation}
	\Hil=\bigoplus_{n=0}^\infty  P_n\Hil \ . 
\end{equation}
Due to $\dim P_n\Hil<\infty$ we can define $d_n:=\dim P_n\Hil$ and the  subspaces
\begin{equation}\label{eq:notation_decomp_hil}
	\Hil^{(n)}:=P_n\Hil\ , \qquad \Hil^{[K]}:= \bigoplus_{n=0}^K \Hil^{(n)}\ 
\end{equation} are finite dimensional. For an operator $A$ commuting with $N$, let
\begin{equation}\label{eq:notation_decomp_op}
	A^{(n)}:=P_n A P_n\ ,\qquad  A^{[K]}:=\sum_{n=0}^{K} P_n A P_n
\end{equation} denote operators on these finite dimensional subspaces.

\begin{prop}[`Adapted Lie algebra rank condition' on finite dimensional Lie subalgebras]\label{prop:suff_trunc_compl}
	Let $(H_0,H_1,\dots,H_d=\id,\mathcal{H}, \hat{\mathcal{A}})$ be a quantum control system and $N$ a charge type operator. Assume that conditions $(i)$ -- $(iii)$ from Theorem~\ref{thm:sym_strategy} hold and use the notation from Eq.~\eqref{eq:notation_decomp_op}. For some $K\in \mathbb{N}$, let $\lC$ and $\lCK$ denote the complexifications of the Lie algebras 
	\begin{equation}
		\mathfrak{l}:=\Lie{iH_0,\dots,iH_{d-2}, i\id} \qquad \text{and}\qquad \mathfrak{l}^{[K]}:=\Lie{i H_0^{[K]},\dots,i H_{d-2}^{[K]}, i\id^{[K]} }\ ,
	\end{equation} respectively. Let $\mathfrak{sl}(d_n;\mathbb{C})$ be the special linear Lie algebra of degree $d_n$, as a vector space over $\mathbb{C}$.\\
	If for all $K\in \mathbb{N}_0$ we have $ \bigoplus_{n=0}^{K}\mathfrak{sl}(d_n; \mathbb{C}) \subseteq \lCK$ then $\SU(N) \subseteq \G$. 
\end{prop}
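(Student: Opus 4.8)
The plan is to work entirely at the level of the dynamical group $\G=\G(H_0,\dots,H_{d-2},\id)$ and to exploit that, by the symmetry assumption $(iii)$, every generator $\exp(itH_j)$ (well defined by the self-adjointness in $(i)$) commutes with $N$, so $\G\subseteq\U(N)$ and everything is block diagonal with respect to the eigenprojections $P_n$. Writing $d_n=\dim P_n\Hil$, I would first record that under the identification $U\mapsto(U^{(n)})_n$ the group $\U(N)$ becomes $\prod_{n=0}^\infty U(d_n)$ and the strong operator topology becomes the product topology; likewise $\SU(N)\cong\prod_{n=0}^\infty SU(d_n)$. In this picture the finitely supported elements of $\SU(N)$ -- those $V$ with $V^{(n)}=\id$ for all but finitely many $n$ -- are strongly dense. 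Since $\G$ is strongly closed, it therefore suffices to show that every finitely supported element of $\SU(N)$ belongs to $\G$.

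To access the hypothesis I would introduce, for each $K$, the truncation homomorphism $\rho_K\colon\U(N)\to\prod_{n=0}^K U(d_n)$, $U\mapsto(U^{(0)},\dots,U^{(K)})$; this is a continuous group homomorphism with $\rho_K(\exp(itH_j))=\exp(itH_j^{[K]})$, so $\rho_K(\G)$ contains the subgroup of $\prod_{n=0}^K U(d_n)$ generated by all one-parameter groups $\exp(itH_j^{[K]})$, and hence contains $\exp(\algl^{[K]})$. I would then convert the complex hypothesis into a real statement: since $\algl^{[K]}$ consists of anti-self-adjoint operators, its complexification splits as $\lCK=\algl^{[K]}\oplus i\algl^{[K]}$ into anti-self-adjoint and self-adjoint parts, so the anti-self-adjoint elements of $\lCK$ are exactly $\algl^{[K]}$. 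Intersecting the assumed inclusion $\bigoplus_{n=0}^K\mathfrak{sl}(d_n;\mathbb{C})\subseteq\lCK$ with the anti-self-adjoint operators then yields $\bigoplus_{n=0}^K\mathfrak{su}(d_n)\subseteq\algl^{[K]}$. Applying $\exp$, the group $\rho_K(\G)$ contains $\exp\big(\bigoplus_{n=0}^K\mathfrak{su}(d_n)\big)=\prod_{n=0}^K SU(d_n)$, so that $\prod_{n=0}^K SU(d_n)\subseteq\rho_K(\G)$ for every $K$.

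The final step, and the place where the genuine difficulty of the infinite-dimensional setting appears, is to upgrade these surjectivity-onto-truncations statements to membership in $\G$ itself: knowing $\rho_K(\G)\supseteq\prod_{n=0}^K SU(d_n)$ only produces, for a prescribed action on the first $K+1$ blocks, some $V_K\in\G$ whose action on the higher blocks is entirely uncontrolled. I would resolve this by a diagonal strong-limit argument. Fix a finitely supported $V\in\SU(N)$, say $V^{(n)}=\id$ for $n>K_0$. For each $K\ge K_0$ the inclusion just proved supplies a $V_K\in\G$ with $\rho_K(V_K)=\rho_K(V)$, i.e.\ $V_K$ agrees with $V$ on all blocks $n\le K$ and is arbitrary above. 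Then for any $\ket\psi=\sum_n\ket{\psi_n}$ one has $\|(V_K-V)\ket\psi\|^2=\sum_{n>K}\|(V_K^{(n)}-\id)\ket{\psi_n}\|^2\le 4\sum_{n>K}\|\ket{\psi_n}\|^2\to 0$, so $V_K\to V$ strongly, and as $\G$ is strongly closed, $V\in\G$. Since the finitely supported elements are dense in $\SU(N)$ and $\G$ is closed, this gives $\SU(N)\subseteq\G$. The main obstacle is thus precisely the passage from control of finitely many blocks to control of all of them; it is overcome not by exact generation but by the observation that matching ever more blocks forces strong convergence, which is exactly why the strong operator topology is the right setting for the claim.
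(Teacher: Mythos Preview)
Your argument is correct and follows the same conceptual route as the paper: reduce the complex hypothesis to the real inclusion $\bigoplus_{n\le K}\mathfrak{su}(d_n)\subseteq\algl^{[K]}$, pass to the group level on each truncation, and then use strong density of finitely supported block-diagonal unitaries together with strong closedness of $\G$. The paper's own proof is much terser: it simply invokes Propositions~4.4 and~4.5 of \cite{keyl2014controlling} for the statement that $\su^{[K]}(N)\subseteq\algl^{[K]}$ for all $K$ implies $\SU(N)\subseteq\G$, and then only spells out the complexification step. What you have written is essentially a self-contained version of that cited machinery, including the explicit diagonal/strong-limit argument that the reference packages.

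One point you pass over too quickly: from ``$\rho_K(\G)$ contains the subgroup generated by the one-parameter groups $\exp(itH_j^{[K]})$'' you write ``and hence contains $\exp(\algl^{[K]})$''. The algebraic group generated by finitely many one-parameter subgroups need not contain the exponential image of the Lie algebra they generate; one needs the \emph{closure} of that group. This is easy to supply here: since the strong topology on $\U(N)$ coincides with the product topology on $\prod_n U(d_n)$, the group $\U(N)$ is compact (Tychonoff), so the strongly closed subgroup $\G$ is compact, and therefore $\rho_K(\G)$ is a compact, hence closed, subgroup of $\prod_{n\le K}U(d_n)$. Once that is said, your conclusion follows from standard finite-dimensional Lie theory. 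With this small addition the argument is complete.
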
 
\begin{proof}
	Applying Propositions~4.4 and 4.5 from~\cite{keyl2014controlling} with the identifications $\su(N)=:\mathfrak{l}_1$ and $\mathfrak{l}=:\mathfrak{l}_2$ yields: if for all $K\in \mathbb{N}_0$ 
	\begin{equation}
		\su^{[K]}(N):=\big\{A^{[K]} \big| \,A\in\su(N) \big\}=\bigoplus_{n=0}^K \,\mathfrak{su} \big(\Hil^{(n)} \big)\subseteq \mathfrak{l}^{[K]}
	\end{equation}then the Lie group $\SU(N)$ is contained in the dynamical group $\G(H_0, H_1,\dots, H_{d-2}, \id)$. 
	Furthermore, it is sufficient to prove this inclusion for the complexified algebras $ \su^{[K]}_\mathbb{C}(N)=\su^{[K]}(N)\oplus i \,\su^{[K]}(N) $ and $\mathfrak{l}^{[K]}_\mathbb{C}$ since we get the original statement back by restricting to anti-self-adjoint operators on both sides. The complexification of $\su^{(n)}(\Hil^{(n)})$ is isomorphic to the Lie algebra $\mathfrak{sl}(d_n;\mathbb{C}) $ of traceless $d_n\times d_n$-matrices with entries in $\mathbb{C}$.
	Summing over $n$ from $0$ to $K$ gives  $\su^{[K]}_\mathbb{C}(N)=\bigoplus_{n=0}^{K}\mathfrak{sl}(d_n; \mathbb{C})$ and concludes the proof.
\end{proof}

\section{Spectral analysis for the JCH model}\label{sec:spectral_self}

In this section, we study the Jaynes-Cummings-Hubbard model with $M\ge2$ cavities. We provide a proof for Proposition~\ref{prop:JCH_con_sys} defining the JCH quantum control system. We additionally find a charge-type operator that commutes with all of the unbounded control Hamiltonians. We use it to construct a complete orthonormal basis of eigenvectors of the Hamiltonians and show recurrence of the control system. 
All proofs rely on spectral analysis. The key step is to construct a joint domain of analytic vectors for all relevant operators. 

\subsection{Self-adjointness and control system}
An analytic vector $\ket{\psi}$ of an operator $A$ is defined as an element of $\cap_{m=1}^\infty \dom(A^m)$ that satisfies 
\begin{equation}
	\sum_{m=0}^\infty \frac{t^m \|A^m \ket{\psi} \|}{m!}<\infty\; 
\end{equation} for some $t>0$. On such vectors, the power series of $\exp(itA) \ket{\psi}$ makes sense for sufficiently small $t$. They can be used to check whether an operator is essentially self-adjoint, i.e., whether its closure is self-adjoint.
For further details we refer to textbooks like chapters VIII from~\cite{RESI1} and X from~\cite{RESI2}.
We will show that the set $\fset$ from Eq.~\eqref{eq:D} forms a so called total set of analytic vectors for all relevant operators, i.e., it is a dense set of analytic vectors and invariant under the operators' action. Hence these operators are essentially self-adjoint on $\fset$. 

\begin{lem}[Domain of essential self-adjointness] \label{lem:selfad}
	The operators $\id$, $\sigma^z_i$, $\sigma^x_i$, $a^*_ia^{\phantom{*}}_i$, $a^*_i \sigma^-_i + a^{\phantom{*}}_i \sigma^+_i$, and $a^*_i a^{\phantom{*}}_j + a^{\phantom{*}}_i a^*_j $ for $1\le i,j\le M$ and $i\neq j$ admit $\fset$ as a total set of analytic vectors. 
	Arbitrary real linear combinations of them are essentially self-adjoint on $\fset$.
\end{lem}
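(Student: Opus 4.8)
I want to show that the set $\fset$, consisting of finite linear combinations of the number-and-atom basis vectors, is a total set of analytic vectors for each of the listed operators, and that it is invariant under their action. By Nelson's analytic vector theorem (see ch.~X of~\cite{RESI2}), a symmetric operator with a dense set of analytic vectors in its domain is essentially self-adjoint; and a symmetric operator defined on a dense invariant domain on which every vector is analytic will serve. Since $\fset$ is by construction the span of an orthonormal basis, it is dense in $\Hil_M$, so the only two things to establish are (a) invariance of $\fset$ under each operator (and hence under arbitrary real linear combinations), and (b) the analytic-vector estimate for arbitrary real linear combinations.

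\medskip
\noindent\textbf{Invariance.} First I would check, operator by operator, that each sends a basis vector into a finite linear combination of basis vectors, hence maps $\fset$ into $\fset$. This is immediate for $\id$, $\sigma^z_i$ and $\sigma^x_i$, which act only on one tensor factor $\mathbb{C}^2$ and preserve the photon number. For $a^*_i a^{\phantom{*}}_i$ the basis vector $\ket{m_i}$ is an eigenvector, so invariance is trivial. For $a^*_i\sigma^-_i + a^{\phantom{*}}_i\sigma^+_i$ and for the hopping term $a^*_i a^{\phantom{*}}_j + a^{\phantom{*}}_i a^*_j$, each summand raises one photon index by one and lowers another by one (together with an atomic flip in the first case), sending a basis vector to a finite combination of basis vectors with coefficients like $\sqrt{m+1}$. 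Since $\fset$ is a linear span, invariance under each operator gives invariance under any real linear combination, establishing $\fset \subseteq \bigcap_{m} \dom(A^m)$ for every such $A$.

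\medskip
\noindent\textbf{The analytic estimate — the main obstacle.} The hard part is the growth bound $\sum_m t^m \|A^m\ket{\psi}\|/m! < \infty$ for a general real linear combination $A$ of the listed operators. The difficulty is that the creation/annihilation operators are unbounded, so repeated application can in principle produce factorial-type growth. The key observation is that each elementary operator changes the total excitation number $N$ (the sum of all photon numbers and atomic excitations) by at most one, and on the finite-$N$ eigenspaces it acts with norm bounded by $O(N)$ (the $\sqrt{m}$ factors from the bosonic operators). So I would introduce the total excitation operator $N$ and estimate, for a basis vector $\ket{\psi}$ of excitation number $N_0$, that $A^m\ket{\psi}$ lives in the span of basis vectors of excitation number at most $N_0 + m$, with norm controlled by a product of the form $\prod_{r=1}^{m} C(N_0 + r)$ for a constant $C$ depending only on $M$ and the finitely many real coefficients. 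This product is bounded by $C^m (N_0+m)!/N_0!$, and dividing by $m!$ leaves a term that grows only like $C^m \binom{N_0+m}{m}$, which sums to a finite quantity for $t$ small enough (say $Ct < 1$). Carrying out this counting argument carefully — tracking how the norm accumulates under $m$-fold application while the excitation number climbs — is where the real work lies; once the estimate is in place, Nelson's theorem immediately yields essential self-adjointness on $\fset$ for every real linear combination, as claimed.

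\medskip
I expect the bookkeeping of the norm growth under repeated application (keeping the $\sqrt{m}$ factors organized and showing the $1/m!$ defeats them) to be the one genuinely technical step; everything else is routine verification of how the basic operators act on the basis.
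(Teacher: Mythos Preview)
Your approach is correct and takes a genuinely different route from the paper. The paper establishes analyticity by citing standard results from the literature: the Hermite functions are analytic vectors for linear expressions in $a,a^*$ (Reed--Simon~II, Example~2 to Thm~X.39) and for quadratic expressions such as $a_i^* a_j$ (Folland, Prop.~4.49); the JC interaction $a^*_i\sigma_i^- + a_i\sigma_i^+$ is rewritten as $\tfrac12(a+a^*)\otimes\sigma^x+\tfrac12(ia-ia^*)\otimes\sigma^y$ and handled via a tensor-product theorem (Reed--Simon~I, Thm~VIII.33); finally the bounded operators $\id,\sigma^z_i,\sigma^x_i$ are added by W\"ust's theorem (bounded perturbations preserve essential self-adjointness). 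Your proposal instead gives a self-contained combinatorial estimate: each listed operator shifts the total excitation by at most one and acts on the excitation-$N$ sector with norm $O(N)$, so iterating yields $\|A^m\ket{\psi}\|\le C^m (N_0+m)!/N_0!$, and $\sum_m (Ct)^m\binom{N_0+m}{m}=(1-Ct)^{-(N_0+1)}$ converges for $Ct<1$. What your approach buys is a uniform argument that treats all real linear combinations at once without external citations; what the paper's modular approach buys is that each piece is a known result, so no bookkeeping is needed.

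One technical caution worth flagging for your write-up: after one application of $A$ the state is spread over several excitation sectors, so the inequality $\|A\ket{\phi}\|\le C(N+1)\|\ket{\phi}\|$ does not iterate cleanly in the Hilbert-space norm. The cleanest fix is to track the $\ell^1$ norm of the basis-expansion coefficients instead (each elementary operator sends a basis vector to at most two basis vectors with coefficients bounded by $C(N+1)$), and only pass to the Hilbert norm via $\|\cdot\|\le\|\cdot\|_1$ at the very end. This is exactly the bookkeeping you anticipate, and once done, Nelson's theorem applies as you say.
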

The proof is a simple application of Nelson's analytic vector theorem (e.g.~Theorem X.39 from~\cite{RESI2}) and uses properties of the Hermite functions. Its components are not new but we did not find a reference where they were applied to the operators at hand and will hence briefly do this here. 
\begin{proof}
	We first consider the above unbounded operators and show that they are symmetric on supersets of $\fset$ which they leave invariant.
	For $i= 1,\ldots,M$ the operators 
	\begin{equation}\label{eq:symmetricoperators}
		a^*_ia^{\phantom{*}}_i \ , \quad a^{\phantom{*}}_i+a^*_i\ ,\quad ia^{\phantom{*}}_i-ia^*_i
	\end{equation} are symmetric on domains which contain $\fset$. For $n\in\mathbb{N}_0$ let $\ket{n}$ be the Hermite functions in number basis; the set 
	\begin{align}\label{eq:selfad_1}
		\text{span} \{ \ket{\psi_1} \otimes \ket{n_i}\otimes \ket{\psi_2} \in \Hil_M \ \big|\ & \ket{\psi_1} \in\Hil_{i-1}, \ket{\psi_2} \in\mathbb{C}^2\otimes \Hil_{M-i}, n_i\in \mathbb{N}_0\}
	\end{align}
	is invariant under the action of the operators~\eqref{eq:symmetricoperators} and hence its subset $\fset$ also is. Assuming w.l.o.g.~$1\le i<j\le M$, the symmetric operator $a^*_i a^{\phantom{*}}_j  +a^{\phantom{*}}_i a^*_j$ leaves the set 
	\begin{align*}
		\text{span} \big\{\ket{\psi_1} \otimes\ket{n_i}\otimes \ket{\psi_2} \otimes \ket{n_j}\otimes \ket{\psi_3} \in \Hil_M \ \big|\ & \ket{\psi_1} \in\Hil_{i-1}, \ket{\psi_2} \in \mathbb{C}^2 \otimes \Hil_{j-i-1},\\
		&\;\ket{\psi_3}\in\mathbb{C}^2\otimes \Hil_{M-j}, n_i, n_j\in \mathbb{N}_0 \big\}	
	\end{align*}
	and its subset $\fset$ invariant. Consider $a\otimes \sigma_+ + a^*\otimes \sigma_-=\tfrac{1}{2} \left(a+a^*\right)\otimes \sigma^x+ \tfrac{1}{2} \left(ia-ia^*\right)\otimes \sigma^y $ for the operator $a^*_i \sigma^-_i + a^{\phantom{*}}_i \sigma^+_i$. Its summands are tensor products of the symmetric operators $a+a^*$ or $ia-ia^*$ with the self-adjoint bounded Paulis $\sigma^x$ or $\sigma^y$, respectively. They are again symmetric and the set \eqref{eq:selfad_1} as well as $\fset$ is invariant under their action. In summary, the set $\fset$ is a joint dense invariant domain of $\id$, $\sigma^z_i$, $\sigma^x_i$, $a^*_ia^{\phantom{*}}_i$, $a^*_i \sigma^-_i + a^{\phantom{*}}_i \sigma^+_i$, and $a^*_i a^{\phantom{*}}_j + a^{\phantom{*}}_i a^*_j $ for $1\le i,j\le M$.
	
	We now show that it consists of analytic vectors for the above operators. 
	It is well known that the Hermite functions (as well as their span) are analytic vectors for linear combinations of $a$ and $a^*$ such as $a+a^*$ and $ia-ia^*$ (e.g. Example~2 to Theorem X.39 from \cite{RESI2}). Twofold tensor products of them form a dense set of analytic vectors also for quadratic terms in $a$ or $a^*$ such as $a^{\phantom{*}}_ia^*_j$ or $a^*_ia^{\phantom{*}}_j$ (e.g. Proposition 4.49 from \cite{folland1986harmonic}). 
	Hence $\fset$ forms a dense set of analytic vectors for the operators $a^*_ia^{\phantom{*}}_i$, $a^*_i+a^{\phantom{*}}_i $, $ia^*_j- i a^{\phantom{*}}_j$, and $a^*_i a^{\phantom{*}}_j +a^{\phantom{*}}_i a^*_j$. 
	Combining both properties of $\fset$, invariance and analyticity, 
	we can apply Nelson's analytic vector theorem (Theorem X.39 from \cite{RESI2}) to conclude that these operators as well as linear combinations of them are essentially self-adjoint on $\fset$.
	Note that $a^*_i \sigma^-_i + a^{\phantom{*}}_i \sigma^+_i$ inherits this property from $a^*_i+a^{\phantom{*}}_i $ and $ia^*_j- i a^{\phantom{*}}_j$ since Theorem VIII.33 from \cite{RESI1} allows for analyzing essential self-adjointness on factors of a tensor product separately. 
	
	The bounded and symmetric operators $\id$, $\sigma^z_i $, and $\sigma^x_i$ are self-adjoint on $\Hil$. 
	Note that due to Wüst's theorem (Theorem X.14 from \cite{RESI2}), the sum of a bounded self-adjoint and an essentially self-adjoint operator on some domain is again essentially self-adjoint on the same domain. Hence arbitrary linear combinations of the operators $\id$, $\sigma^z_i$, $\sigma^x_i$, $a^*_ia^{\phantom{*}}_i$, $a^*_i \sigma^-_i + a^{\phantom{*}}_i \sigma^+_i$, and $a^*_i a^{\phantom{*}}_j + a^{\phantom{*}}_i a^*_j $ are essentially self-adjoint on $\fset$.
\end{proof}

This Lemma implies that the drift and all control Hamiltonians for the JCH model as well as their linear combinations are essentially self-adjoint on $\fset$ and hence have unique self-adjoint extensions.

\begin{proof}[Proof of Proposition~\ref{prop:JCH_con_sys}]
	The operators $\DriftJ$, $\DriftJ+\sigma^x_i$, $\DriftJ+ \sigma^z_i$ (for $i\in \{1,\dots, M \}$) and $\DriftJ+ \HH{I} $ are linear combinations of the operators from Lemma~\ref{lem:selfad}. Hence they are essentially self-adjoint on the set $\fset$ and we can, in slight abuse of notation, denote their self-adjoint extensions by the same symbol. We conclude that the drift and control Hamiltonians from Proposition~\ref{prop:JCH_con_sys} form a quantum control system according to Definition~\ref{dfn:con_sys}.
\end{proof}

\subsection{Symmetry and charge type operator}
To show recurrence we exploit the system's symmetry which we also need to apply Theorem \ref{thm:sym_strategy}. 
\begin{lem}[Charge-type operator]\label{lem:N_chargetype}
	 The operator 
	 \begin{equation}
		 N:=\sum_{i=1}^M \left(a^*_i a^{\phantom{*}}_i+ \tfrac{1}{2}\left(\sigma^z_i + \id^{\phantom{*}}_i \right) \right)
	 \end{equation} is a charge type operator according to Definition~\ref{def:charge}.
\end{lem}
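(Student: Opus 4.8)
The plan is to verify the four defining properties of a charge type operator (Definition~\ref{def:charge}) in turn: essential self-adjointness of $N$, nonnegativity and integrality of its spectrum, the $\mathrm{U}(1)$ (rather than merely $\mathbb{R}$) structure of $\exp(i\alpha N)$, and finite multiplicity of each eigenvalue. First I would observe that $N$ is an arbitrary real linear combination of the operators $a^*_i a^{\phantom{*}}_i$, $\sigma^z_i$, and $\id$, all of which appear in Lemma~\ref{lem:selfad}. That lemma therefore gives at once that $N$ is essentially self-adjoint on $\fset$, and I denote its unique self-adjoint extension again by $N$.

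Next I would make the eigenstructure explicit on the spanning basis of $\fset$. The atomic operator $\tfrac12(\sigma^z_i + \id_i)$ is diagonal in the $\ket{b_i}$ basis with eigenvalues in $\{0,1\}$ (it is the projector onto the excited atomic state), while $a^*_i a^{\phantom{*}}_i$ is diagonal in the number basis $\ket{m_i}$ with eigenvalue $m_i \in \mathbb{N}_0$. Hence each basis vector $\ket{m_1}\otimes\ket{b_1}\otimes\cdots\otimes\ket{m_M}\otimes\ket{b_M}$ is an eigenvector of $N$ with eigenvalue $\sum_{i=1}^M m_i + (\text{number of excited atoms}) \in \mathbb{N}_0$, i.e.\ the total excitation number. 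Since these eigenvectors span the dense set $\fset$ on which $N$ is essentially self-adjoint, the self-adjoint closure has pure point spectrum, its eigenvalues are exactly the nonnegative integers, and the eigenvectors form a complete orthonormal basis of $\Hil_M$. This settles the requirement $n\in\mathbb{N}_0$.

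For the group structure I would invoke Stone's theorem~\cite{RESI1}: self-adjointness makes $\alpha \mapsto \exp(i\alpha N)$ a strongly continuous one-parameter unitary group. Because every eigenvalue is an integer, $\exp(i2\pi N)$ acts as the identity on each vector of the complete eigenbasis and therefore equals $\id$; the group is thus $2\pi$-periodic and descends to a strongly continuous unitary representation $z=\exp(i\alpha)\mapsto\pi(z)=\exp(i\alpha N)$ of $\mathrm{U}(1)$, exactly as the definition demands.

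Finally, finite multiplicity is an elementary counting statement: the eigenspace $P_n\Hil_M$ is spanned by those basis vectors with total excitation number $n$, and there are only finitely many tuples $(m_1,b_1,\dots,m_M,b_M)$ with $m_i\in\mathbb{N}_0$, $b_i\in\{0,1\}$ satisfying $\sum_i m_i + \sum_i b_i = n$, so $\dim P_n\Hil_M<\infty$. The only step requiring genuine care — and the one I would flag as the crux — is the passage from ``$\fset$ is a total set of eigenvectors with integer eigenvalues'' to ``the self-adjoint closure has precisely these eigenvalues, with no residual or continuous spectrum.'' This is exactly what essential self-adjointness on $\fset$ buys us, and it is also what guarantees the periodicity $\exp(i2\pi N)=\id$ underlying the $\mathrm{U}(1)$ (as opposed to merely $\mathbb{R}$) representation; everything else is bookkeeping.
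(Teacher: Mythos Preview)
Your proof is correct and follows essentially the same route as the paper: invoke Lemma~\ref{lem:selfad} for (essential) self-adjointness, diagonalize $N$ on the product basis to see that the eigenvalues are the nonnegative integers, and count to obtain finite multiplicity. You are more explicit than the paper about why integer spectrum yields a genuine $\mathrm{U}(1)$ (rather than $\mathbb{R}$) representation via $2\pi$-periodicity, while the paper in turn records an explicit bound $\dim P_n\Hil_M \le 2^M(n+1)^{M-1}$ that you replace by a bare finiteness argument; neither difference is substantive.
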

\begin{proof}
	$N$ is self-adjoint due to Lemma~\ref{lem:selfad}. It acts via
	\begin{align}
		N  \ket{m_1}\otimes &\ket{b_1}\otimes \cdots \otimes \ket{m_M}\otimes \ket{b_M} \nonumber\\
		&=(m_1+b_1+\cdots+m_M+b_M) \ket{m_1}\otimes \ket{b_1}\otimes \cdots \otimes \ket{m_M}\otimes \ket{b_M} 
	\end{align} on basis vectors of $\Hil_M$. Its eigenvalues are of the form $\{\sum_{i=1}^M(m_i+b_i)| m_i\in\mathbb{N}_0,b_i\in\{0,1\}\} =\mathbb{N}_0$. Let us denote them by $n\in \mathbb{N}_0$ and the projections onto their corresponding eigenspace by $P_n$. Their multiplicity is bounded by $\dim (P_n \Hil_M)\le 2^M (n+1)^{M-1}<\infty$. Hence $N$ is charge-type.
\end{proof}

The eigenvalues of $N$ correspond to the sum of the photonic and atomic excitations and are called the total excitation numbers. The operator $N$ can be written as $N=\sum_{i=1}^{M}N_i$ where
\begin{equation}
	N_i:= a^*_i a^{\phantom{*}}_i+ \tfrac{1}{2}\left(\sigma^z_i + \id^{\phantom{*}}_i \right)
\end{equation} which in turn measure the (sum of photonic and atomic) excitation number in cavity $j$. We introduce the new notation 
\begin{equation}
	\ket{n;n_1;\dots; n_{M-1};b_1,\dots ,b_M}: =\ket{n_1-b_1}\otimes \ket{b_1}\otimes\ket{n_2-b_2}\otimes \cdots\otimes \ket{b_M}.
\end{equation} for the natural basis $\ket{m_1} \otimes \ket{b_1}\otimes\cdots\otimes \ket{m_M} \otimes\ket{b_M}$ of $\Hil_M$.  
This allows for rewriting the set
\begin{align}
	\fset& :=\text{span} \{ \ket{m_1}\otimes \ket{b_1}\otimes \cdots \otimes \ket{m_M}\otimes \ket{b_M} |b_i\in\{0,1\},\, m_i\in \mathbb{N}_0,\, i=1,\dots, M\}\nonumber\\
	&=\text{span}\{\ket{n;n_1;\dots; n_{M-1};b_1,\dots ,b_M}| n_i-b_i\ge 0,b_i\in \{0,1\}, 0\le n\le n_1+\dots n_{M-1} +b_M  \}\nonumber\\
	&= \text{span}\{ \ket{\psi} \in \Hil_M| \exists n\in \mathbb{N}_0, \, N\ket{\psi}=n \ket{\psi} \} \label{eq:D_2}
\end{align} and justifies its name as the finite excitation space.

\begin{prop}[Symmetry]\label{lem:commutation}
	The operators $\DriftJ$, $\HH{I}=\sum_{(j,k)\in I} (a^*_j a^{\phantom{*}}_k + a^{\phantom{*}}_j a^*_k)$, and $\sigma^z_i$ for $i\in \{1,\dots, M \}$ commute with the total particle operator $N$; the operators $N$, $\DriftJ$ and $\sigma^z_i$ also commute with $N_j$ for all $j\in \{1,\dots,M\}$.
\end{prop}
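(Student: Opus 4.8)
The plan is to reduce everything to the single observation that each listed operator preserves the total excitation number — globally for $\HH{I}$ and even cavity-by-cavity for $\DriftJ$ and $\sigma^z_i$ — and then to upgrade this invariance of eigenspaces into genuine commutation of (possibly unbounded) self-adjoint operators. By Lemma~\ref{lem:selfad} all operators involved share $\fset$ as a common invariant core of analytic vectors, and by Eq.~\eqref{eq:D_2} the set $\fset$ is precisely the span of the joint eigenvectors of $N$ (and of the $N_j$). I would therefore carry out the algebraic part of the argument entirely on $\fset$, where the canonical relation $[a_j, a_k^*]=\delta_{jk}$ and the Pauli algebra are available verbatim.

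First the bounded case, which is immediate: $\sigma^z_i$ acts only on the atomic factor of cavity $i$, while $a^*_j a_j$ acts on a photonic factor and $\tfrac12(\sigma^z_j+\id_j)$ on the atomic factor of cavity $j$. Hence $\sigma^z_i$ commutes with every $N_j$ termwise, and therefore with $N=\sum_j N_j$. Next the unbounded terms. The Jaynes--Cummings interaction $a^*_i\sigma^-_i + a_i\sigma^+_i$ trades one photon for one atomic excitation in the \emph{same} cavity, so I would show $[a^*_i\sigma^-_i + a_i\sigma^+_i, N_i]=0$ on $\fset$: the photonic contribution obtained from $[\,\cdot\,, a^*_i a_i]$ cancels the atomic contribution from $[\,\cdot\,,\tfrac12(\sigma^z_i+\id_i)]$ (they are equal in magnitude and opposite in sign, and likewise for the adjoint summand). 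Since this interaction commutes with $N_l$ for $l\neq i$ for trivial tensor-factor reasons, each Jaynes--Cummings summand, and hence $\DriftJ$, commutes with every $N_j$ on $\fset$, and so with $N$. For $\HH{I}$ the hopping term $a^*_j a_k + a_j a^*_k$ moves a photon \emph{between} cavities $j$ and $k$; the cancellation $[a^*_j a_k, a^*_j a_j] + [a^*_j a_k, a^*_k a_k] = -a^*_j a_k + a^*_j a_k = 0$ shows it commutes with $N$ on $\fset$, while it visibly fails to commute with the individual $N_j$. This is exactly why $\HH{I}$ occurs only in the first list of the statement.

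The main obstacle is that a vanishing commutator on a common core does \emph{not} in general imply that two unbounded self-adjoint operators commute in the spectral sense demanded by Definition~\ref{def:charge} and used as hypothesis~(iii) of Theorem~\ref{thm:sym_strategy}. To close this gap I would exploit the finite-dimensionality of the eigenspaces $\Hil^{(n)}=P_n\Hil_M$ established in Lemma~\ref{lem:N_chargetype}: the computations above show that each operator maps $\fset\cap\Hil^{(n)}$ into $\Hil^{(n)}$, and since $\Hil^{(n)}$ is finite dimensional this is invariance of the \emph{entire} eigenspace. Because $\fset=\bigoplus_n(\fset\cap\Hil^{(n)})$ is a core on which each projection $P_n$ commutes with the operator, the bounded operator $P_n$ maps the core into the domain and commutes with the self-adjoint extension on that core; it follows that the extension commutes with every spectral projection $P_n$ of $N$, i.e.\ it strongly commutes with $N$. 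Running the same argument along the joint eigenspaces of the operators $N_j$ yields the commutation with $N_j$ for $\DriftJ$ and $\sigma^z_i$ (and for $N$ itself, which is diagonal on each such space). This passage to finite-dimensional blocks is what turns the otherwise delicate unbounded-operator commutation into a routine verification.
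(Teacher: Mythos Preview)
Your argument is correct and reaches the same conclusion as the paper, but the passage from ``commutes on the core'' to ``strongly commutes'' is handled differently. The paper computes the action of each operator explicitly on the basis vectors $\ket{n;n_1,\dots,b_M}$ (rather than via the CCR and Pauli commutator identities, as you do) and then invokes a general theorem: if $AB\ket{\phi}=BA\ket{\phi}$ on a total set of common \emph{analytic} vectors, then the self-adjoint closures strongly commute (Theorem~7.18 of Schm\"udgen~\cite{schmudgen2012unbounded}). You instead exploit the finite-dimensionality of the eigenspaces $\Hil^{(n)}$ (respectively the joint eigenspaces of the $N_j$) to show that each spectral projection $P_n$ commutes with the operator on the core $\fset$, and then extend to the closure by a routine limiting argument. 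Your route is more elementary in that it avoids the analytic-vector commutation theorem altogether, but it is tailored to the charge-type structure of $N$; the paper's approach is more of a black box and would apply even without the finite-multiplicity hypothesis, since the analyticity of $\fset$ was already established in Lemma~\ref{lem:selfad}.
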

\begin{proof}
	Two unbounded operators $A$ and $B$ commute if and only if all their spectral projections commute. Note that in contrast to popular belief it is not sufficient that they satisfy $AB \ket{\psi}=BA\ket{\psi} $ on a joint dense domain of essential self-adjointness~(\cite{nelson1959analytic}; section VIII.5 of~\cite{RESI1}). However, if $AB\ket{\phi}=BA \ket{\phi}$ on a total set of common analytic vectors then the two operators commute (c.f~Theorem~7.18 of~\cite{schmudgen2012unbounded}). 
	The set $\fset$ from \eqref{eq:D_2} forms a total set of analytic vectors for all relevant operators due to Lemma~\ref{lem:selfad}. Defining  $n_M:=n-(n_1+n_2+\cdots+ n_{M-1})$ we calculate
	\begin{align}
		& N \ket{n;n_1,\dots,b_M} = n \ket{n;n_1,\dots,b_M}\\
		& N_i \ket{n;n_1,\dots,b_M}= n_i \ket{n;n_1,\dots,b_M}\\
		& \sigma^z_i \ket{n;n_1,\dots,b_M}= \left(b_i-\tfrac{1}{2} \right) \ket{n;n_1,\dots,b_M}\\
		& a^*_ia^{\phantom{*}}_i \ket{n;n_1,\dots,b_M}= \left(n_i-b_i\right) \ket{n;n_1,\dots,b_M}\; .
	\end{align}
	One has $(a^*_i \sigma^-_i + a^{\phantom{*}}_i \sigma^+_i )\ket{n;n_1,\dots,b_i,\dots, b_M}=\sqrt{n_i} \ket{n;n_1,\dots,1-b_i, \dots, b_M}$ if $n_i\ge 2$ or if $n_i=1$ and $b_1=1$; one has $(a^*_i \sigma^-_i + a^{\phantom{*}}_i \sigma^+_i)\ket{n;n_1,\dots,b_i,\dots, b_M}=0$ otherwise. By linearity, the operators $\sigma^z_i$, $a^*_ia^{\phantom{*}}_i $, $a^*_i \sigma^-_i + a^{\phantom{*}}_i \sigma^+_i $ and $\DriftJ$ commute both with $N$ and $N_i$ on $\fset$. Hence their self-adjoint extensions commute.
	We find 
	\begin{align*}
		(a^*_i a^{\phantom{*}}_j + a^{\phantom{*}}_i a^*_j ) \ket{n;\dots,b_M} = \sqrt{n_i-b_i+1} \sqrt{n_j-b_j} \ket{n; \dots, n_i+1, \dots, n_j-1,\dots b_M} + i \leftrightarrow j
	\end{align*} if $n_i-b_i\ge1$ and $n_j-b_j\ge 1$. If $n_i-b_i\ge1$ but $n_j-b_j= 0$ we get:
	\begin{align}\label{eq:commutation_proof_1}
		(a^*_i a^{\phantom{*}}_j + a^{\phantom{*}}_i a^*_j ) \ket{n;\dots,b_M} = \sqrt{n_i-b_i+1}\sqrt{n_j-b_j} \ket{n; \dots, n_i+1, \dots, n_j-1,\dots b_M}\; .
	\end{align} If instead $n_j-b_j\ge1$ but $n_i-b_i= 0$ we get the same as in Eq.~\eqref{eq:commutation_proof_1} just replacing $i\leftrightarrow j$. We obtain $(a^*_i a^{\phantom{*}}_j + a^{\phantom{*}}_i a^*_j ) \ket{n;\dots,b_M}=0$ otherwise.
	Hence the operators $\HH{I}=\sum_{(i,j)\in I} (a^*_i a^{\phantom{*}}_j + a^{\phantom{*}}_i a^*_j)$ and $N$ commute on $\fset$ and we conclude that their self-adjoint extensions commute as well.
\end{proof}

\subsection{Recurrence}

In order to shorten notation, introduce the free photon operator $\HP$, the free atom $\HA$, and the photon-atom interaction $\HI$ given by
\begin{align}\label{eq:notation_HP_HI}
	\HP&:=\sum_{k=1}^M \oP{k} \, a^*_k a^{\phantom{*}}_k\qquad \HA:=\sum_{k=1}^M \oA{k} \, \sigma^z_k\qquad \HI:=\sum_{k=1}^M \oI{k} \, (a^*_k \sigma^-_k + a^{\phantom{*}}_k \sigma^+_k )\; .
\end{align}

We now prove that the quantum control system from Proposition~\ref{prop:JCH_con_sys} satisfies the recurrence condition. To do so, we construct a complete orthonormal set of eigenvectors of the relevant Hamiltonians $\DriftJ$, $\DriftJ+\sigma^z_i$, $\DriftJ+\sigma^x_i$, (for $i\in \{1,\dots, M \}$) and $\DriftJ+\HH{I}$. This is quite simple for the ones commuting with $N$. For $\DriftJ+\sigma^x_i$, which do not commute with $N$, we use properties of their so-called resolvent. For a closed operator $T$ and for those $\lambda\in\mathbb{R}$ such that $\lambda\id-T$ is a bijection of $\dom(T)$ onto the whole Hilbert space $\Hil$ with bounded inverse, one can define the resolvent of $T$ as $R_\lambda(T)=(\lambda\id-T)^{-1}$. $T$ having compact resolvent is equivalent to the statement that in $\dom(T)$, there is an orthonormal basis of eigenvectors of $T$ with increasing eigenvalues $\lambda_1\le\lambda_2\le\dots$ where $\lambda_n\to \infty$ (c.f~Theorem~XIII.64 from~\cite{RESI4}). 
Hence we first prove that $\HP$ has compact resolvent and then that $\DriftJ+\sigma^x_i=\HP+\HA+\HI+\sigma^x_i$ shares this property. 

\begin{lem}[Complete orthonormal set of eigenvectors for commuting operators]\label{lem:spec_com}
	The operators $\DriftJ$, $\DriftJ+\HH{I}$ and $\DriftJ+\sigma^z_i$ for $i\in \{1,\dots, M \}$ admit a complete orthonormal set of eigenvectors in $\Hil$.
\end{lem}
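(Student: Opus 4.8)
The plan is to exploit the symmetry established earlier: all three operators $A\in\{\DriftJ,\ \DriftJ+\HH{I},\ \DriftJ+\sigma^z_i\}$ commute with the charge-type operator $N$, whose eigenspaces are finite dimensional. This turns an infinite-dimensional spectral problem into a countable family of finite-dimensional ones, each of which is solved by the ordinary spectral theorem for Hermitian matrices. Concretely, since $N$ is charge type (Lemma~\ref{lem:N_chargetype}), I would write $\Hil=\bigoplus_{n=0}^\infty P_n\Hil$ with $\dim P_n\Hil<\infty$, and observe that each eigenspace $P_n\Hil$ is spanned by finite-excitation basis vectors and hence satisfies $P_n\Hil\subseteq\fset$.

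Next I would verify that each $A$ preserves every block $P_n\Hil$. By Proposition~\ref{lem:commutation}, $\DriftJ$, $\HH{I}$ and $\sigma^z_i$ each commute with $N$, and the sums $\DriftJ+\HH{I}$ and $\DriftJ+\sigma^z_i$ inherit this (either directly from the summands or by the analytic-vector commutation criterion on the total set $\fset$, exactly as in the proof of Proposition~\ref{lem:commutation}). Commutation of self-adjoint operators means all spectral projections commute, so $A$ commutes with each eigenprojection $P_n$ of $N$; thus $A$ maps $P_n\Hil$ into itself. Because $P_n\Hil\subseteq\fset\subseteq\dom(A)$ by Lemma~\ref{lem:selfad}, the restriction $A_n:=A|_{P_n\Hil}$ is a well-defined operator on the finite-dimensional space $P_n\Hil$, and self-adjointness of $A$ gives, for $v,w\in P_n\Hil$, that $\langle A v,w\rangle=\langle v,A w\rangle$, so $A_n$ is a Hermitian matrix. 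The finite-dimensional spectral theorem then furnishes an orthonormal basis of $P_n\Hil$ consisting of eigenvectors of $A_n$, and each such vector is a genuine eigenvector of the self-adjoint extension $A$ since $A_n$ is merely the restriction of $A$.

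Finally, I would assemble these bases: the union over all $n\in\mathbb{N}_0$ of the orthonormal eigenbases of the $A_n$ is an orthonormal set of eigenvectors of $A$, orthonormality across different blocks being automatic from the orthogonality of the $P_n\Hil$. Its closed span contains the algebraic direct sum $\bigoplus_{n=0}^\infty P_n\Hil$, which equals $\fset$ and is dense in $\Hil$; hence the set is complete. Carrying this out for each of the three choices of $A$ proves the lemma. The only genuinely delicate point is the passage from commutation on the invariant domain $\fset$ to the operator-theoretic statement that $P_n\Hil$ is a reducing subspace of the unbounded self-adjoint $A$; but this is supplied precisely by Proposition~\ref{lem:commutation} (spectral projections of $A$ and $N$ commute) combined with the finite-dimensionality of $P_n\Hil$ from Lemma~\ref{lem:N_chargetype}, so no additional work beyond invoking these results is required.
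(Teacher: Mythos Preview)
Your argument is correct and follows essentially the same strategy as the paper: exploit the commutation with the charge-type operator $N$ to reduce the problem to the finite-dimensional spectral theorem on each eigenspace $P_n\Hil$, then assemble the resulting eigenbases into a complete orthonormal system. The paper phrases this with the truncations $H^{[K]}$ on $\Hil^{[K]}$ and an extension argument in $K$, whereas you work directly block-by-block on each $P_n\Hil$; this is only a cosmetic difference, and your version is arguably a bit more streamlined.
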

\begin{proof}
	Let $H\in \{\DriftJ,\DriftJ+\sigma^z_i,\DriftJ+\HH{I}\}$ where $i\in \{1,\dots, M \}$. We consider the direct sum decomposition of the Hilbert space and the notation with upper indices $[K]$ from Eqs.~\eqref{eq:notation_decomp_op} and~\eqref{eq:notation_decomp_hil}.
	Remembering that $H$ is essentially self-adjoint on $\fset$ and commutes with $N$ gives that for any $\ket{\psi} \in \fset$ there exists a $K\in \mathbb{N}_0$ such that $H^{[K]}\ket{\psi}=H \ket{\psi}$. The operator $H^{[K]}$ is self-adjoint on $\Hil_M^{[K]}$ and its eigenvectors hence form an orthonormal basis of this finite dimensional space. For $K'\ge K$, the above eigenvectors are also eigenvectors of $H^{[K']}$ and we can extend this set to an eigenbasis of $H^{[K']}$. Taking $K$ arbitrarily large yields a complete, orthonormal set of eigenvectors of $H$.
\end{proof}

\begin{lem} 
	The operator $\HP$ has compact resolvent.
\end{lem}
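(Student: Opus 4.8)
The plan is to show that $\HP=\sum_{k=1}^M \oP{k}\,a^*_k a^{\phantom{*}}_k$ has compact resolvent by using the characterization recalled just before the statement: an operator with discrete spectrum accumulating only at $+\infty$, whose eigenspaces are all finite-dimensional, has compact resolvent. Concretely, I would exhibit an orthonormal eigenbasis with eigenvalues tending to infinity. The natural basis of $\Hil_M$, namely $\ket{m_1}\otimes\ket{b_1}\otimes\cdots\otimes\ket{m_M}\otimes\ket{b_M}$, already diagonalizes $\HP$, since each number operator $a^*_k a^{\phantom{*}}_k$ acts as multiplication by $m_k$ and $\HP$ acts as multiplication by $\sum_{k=1}^M \oP{k}\, m_k$. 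Because $\fset$ is a total set of analytic vectors by Lemma~\ref{lem:selfad}, $\HP$ is essentially self-adjoint there, so this basis of analytic eigenvectors really does diagonalize the self-adjoint closure and there is no ambiguity about the spectrum coming from a bad choice of extension.

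Next I would verify the two quantitative properties required for compact resolvent. First, each eigenvalue has finite multiplicity: for a fixed value $\lambda=\sum_{k=1}^M \oP{k}\, m_k$ there are only finitely many photon configurations $(m_1,\dots,m_M)\in\mathbb{N}_0^M$ achieving it (each $m_k$ is bounded by $\lambda/\min_k \oP{k}$ since all $\oP{k}>0$), and the atomic degrees of freedom $b_k\in\{0,1\}$ contribute only a factor of at most $2^M$. Second, the eigenvalues accumulate only at $+\infty$: since the $\oP{k}$ are strictly positive, any sublevel set $\{\lambda \le C\}$ contains only finitely many configurations, so after ordering the eigenvalues increasingly as $\lambda_1\le\lambda_2\le\cdots$ one has $\lambda_n\to\infty$. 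Invoking Theorem~XIII.64 from~\cite{RESI4} then gives compact resolvent directly.

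I do not expect a serious obstacle here; the statement is essentially a bookkeeping argument once one commits to the number basis. The one point that deserves a careful word rather than a hand-wave is the strict positivity assumption on the $\oP{k}$, which is exactly what forces the sublevel sets to be finite and hence the spectrum to be discrete and proper; without it (for instance if some $\oP{k}=0$) the operator would have infinite-dimensional eigenspaces and the resolvent would fail to be compact. I would make sure to state that this strict positivity is supplied by the hypotheses of Proposition~\ref{prop:JCH_con_sys}, so the bound on multiplicities and the divergence of eigenvalues both follow. Everything else reduces to the standard fact that a self-adjoint operator with purely point spectrum of finite multiplicity diverging to infinity has compact resolvent.
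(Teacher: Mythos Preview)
Your proposal is correct and follows essentially the same approach as the paper: exhibit the number basis as an orthonormal eigenbasis of $\HP$, observe that the eigenvalues $\sum_k \oP{k}\, m_k$ form a sequence tending to $+\infty$, and invoke Theorem~XIII.64 of~\cite{RESI4}. If anything, your write-up is more careful than the paper's, which simply asserts that the eigenvalues can be ordered with $\lambda_k\to\infty$ without spelling out the finite-multiplicity and sublevel-set arguments or the role of strict positivity of the $\oP{k}$.
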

\begin{proof}
	It is easy to see that $\ket{n; n_1,\dots, b_M}$ are eigenvectors of $\HP$ that correspond to eigenvalues $ \sum_{i=1}^{M} \oP{i} \left( n_i-b_i\right)\ge 0$ where $b_i\in\{0,1\}$ and $n_i\ge b_i\in \mathbb{N}_0$. The eigenvalues can be labeled by $\lambda_k$ where $k=1,2\dots$ and ordered such that $\lambda_k\le \lambda_{k+1}$ and $\lim_{k\to \infty} \lambda_k =\infty$. The corresponding eigenvectors form an orthonormal basis of $\Hil_M$ and therefore $H_P$ has compact resolvent (c.f~Theorem XIII.64 from~\cite{RESI4}).
\end{proof}

\begin{lem}[Compact resolvent of non-commuting operators]\label{lem:spec_non}
	The operators $\DriftJ+\sigma^x_i$ where $i\in \{1, \dots, M \}$ have compact resolvent.
\end{lem}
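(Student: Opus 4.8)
The plan is to treat $\DriftJ+\sigma^x_i$ as a perturbation of $\HP$, for which compact resolvent has just been established. Writing
\[
	\DriftJ + \sigma^x_i = \HP + B_i\ , \qquad B_i := \HA + \HI + \sigma^x_i\ ,
\]
I would reduce the claim to showing that the symmetric perturbation $B_i$ is $\HP$-bounded with relative bound strictly less than one. The operators $\HA=\sum_k \oA{k}\sigma^z_k$ and $\sigma^x_i$ are bounded and hence contribute relative bound zero, so that the whole difficulty sits in the interaction term $\HI$.

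To control $\HI=\sum_k \oI{k}(a^*_k\sigma^-_k+a^{\phantom{*}}_k\sigma^+_k)$, I would use that each summand moves only a single photon and therefore grows like $\sqrt{N}$, whereas $\HP$ grows like $N$. Concretely, since the Paulis are contractions one has $\|(a^*_k\sigma^-_k+a^{\phantom{*}}_k\sigma^+_k)\ket{\psi}\|\le \|a^*_k\ket{\psi}\|+\|a^{\phantom{*}}_k\ket{\psi}\|$, and the elementary spectral estimate $\|a^{\phantom{*}}_k\ket{\psi}\|\le \sqrt{\delta}\,\|a^*_k a^{\phantom{*}}_k\ket{\psi}\|+(2\sqrt{\delta})^{-1}\|\ket{\psi}\|$, valid on $\fset$ for every $\delta>0$, shows that $a^{\phantom{*}}_k$ and $a^*_k$ are infinitesimally bounded relative to the number operator $a^*_k a^{\phantom{*}}_k$. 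Because the operators $a^*_k a^{\phantom{*}}_k$ are commuting and positive with $\|a^*_k a^{\phantom{*}}_k\ket{\psi}\|\le \oP{k}^{-1}\|\HP\ket{\psi}\|$, it follows by the joint spectral theorem that $\HI$ is $\HP$-bounded with relative bound zero. Hence $B_i$ has $\HP$-bound zero, in particular less than one, and $\dom(\HP)\subseteq\dom(B_i)$.

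With this in hand, the Kato--Rellich theorem (Theorem~X.12 of~\cite{RESI2}) confirms self-adjointness of $\HP+B_i=\DriftJ+\sigma^x_i$ on $\dom(\HP)$ and, crucially, yields $\dom(\DriftJ+\sigma^x_i)=\dom(\HP)$ with equivalent graph norms. Since $\HP$ has compact resolvent, the inclusion of $\dom(\HP)$, equipped with the graph norm, into $\Hil$ is compact; the equivalence of graph norms then transfers this compactness to $\dom(\DriftJ+\sigma^x_i)$, so that $\DriftJ+\sigma^x_i$ has compact resolvent as well (c.f.~Theorem~XIII.64 of~\cite{RESI4}).

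I expect the main obstacle to be precisely the relative-boundedness estimate for $\HI$ in the multimode setting, namely verifying that the single-photon interaction is genuinely and uniformly dominated by the quadratic photon energy $\HP$ across all cavities; once the relative bound zero is secured, the remaining steps amount to a direct invocation of standard perturbation and compactness results.
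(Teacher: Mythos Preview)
Your argument is correct and in fact slightly sharper than the paper's. Both proofs perturb $\HP$ by $B_i=\HA+\HI+\sigma^x_i$ and isolate the only nontrivial issue, the relative boundedness of $\HI$. The paper establishes by a direct basis computation on $\fset$ that $\HI$ is $\HP$-bounded with some relative bound $b<1$, then passes to quadratic forms via Theorem~X.18 of~\cite{RESI2} and concludes with the form-perturbation result Theorem~XIII.68 of~\cite{RESI4}. You instead exploit the $\sqrt{n}$ versus $n$ scaling to obtain relative bound \emph{zero} for $\HI$ (hence for $B_i$), which lets you stay at the operator level: Kato--Rellich gives $\dom(\DriftJ+\sigma^x_i)=\dom(\HP)$ with equivalent graph norms, and compactness of the resolvent is inherited through the compact embedding $\dom(\HP)\hookrightarrow\Hil_M$. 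Your route is more elementary and avoids the detour through forms; the paper's route would survive in situations where only form-bounds are available, but that extra generality is not needed here. The one point you should make explicit when writing this out is that the estimates obtained on $\fset$ extend to all of $\dom(\HP)$ by closure, since $\fset$ is a core for $\HP$ (Lemma~\ref{lem:selfad}); this is needed before Kato--Rellich can be invoked.
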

The proof is divided into three steps: First, we examine $\HP+\HI$. 
We show that on $\fset$, the operator $\HI$ is relatively $\HP$-bounded with relative bound strictly smaller than 1; In step 2 we add a bounded operator $\HA+\sigma^x_i$ with non-trivial action only on the atom Hilbert spaces $\mathbb{C}^2$ and show that a similar relation holds for the self-adjoint extensions of $ \HI+ \HA +\sigma_i^x$ and $\HP$. Lastly we conclude from the relative formbounds that the operators $\DriftJ+\sigma^x_i=\HP+\HI+\HA+\sigma^x_i$ have compact resolvent for all $i\in \{1, \dots, M \}$.

\begin{proof}[Proof of Lemma~\ref{lem:spec_non}]
	Let $\ket{\phi} \in \fset$. Then there exists $K\in \mathbb{N}_0$ such that 
	\begin{equation}
		\ket{\phi}= \sum\limits_{n, \dots, b_M} \phi_{n,\dots,b_M}\ket{n;\dots, b_M}
	\end{equation} where we introduced the sum
	\begin{equation}
		\sum_{n, \dots, b_M}:=\sum_{n=0}^{K} \sum_{n_1=0}^{n} \sum_{n_2=0}^{n-n_1}\dots \sum_{n_{M-1}=0}^{n-n_1-\dots-n_{M-2}} \sum_{b_1=0}^{\min \{1,n_1\}}\dots \sum_{b_M=0}^{\min\{1,n-n_1-\dots-n_{M-1}\} } .
	\end{equation}
	\textbf{Step 1:} Assume that $b,c\in \mathbb{R}$ and consider the term
	\begin{align}
		\|( b \HP+c)\ket{\phi} \|^2 & =\Big\|\sum_{n,\dots, b_M} \left(\sum_{i=1}^M a \, \oP{i} \,n_i +c \right) \phi_{n,\dots, b_M} \ket{n;n_1,\dots, b_M}  \Big\|^2 \nonumber\\
		& =\sum_{n,\dots, b_M} \Big| b \left(\sum_{i=1}^M \oP{i}\, n_i+c \right) \phi_{n,\dots, b_M}\Big|^2\; , \label{eq:spec_non_1}
	\end{align}
	that we want to find a lower bound on. Inserting the notation $\mu:=\min_i \oP{i} $ into Eq.~\eqref{eq:spec_non_1} gives
	\begin{align}
		\| \left( b \HP+c\right)\ket{\phi} \|^2& \ge \mu^2 \left( b K+\tfrac{c}{\mu}\right)^2 \sum_{n,\dots, b_M}| \phi_{n,\dots, b_M}|^2 \;.
	\end{align}
	We also want to find an upper bound on $\| \HI \ket{\phi} \|^2$. We compute
	\begin{align}
		\| \HI \ket{\phi} \|^2 & =\Big\| \sum_{i=1}^M \sum_{n, \dots, b_M} \phi_{n, \dots, b_M} \, \oI{i} \,\sqrt{n_i} \,\ket{n;n_1,\dots, 1-b_i, \dots, b_M} \Big\|^2 \nonumber\\
		& = \Big\| \sum_{i=1}^M \oI{i} \sum_{n, \dots, b_M} \phi_{n,\dots, 1-b_i, \dots, b_M} \sqrt{n_i} \, \ket{ n;n_1,\dots, b_i, \dots, b_M}\Big\|^2 \nonumber\\
		&=\sum_{n, \dots, b_M} \Big| \sum_{i=1}^M  \oI{i} \, \phi_{n,\dots, 1-b_i, \dots, b_M} \sqrt{n_i} \Big|^2\; .\label{eq:spec_non_2}
	\end{align}
	In the second line, we replaced the finite sum $\sum_{n, \dots, b_M}$ by an adapted sum over the same symbols, substituting $b_i$ with $1-b_i$ for every $i=1,\dots,M$. But since $b_i\in \{0,1\}$, both sums are identical. Defining $\lambda:=\max_i \oI{i}$ in Eq.~\eqref{eq:spec_non_2}, we can use the above sum replacement again to conclude that
	\begin{align}
		\| \HI \ket{\phi} \|^2& \le  \lambda^2\sum_{n, \dots, b_M} \Big| \sum_{i=1}^M \phi_{n,\dots, 1-b_i, \dots, b_M} \sqrt{n_i} \Big|^2 
		\le \lambda^2 M K \sum_{n, \dots, b_M}|\phi_{n, \dots, b_M}|^2 \;.\label{eq:spec_non_3}
	\end{align}
	It is easy to check that we can always choose real constants $c>0$ and $0< b < 1$ such that the maximum of $f(K)= \lambda^2 M K-\mu^2(bK + c/\mu )^2$ is smaller than 0, and hence $\lambda^2 M K\le \mu^2(aK + c/\mu )^2$ for all $K\in \mathbb{N}$. Combining this with Eqs.~\eqref{eq:spec_non_2} and \eqref{eq:spec_non_3} gives
	\begin{align*}
		\| \HI \ket{\phi} \|^2 & \le \lambda^2 M K \sum_{n, \dots, b_M}|\phi_{n, \dots, b_M}|^2 \le \mu^2 \left(bK + \tfrac{c}{\mu} \right)^2  \sum_{n, \dots, b_M}|\phi_{n, \dots, b_M}|^2 \le \| \left( b \HP+c\right)\ket{\phi} \|^2 .
	\end{align*}
	Hence there exist constants $c>0$ and $0< b< 1$ such that $\| \HI \ket{\phi} \|\le b\| \HP \ket{\phi}\|+ c\|\ket{\phi} \|$ for all $\ket{\phi} \in \fset$.\\
	\textbf{Step 2}: In Lemma~\ref{lem:selfad}, we showed that $\fset$ is a total set of analytic vectors for the operators $\HP$, $\HI$, $\HA$ and $\sigma^x_i$. Using basic properties of the closure, we can extend the result from step~1 to their self-adjoint extensions.
	Due to the boundedness of $\HA$ and $\sigma^x_i$, there exist constants $c>0$ and $0<b<1$ such that
	$\| (\HI+\HA+\sigma^x_i)\ket{\psi}\|\le b \|\HP \ket{\psi} \|+ (c+\|\HA \|+\|\sigma^x_i\|)\| \ket{\psi}\|$ for all $\ket{\psi}\in \dom(\HP)$. Since $\HP$ is positive, we apply Theorem X.18 from~\cite{RESI2} to find constants $\tilde{c}>0$ and $0<b<1$ such that 
	\begin{equation}
		|\langle \psi, (\HI+\HA+\sigma^x_i ) \psi\rangle |\le b \langle \psi, \HP \psi\rangle+ \tilde{c} \langle\psi,\psi \rangle 
	\end{equation} for all $\ket{\psi}$ in the form domain of $\HP$.\\
	\textbf{Step 3} is an application of Theorem XIII.68 from \cite{RESI4}. We showed that $\HP$ is a self-adjoint semi-bounded operator with compact resolvent and $\HI+\HA+\sigma^x_i$ is a symmetric form-bounded perturbation with relative form-bound strictly smaller than 1. Then we can do the same as in the proof of Theorem XIII.68 from \cite{RESI4} to conclude that the sum $H_P+\HI+\HA+\sigma^x_i$ has compact resolvent.
\end{proof}  

\begin{thm}[Recurrence of the JCH control system]\label{thm:rec_JCH}
	The quantum control system from Proposition~\ref{prop:JCH_con_sys} satisfies the recurrence property.
\end{thm}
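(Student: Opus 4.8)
The plan is to verify the recurrence condition (Definition~\ref{dfn:rec}) for each $H$ in the set $\{\DriftJ,\DriftJ+\sigma^z_i,\DriftJ+\sigma^x_i,\DriftJ+\HH{I},\DriftJ+\id\}$ separately, exploiting the spectral data we have just assembled. The unifying mechanism is that every relevant Hamiltonian possesses a complete orthonormal set of eigenvectors: for the operators commuting with $N$ this is Lemma~\ref{lem:spec_com}, and for the non-commuting operators $\DriftJ+\sigma^x_i$ it follows from Lemma~\ref{lem:spec_non} via the equivalence between compact resolvent and the existence of an eigenbasis with eigenvalues $\lambda_k\to\infty$ (Theorem~XIII.64 of~\cite{RESI4}). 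The operator $\DriftJ+\id$ differs from $\DriftJ$ only by a global phase $e^{it}$, which recurs trivially, so it inherits recurrence from $\DriftJ$.

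First I would fix such an $H$, write its spectral decomposition $H=\sum_k \lambda_k \ket{e_k}\bra{e_k}$, and reduce the recurrence statement to a single eigenvector $\ket{e_k}$, for which $e^{itH}\ket{e_k}=e^{it\lambda_k}\ket{e_k}$ acts purely as a phase. For a finite collection $\ket{\psi_1},\dots,\ket{\psi_f}$ and a prescribed $t_-\le 0$ and $\epsilon>0$, I would first approximate each $\ket{\psi_l}$ to within $\epsilon/3$ by a finite linear combination of eigenvectors, so that it suffices to control a \emph{finite} set of eigenvalues $\lambda_{k_1},\dots,\lambda_{k_N}$ appearing across all the approximants. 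The task then becomes: find $t_+\ge 0$ such that $e^{it_+\lambda_{k}}$ is simultaneously close to $e^{it_-\lambda_{k}}$ for each of these finitely many $\lambda_k$, since then
\begin{equation}
	\big\|(e^{it_-H}-e^{it_+H})\ket{\psi_l}\big\|\le \epsilon
\end{equation}
follows by the triangle inequality over the finite expansion.

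This reduces the whole theorem to an elementary statement about simultaneous Diophantine approximation of finitely many phases: given finitely many reals $\lambda_1,\dots,\lambda_N$ and a target time $t_-$, one can find arbitrarily large $t_+>0$ with $|e^{it_+\lambda_j}-e^{it_-\lambda_j}|$ small for all $j$ simultaneously. I would establish this by applying the Weyl equidistribution / Kronecker recurrence theorem (equivalently, Poincar\'e recurrence for the rotation flow $\tau\mapsto(e^{i\tau\lambda_1},\dots,e^{i\tau\lambda_N})$ on the torus $\mathbb{T}^N$): the point $(e^{it_-\lambda_1},\dots,e^{it_-\lambda_N})$ lies in the closure of the forward orbit $\{(e^{i\tau\lambda_1},\dots,e^{i\tau\lambda_N}):\tau\ge 0\}$, so some $t_+\ge 0$ brings the phases back arbitrarily close. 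Crucially the eigenvalues are real (the Hamiltonians are self-adjoint), which is exactly what makes the orbit live on the compact torus.

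The main obstacle is the interchange between the infinite-dimensional, possibly irrational spectrum and the required \emph{uniform} closeness over the finite family of vectors: I must first truncate each $\ket{\psi_l}$ in the eigenbasis and only afterwards invoke the torus-recurrence argument on the resulting finite eigenvalue set, rather than attempting to recur all eigenvalues at once (which would be impossible in general). For the commuting operators the eigenvalues are explicit and real by construction; for $\DriftJ+\sigma^x_i$ the eigenvalues are only known abstractly, but compact resolvent guarantees they are real, discrete, and that eigenvectors span $\Hil$, which is all the argument needs. Handling $\DriftJ+\id$ and noting that the case $H=\DriftJ$ already covers the drift alone complete the verification of Definition~\ref{dfn:rec} for every required Hamiltonian.
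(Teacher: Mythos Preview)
Your proposal is correct and follows essentially the same route as the paper: truncate each vector to finitely many eigenvectors using the eigenbases supplied by Lemmas~\ref{lem:spec_com} and~\ref{lem:spec_non}, then invoke recurrence on the resulting finite-dimensional piece via an $\epsilon/3$ argument. The only cosmetic difference is that the paper phrases the finite-dimensional step as recurrence of a one-parameter subgroup in the compact Lie group $\text{U}(Q_K\Hil_M)$, whereas you work directly with the eigenvalues and appeal to Kronecker recurrence on the torus $\mathbb{T}^N$; since in the eigenbasis the one-parameter group is diagonal, these are the same argument.
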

\begin{proof}
	Let $H\in \{\DriftJ,\DriftJ+\sigma^z_i, \DriftJ+ \sigma^x_i, \DriftJ+\HH{I} |i=1, \dots,M \}$. Lemmas~\ref{lem:spec_com} and~\ref{lem:spec_non} yield that for such $H$ there exists an orthonormal basis of eigenvectors $\ket{\psi_k}$ with eigenvalues $\lambda_k$ and $k=1,2\dots$. Let $t_-\le 0$, and $\epsilon > 0$ and for some $ f\in \mathbb{N}$ let $\ket{\phi_1},\dots,\ket{\phi_f}\in \Hil_M$. Then we can find a $K\in \mathbb{N}$ such that $\|(\id-Q_K) \ket{\phi_j}\|\le \epsilon /3 $ for all $j=1,\dots,f$ where $Q_K$ denotes the projection onto $\text{span} \{\ket{\psi_1},\dots, \ket{\psi_K}\}$. The space $Q_K\Hil_M$ is finite dimensional and invariant under $H$.  For $s\in \mathbb{R}$, the expression $U_{K,s}:=\exp(isH)|_{Q_K\Hil_M}$ defines a strongly-continuous one parameter group on $Q_K\Hil_M$ which maps the real parameter $s$ to a one-dimensional Lie subgroup of $\text{U}(Q_K\Hil_M)$. Hence in $Q_K\Hil_M$ the recurrence condition that we need is satisfied: we can find a $t_+\in \mathbb{R}$, with $t_+\ge 0$ such that $\|(U_{t_+}-U_{t_-})Q_K \ket{\phi_j} \|\le \epsilon /3$ for $j=1,\dots,f$. Thus calculate
	\begin{align*}
		\left\| \big(e^{it_+H}- e^{it_-H} \big)\ket{\phi_j}\right\| &\le \left\| \big(e^{it_+H}- e^{it_-H} \big)Q_K \ket{\phi_j} \right\| +\left\| \big(e^{it_+H}- e^{it_-H} \big)(\id-Q_K)\ket{\phi_j} \right\| \\
		&\le \left\| \big(U_{K,t_+}-U_{K,t_-} \big)Q_K \ket{\phi_j}\right\|+\left\| e^{it_+H}(\id -Q_K) \ket{\phi_j}\right\| +\left\| e^{it_-H} (\id-Q_K) \ket{\phi_j}\right\| \nonumber \\ 
		&\le \frac{\epsilon}{3}+\left\| e^{it_+H}\right\| \left\|(\id-Q_K) \ket{\phi_j}\right\| +\left\| e^{it_-H}\right\|  \left\| (\id-Q_K) \ket{\phi_j}\right\| \\
		&\le \frac{\epsilon}{3}+\frac{\epsilon}{3}+\frac{\epsilon}{3}=\epsilon \; .
	\end{align*}
	This shows that for every $t_-\le 0$ and every strong $\epsilon$--neighborhood $V_\epsilon$ of $\exp(it_-H)$ we can find a $t_+\ge 0$ such that $\exp(it_+H)$ is in $V_\epsilon$.
\end{proof}

\section{Controllability of JCH model with two cavities}\label{sec:2cavities}

Here we show that the control system from Proposition~\ref{prop:JCH_con_sys} for $M=2$ cavities is strongly operator controllable. This will later serve as an induction basis for the general statement in Theorem~\ref{thm:contr_JCH}. In the two cavity case, we use the symmetry strategy from Theorem~\ref{thm:sym_strategy}. 

\subsection{Notation and proof strategy}

Let indices L and R for left and right, respectively, label the two cavities in contrast to indices $i\in\{1,\dots,M\}$ from the previous sections. The charge type operator becomes
\begin{align}
	&N=a^*_La^{\phantom{*}}_L+ a^*_R a^{\phantom{*}}_R + \tfrac{1}{2} \left( \sigma^z_L+ \id +\sigma^z_R +\id \right)=N_L+N_R \qquad \text{ where }\\
	& N_L=a^*_La^{\phantom{*}}_L+\tfrac{1}{2} \left( \sigma^z_L+ \id \right) \quad \text{and }\quad N_R= a^*_R a^{\phantom{*}}_R +\tfrac{1}{2} \left(\sigma^z_R +\id \right).
\end{align} Let $P^n$ denote the eigenprojections of $N$ as before and introduce $Q^{\mu}$ and $S^{\mu}$ for the eigenprojections (corresponding to the eigenvalue $\mu\in \mathbb{N}_0$) of $N_L$ and $N_R$, respectively. 
Introduce
\begin{align}
	\G:&=\mathcal{G}(\DriftJ, \sigma^z_L, \sigma^z_R, \HHtwo, \id )\\
	\mathfrak{l}:&=\Lie{ i\DriftJ, i\sigma^z_L, i\sigma^z_R, i\HHtwo, i \,\id }
\end{align}throughout this section where $\DriftJ:= \sum_{i=L,R} \left[ \oP{i} a^*_ia^{\phantom{*}}_i +\oA{i} \sigma^z_i+ \oI{i} (a^*_i \sigma^-_i + a^{\phantom{*}}_i \sigma^+_i )\right]$ denotes the drift and $\HHtwo:=\HH{\{(L,R)\}}=a^*_La^{\phantom{*}}_R+ a^{\phantom{*}}_La^*_R$ the hopping, neglecting the dependence on the set $I$ in the notation. Define $\Hil:=(\text{L}^2 (\mathbb{R} ) \otimes \mathbb{C}^2 )^{\otimes 2}$ to be the two cavity Hilbert space.
\newline

As Theorem~\ref{thm:sym_strategy} gives a list of sufficient conditions for strong operator controllability let us restate these conditions for the quantum control system $(\DriftJ,\sigma^z_L, \sigma^z_R, \HHtwo, \id,\sigma^x_L,\mathcal{H}, \hat{\mathcal{A}}) $:
	\begin{itemize}
		\item Self-adjointness: for every $H\in \{\sigma^z_L, \sigma^z_R, \HHtwo, \id,\sigma^x_L\}$, the operator $\DriftJ+H$ is essentially self-adjoint on $\dom (\DriftJ)\cap \dom(H)$ \textcolor{b1}{(Lemma~\ref{lem:selfad} for$M\in \mathbb{N}$)}
		\item $N$ is a charge-type operator, \textcolor{b1}{(proven in Lemma~\ref{lem:N_chargetype} for general $M\in \mathbb{N}$)}
		\item Recurrence: the control system satisfies the recurrence property, \textcolor{b1}{(Theorem~\ref{thm:rec_JCH} for $M\in \mathbb{N}$)}
		\item Symmetry: $\DriftJ, \sigma^z_L, \sigma^z_R, \HHtwo,$ and $\id$ commute with $N$, \textcolor{b1}{(Lemma~\ref{lem:commutation} for $M\in \mathbb{N}$)}
		\item Adapted Lie algebra rank condition: the dynamical group generated by $\DriftJ, \sigma^z_L, \sigma^z_R, \HHtwo,$ and $\id$ contains $\SU(N)$.
		\item Breaking the Symmetry: $\sigma_L^x$ is complementary to $N$.
		\item $\dim \Hil^{(n)}\ge2$ for at least one $n$.
	\end{itemize}
We added in \textcolor{b1}{blue} which of them were already checked in the previous section for a general number of cavities $M\ge 2$.
The following Lemma considers the multiplicities of eigenvalues of $N$, $N_R$ and $N_L$ and includes a proof of the last condition.
\begin{lem}[Eigenvalues of $N$ and $N_L$]\label{lem:2cav_Hn}
	$\phantom{1}$
	\begin{itemize}
		\item[(i)] Eigenvalues $n\ge1$ of $N$ have multiplicity $4n$, i.e., $d_n=\dim \Hil^{(n)}=4n$. Hence $\dim \Hil^{(n)}\ge 2 $ for $n\ge 1$.
		\item[(ii)] Eigenvalues $n\ge n_L\ge1$ of $N_L^{(n)}$ have multiplicity 4, eigenvalues $n_L=0$ or $n_L=n$ have multiplicity 2. We can decompose $\Hil^{(n)}=\bigoplus_{n_L=0}^n Q^{n_L} \Hil^{(n)}$ where we use the upper index notation from Eq.~\eqref{eq:notation_decomp_hil}. The same holds when replacing $L$ with $R$.
	\end{itemize}
\end{lem}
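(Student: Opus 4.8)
The plan is a direct counting argument on the natural basis $\ket{m_L}\otimes\ket{b_L}\otimes\ket{m_R}\otimes\ket{b_R}$, which by Proposition~\ref{lem:commutation} is a joint eigenbasis of $N$, $N_L$ and $N_R$. On such a vector $N_L$ has eigenvalue $m_L+b_L$, $N_R$ has eigenvalue $m_R+b_R$, and $N$ has eigenvalue $m_L+b_L+m_R+b_R$. Since $N_L$ commutes with $N$, the two operators share an eigenbasis, so $\Hil^{(n)}$ decomposes orthogonally as $\Hil^{(n)}=\bigoplus_{n_L=0}^n Q^{n_L}\Hil^{(n)}$, where $Q^{n_L}\Hil^{(n)}$ collects exactly the basis vectors with $N_L$-eigenvalue $n_L$ and $N$-eigenvalue $n$. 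The whole statement then reduces to counting how many basis vectors sit in each block.

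I would prove $(ii)$ first and deduce $(i)$ by summation. Fix $n\ge1$ and $n_L\in\{0,\dots,n\}$, and set $n_R=n-n_L$. A basis vector lies in $Q^{n_L}\Hil^{(n)}$ if and only if $m_L+b_L=n_L$ and $m_R+b_R=n_R$. The number of pairs $(m_L,b_L)$ with $m_L\in\mathbb{N}_0$, $b_L\in\{0,1\}$ and $m_L+b_L=n_L$ equals $1$ when $n_L=0$ (only $(0,0)$) and $2$ when $n_L\ge1$ (namely $(n_L,0)$ and $(n_L-1,1)$); the same holds on the right with $n_R$. Multiplying the two counts gives the multiplicity of the $N_L$-eigenvalue $n_L$ inside $\Hil^{(n)}$: it is $2\cdot2=4$ when $1\le n_L\le n-1$, and it is $1\cdot2=2$ at $n_L=0$ (so $n_R=n\ge1$) and $2\cdot1=2$ at $n_L=n$ (so $n_R=0$). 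This is exactly the claim in $(ii)$, and the orthogonal decomposition stated there is the one just obtained; replacing $L$ by $R$ and $Q^{n_L}$ by $S^{n_R}$ gives the analogous statement for $N_R$ verbatim.

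Summing these block dimensions over $n_L$ then yields $(i)$: for $n\ge1$,
\begin{equation}
	d_n=\dim\Hil^{(n)}=\underbrace{2}_{n_L=0}+\underbrace{4(n-1)}_{1\le n_L\le n-1}+\underbrace{2}_{n_L=n}=4n\, ,
\end{equation}
so in particular $\dim\Hil^{(n)}=4n\ge4>2$ for every $n\ge1$, which also settles the final condition $(vi)$ needed to apply Theorem~\ref{thm:sym_strategy}.

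There is no deep obstacle here; the argument is a finite combinatorial count. The only point requiring care is the bookkeeping at the two boundary sectors $n_L=0$ and $n_L=n$: there one of the factors $(m,b)$ with $m+b=0$ admits only the single representation $(0,0)$ rather than two, which is precisely what lowers those two multiplicities from $4$ to $2$ and makes the total come out to $4n$ instead of $4(n+1)$.
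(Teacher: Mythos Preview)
Your argument is correct and is essentially the same direct combinatorial count the paper gives, just organized in the reverse order: you compute the block dimensions $\dim Q^{n_L}\Hil^{(n)}$ first and then sum to get $d_n=4n$, whereas the paper writes down $\Hil^{(n)}$ in the $\ket{n;n_L;b_L,b_R}$ coordinates, reads off $4(n+1)-4=4n$, and afterwards isolates the $n_L$-blocks. Your handling of the boundary sectors $n_L=0$ and $n_L=n$ is exactly the point the paper encodes in the constraints $b_L\le n_L$ and $n_L+b_R\le n$, so nothing is missing.
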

\begin{proof}
	(i) $\Hil^{(n)}=\text{span} \{\ket{n;n_L,b_L,b_R}| n,n_L\in \mathbb{N}_0, b_L,b_R\in \{0,1\}, n_L+b_R\le n, b_L\le n_L \}$  by direct calculation. Hence $\dim \Hil^{(n)}= 4(n+1)-4=4n$ if $n\ge 1 $ and $\dim \Hil^{(0)}=1$. The multiplicity of the eigenvalue $n$ corresponds to the dimension of the eigenspace $\Hil^{(n)}$. 
	
	(ii) Since $N_L$ commutes with $N$ we can perform a second direct sum decomposition of the Hilbert space where $Q^{n_L} \Hil^{(n)}=\text{span} \{ \ket{n;n_L;b_L,b_R}| b_L,b_R\in \{ 0,1\}, \, n_L+b_R\le n, \, b_L\le n_L \}$. This yields $\dim Q^{n_L}\Hil^{(n)} = 4$ if $1\le n_L \le n-1$ and $\dim Q^{n_L} \Hil^{(n)} = 2=\dim Q^{n_L}\Hil^{(n)}$ and is illustrated in Fig.~\ref{fig:blockdiag0} (B).
\end{proof}

\begin{figure}[ht]
	\centering
	\def\svgwidth{14cm}
	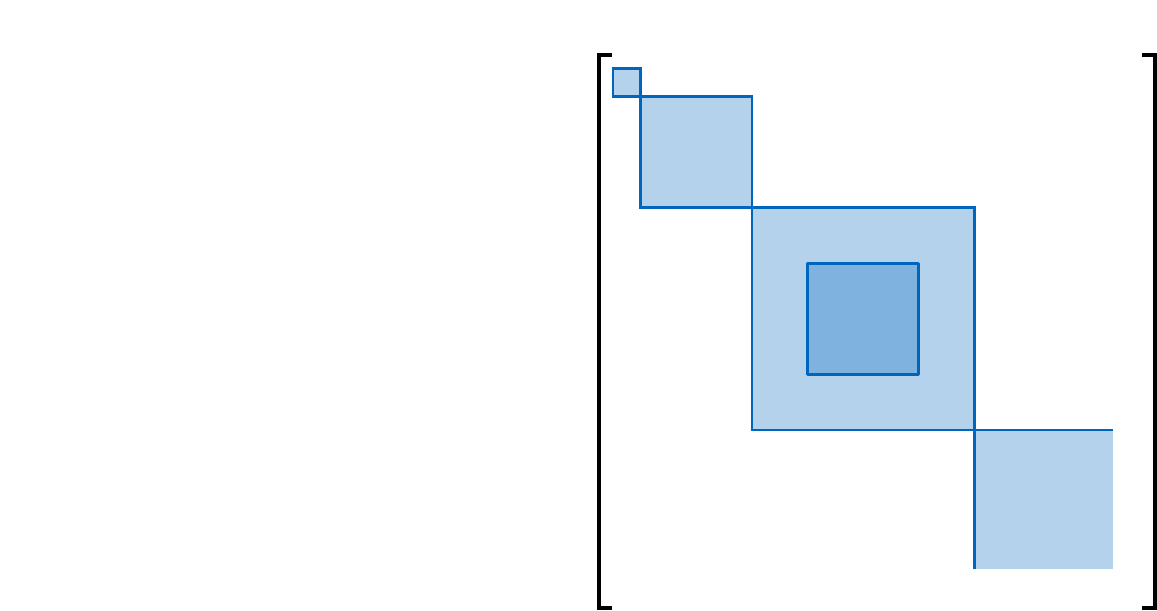
	\caption{(A) Block-diagonal structure of operators commuting with the total excitation operator $N$ is depicted in light blue shading; The size of the blocks corresponding to the eigenvalues $n$ of $N$ grows lienarly (as it is 1 for $n=0$ and $4n$ for $n\ge 1$).
		(B) Operators commuting with the excitation operator in the left and right cavity, $N_L$ and $N_R$, respectively admit yet another block-diagonal structure (darker blue shading). The size of these blocks is 2 or 4 (c.f~Lemma~\ref{lem:2cav_Hn}).}
	\label{fig:blockdiag0}
\end{figure}

In Section~\ref{sec:2cavities_rank} we show the `adapted Lie algebra rank condition', i.e., that $ \mathcal{SU}(N) \subseteq \mathcal{G}$. Proposition~\ref{prop:suff_trunc_compl} states that it is sufficient to show  
\begin{equation}\label{eq:lie_algebra_rank_K_2cav}
	\bigoplus_{n=0}^{K}\mathfrak{sl}(d_n; \mathbb{C}) \subseteq \lCK 
\end{equation}
for every $K\in \mathbb{N}_0$ where $d_n=4n$ for $n\ge 1$ and $d_0=1$ (Lemma~\ref{lem:2cav_Hn}) and where
\begin{equation}\label{eq:lCK_lCn_definition}
	\lCK := \LieC{\DriftJ^{[K]}, {\sigma^z_L}^{[K]}, {\sigma^z_R}^{[K]}, \HHtwo^{[K]}, \id^{[K]} }
\end{equation} with the notation from Eq.~\eqref{eq:notation_decomp_op}. 
We prove inclusion~\eqref{eq:lie_algebra_rank_K_2cav} by constructing generators of the l.h.s.~from elements of the r.h.s. In order to do so we exploit the following two symmetries of the system: firstly, all generators of $\mathfrak{l}$ commute with $N$ and secondly, the control Hamiltonians $\id$, $\DriftJ$, $\sigma^z_L$ and $\sigma^z_R$ admit yet another symmetry; they commute with $N_L$ and $N_R$. The block diagonal structure of such operators is illustrated in Fig.~\ref{fig:blockdiag0}.

\begin{dfn}[Operators $A_1,\dots,A_6$ and their projections]\label{dfn:JCH_2_A_i}
	Define the operators
	\begin{align}
		A_1&:= a^*_L \sigma^-_L  & A_2:&=a^{\phantom{*}}_L \sigma^+_L \nonumber\\
		A_3&:= a^*_R \sigma^-_R  & A_4:&=a^{\phantom{*}}_R \sigma^+_R \label{eq:JCH_2_Ai}\\
		A_5&:= a^*_L a^{\phantom{*}}_R  & A_6:&=a^{\phantom{*}}_L a^*_R \; .\nonumber
	\end{align}
	Let $Q^{n_L}$ and $S^{n_R}$ be the eigenprojections of $N_L$ and $N_R$, respectively, and introduce the notation 
	\begin{equation}
		A_j^{(n,n_L)}:=Q^{n_L} A_j^{(n)} Q^{n_L}= S^{n-n_L} A_j^{(n)} S^{n-n_L}
	\end{equation}
	for $j\in \{1,\dots,4\}$ and $0\le n_L\le n\in \mathbb{N}_0$.
	A third index denoting the excitation number $n_R$ of the second cavity can be omitted since it is fixed by the relation $n_L+n_R=n$ once the other two excitation numbers are given. Note that 
	\begin{equation}
		A_2^{(n,n_L)}=\left(A_1^{(n,n_L)}\right)^\text{T}\;, \quad A_4^{(n,n_L)}= \left(A_3^{(n,n_L)} \right)^\text{T}\;, \quad A_6^{(n)}=\left(A_5^{(n)}\right)^\text{T}\ .
	\end{equation}
\end{dfn}
Note that $A_1$ to $A_4$ admit both symmetries whereas $A_5$ and $A_6$ only commute with $N$. 
The proof idea for inclusion~\eqref{eq:lie_algebra_rank_K_2cav} is the following: First, we show in Lemma~\ref{lemma_technical_2} that certain projections of $A_1,\dots,A_6$ are in $\lCn$ and $\lCK$, respectively. Second, we restrict to one block for fixed $n$: Proposition~\ref{prop_ind_algebra_n} uses the above elements in $\lCn$ to construct generators of $\mathfrak{sl}(d_n; \mathbb{C})$. In Proposition~\ref{prop_algebra_K} we recursively extend this so that elements of $\lCK$ generate $\bigoplus_{n=0}^{K} \mathfrak{sl}(d_n; \mathbb{C})$. 

In Section~\ref{sec:2cavities_breaking} we show that the operator $\sigma^x_L$ is a complementary operator.

\subsection{Lie algebra of symmetry admitting operators}\label{sec:2cavities_rank}
This part proves inclusion~\eqref{eq:lie_algebra_rank_K_2cav}.

\begin{lem}[Elements of truncated Lie algebras]\label{lemma_technical_2}
	Let $A_1,\dots, A_6$, $Q^{n_L}$ and $S^{n_R}$ be as in Definition~\ref{dfn:JCH_2_A_i}.
	For natural numbers $n_L,n_R\le n\le K$ we get
	\begin{align}
		Q^{n_L}A_i^{(n)} Q^{n_L}\;, \quad S^{n_R}A_j^{(n)} S^{n_R}\; , \quad A_k^{(n)} &\;\in \mathfrak{l}^{(n)}_\mathbb{C} \label{eq:tec_incl_n} \\
		Q^{n_L} A_i^{[K]}Q^{n_L}\;, \quad S^{n_R}A_j^{[K]} S^{n_R}\;, \quad A_k^{[K]} &\; \in \mathfrak{l}^{[K]}_\mathbb{C} \label{eq:tec_incl_K}
	\end{align} where $i\in \{1,2 \},j \in \{3,4\}, k \in \{5,6\}$. We also have
	\begin{equation}\label{eq:technical_lemma_a*a}
		(a^*_ma^{\phantom{*}}_m)^{[K]}\in \LieC{ \id^{[K]}, (\sigma^z_m)^{[K]}}
	\end{equation} where $m\in\{L,R\}$.
\end{lem}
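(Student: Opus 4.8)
The plan is to reduce everything to two membership statements in the untruncated complexified algebra $\lC$ and then recover the truncated and the finer $N_L$-sector assertions by grading arguments. The organizing principle is a \textbf{homomorphism reduction}: since every generator $\DriftJ,\sigma^z_L,\sigma^z_R,\HHtwo,\id$ commutes with $N$ (Proposition on Symmetry), the truncation maps $A\mapsto A^{(n)}=P_nAP_n$ and $A\mapsto A^{[K]}$ restrict to Lie-algebra homomorphisms on the complexified commutant of $N$. Indeed, for $N$-commuting $A,B$ one has $P_nAP_n=AP_n$, whence $[A,B]^{(n)}=[A^{(n)},B^{(n)}]$, and similarly for $[K]$. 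These maps send the generators of $\lC$ to those of $\lCn$ and $\lCK$, so they carry $\lC$ onto $\lCn$ and $\lCK$. Consequently, for $A_5,A_6$ and for the unrefined $A_1,\dots,A_4$ it suffices to prove $A_k\in\lC$; the inclusions $A_k^{(n)}\in\lCn$ and $A_k^{[K]}\in\lCK$ follow by applying the two homomorphisms.

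First I would extract $A_1,\dots,A_4$ by a $\sigma^z$-grading. Since $\HA=\oA{L}\sigma^z_L+\oA{R}\sigma^z_R$ is a combination of generators, $\HP+\HI=\DriftJ-\HA\in\lC$. Now $\operatorname{ad}(\sigma^z_L)$ acts as a grading, $[\sigma^z_L,A_1]=-2A_1$ and $[\sigma^z_L,A_2]=2A_2$, while $\sigma^z_L$ commutes with the purely photonic $\HP,\HHtwo$ and with $A_3,A_4$. Applying $\operatorname{ad}(\sigma^z_L)$ once and twice to $\HP+\oI{L}(A_1+A_2)+\oI{R}(A_3+A_4)$ and solving the resulting linear system (a Vandermonde in the eigenvalues $\{-2,0,2\}$) isolates $A_1$ and $A_2$ in $\lC$; the same computation with $\sigma^z_R$ yields $A_3,A_4$. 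Feeding these back gives $\HI\in\lC$ and hence $\HP\in\lC$.

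Next I would split the hopping $\HHtwo=A_5+A_6$. The natural grading is the photon-number difference, since $[a^*_La^{\phantom{*}}_L-a^*_Ra^{\phantom{*}}_R,A_5]=2A_5$ and $[a^*_La^{\phantom{*}}_L-a^*_Ra^{\phantom{*}}_R,A_6]=-2A_6$, so I must first inject number operators into the algebra. The commutator $[A_1,A_2]=-a^*_La^{\phantom{*}}_L\sigma^z_L-\tfrac12\id-\tfrac12\sigma^z_L$ (and its $R$-analogue) places $a^*_La^{\phantom{*}}_L\sigma^z_L$ in $\lC$, and the auxiliary identity Eq.~\eqref{eq:technical_lemma_a*a} — understood blockwise as the statement that on each joint eigenspace of $N$ and $N_L$ one has $a^*_La^{\phantom{*}}_L=(n_L-\tfrac12)\id-\tfrac12\sigma^z_L$, i.e.\ the photon number is affine in $\sigma^z_L$ — lets me recover the pure number operators $(a^*_ma^{\phantom{*}}_m)^{[K]}$ from $\id^{[K]}$ and $(\sigma^z_m)^{[K]}$ on each block. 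With the photon-number difference now available, $\operatorname{ad}$ of it separates $A_5$ from $A_6$, completing $A_5,A_6\in\lC$ and, through the homomorphisms, $A_5^{(n)},A_6^{(n)}\in\lCn$ and $A_5^{[K]},A_6^{[K]}\in\lCK$.

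The remaining and most delicate claims are the single-sector pieces $Q^{n_L}A_i^{(n)}Q^{n_L}$ (and the $S^{n_R}$-analogues) for $i\in\{1,2\}$, $j\in\{3,4\}$, together with their $[K]$-versions. \textbf{Here lies the main obstacle}: the naive $\operatorname{ad}(N_L)$-grading is useless, because $A_1,A_2$ already commute with $N_L$ and thus sit entirely in grade zero, and a block-diagonal operator cannot be resolved into its individual diagonal blocks by commutators alone. My plan is to manufacture the spectral projections $Q^{n_L}$ \emph{restricted to the finite-dimensional block} $\Hil^{(n)}$: having $N_L^{(n)}\in\lCn$ with the finitely many distinct integer eigenvalues $0,1,\dots,n$, each $Q^{n_L}\!\restriction_{\Hil^{(n)}}$ is a Lagrange-interpolation polynomial in $N_L^{(n)}$. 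Producing the $N_L$-shift operators $A_5^{(n)},A_6^{(n)}$ and combining them with $A_1^{(n)}$ through nested commutators transports $A_1^{(n)}$ between adjacent $n_L$-sectors, after which a Vandermonde argument in the distinct weights $\sqrt{n_L}$ pins down the individual diagonal blocks. This sector-resolution — injecting enough diagonal structure to separate blocks carrying the same $\operatorname{ad}(N_L)$-weight — is the crux of the lemma, and the finite dimensionality of each $\Hil^{(n)}$ is exactly what makes it feasible.
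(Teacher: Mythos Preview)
Your homomorphism reduction and the extraction of $A_1,\dots,A_4$ via $\operatorname{ad}(\sigma^z_{L/R})$ are correct and parallel the paper's first step. The genuine gap is in your sector-resolution for $Q^{n_L}A_i^{(n)}Q^{n_L}$. The Lagrange-interpolation idea does not work here: writing $Q^{n_L}\!\restriction_{\Hil^{(n)}}$ as a polynomial in $N_L^{(n)}$ requires the \emph{associative} product, but $\lCn$ is only a Lie algebra, so $N_L^{(n)}\in\lCn$ does not deliver its spectral projections. Your fallback, transporting $A_1^{(n)}$ with $A_5^{(n)},A_6^{(n)}$, immediately hits $[A_5,A_1]=[a^*_La_R,a^*_L\sigma^-_L]=0$, so the shift mechanism is not available as described, and no ``weights $\sqrt{n_L}$'' arise from commutators. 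A correct repair along your lines does exist: since on each $Q^{n_L}$-block one has $[A_1,A_2]=-n_L\,\sigma^z_L$ (up to lower-order terms already in the algebra), iterating $\tfrac12\operatorname{ad}[A_1,A_2]$ on $A_1^{(n)}$ multiplies the $n_L$-block by $n_L$, producing $\sum_{n_L}n_L^{k}A_1^{(n,n_L)}\in\lCn$ for every $k$; a Vandermonde in the \emph{integers} $n_L=1,\dots,n$ then separates the blocks, and the same argument goes through for $[K]$.

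The paper sidesteps this entirely by a different mechanism. It never tries to place $Q^{n_L}$ inside $\lCn$. Instead it restricts to the subalgebra $\LieC{\DriftJ,\sigma^z_L}$, all of whose elements commute with $N_L$; on that subalgebra the map $X\mapsto Q^{n_L}XQ^{n_L}$ \emph{is} a Lie homomorphism. It then invokes the single-cavity Jaynes--Cummings result (Lemma~\ref{lem:JC_result}, i.e.\ Proposition~4.8 of \cite{keyl2014controlling}): the truncated JC generators $a^*_L\sigma^-_L+a_L\sigma^+_L$ and $\sigma^z_L$ already generate all $N_L$-block-diagonal traceless operators on the left cavity, so each $Q^{n_L}A_iQ^{n_L}$ automatically lies in $\LieC{\DriftJ^{(n)},(\sigma^z_L)^{(n)}}\subset\lCn$. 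This buys the sector resolution as a black box; your route is more self-contained but needs the corrected Vandermonde step above to close.
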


In order to prove this, we use the following result on the JC model.
\begin{lem}[Result on JC model; Corollary to Proposition 4.8 of \cite{keyl2014controlling}]\label{lem:JC_result}
	Define by $\mathfrak{m}^{[K']}$ the Lie algebra generated by the operators 
	\begin{equation}\label{eq:JC_generators}
		 i\sum_{\mu=0}^{K'}  Q_{\mu} (a^*_L \sigma^-_L  +a^{\phantom{*}}_L \sigma^+_L) Q_{\mu} \; , \; i \sum_{\mu=0}^{K'}Q_{\mu}\, \sigma^z_L Q_{\mu}
	\end{equation}
	and let $H$ be an anti-self-adjoint operator on $\operatorname{L}^2(\mathbb{R})\otimes\mathbb{C}^2$ such that $H\otimes \id\in\su(N_L) $. Then, for all $\mu\le K'$ we have that $Q_\mu (H\otimes \id)Q_\mu\in \mathfrak{m}^{[K']}$.
\end{lem}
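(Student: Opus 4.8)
The plan is to remove the inert right-cavity factor and reduce the statement to the single Jaynes--Cummings model, where it is precisely Proposition~4.8 of \cite{keyl2014controlling}. Since $N_L$ acts only on the left tensor factor $\mathcal{K}:=\operatorname{L}^2(\mathbb{R})\otimes\mathbb{C}^2$, we have $N_L=N^{\JC}\otimes\id$ with $N^{\JC}=a^*a+\tfrac12(\sigma^z+\id)$, so its eigenprojections factor as $Q_\mu=Q_\mu^{\JC}\otimes\id$. The operators $a^*_L\sigma^-_L+a^{\phantom{*}}_L\sigma^+_L$ and $\sigma^z_L$ also act only on the left factor, hence every generator of $\mathfrak{m}^{[K']}$ has the form (left-cavity operator)$\,\otimes\,\id$, and because $[T\otimes\id,T'\otimes\id]=[T,T']\otimes\id$ we get $\mathfrak{m}^{[K']}=\mathfrak{m}^{[K']}_{\JC}\otimes\id$, where $\mathfrak{m}^{[K']}_{\JC}$ is the single-cavity Lie algebra generated by $i\sum_\mu Q^{\JC}_\mu(a^*\sigma^-+a\sigma^+)Q^{\JC}_\mu$ and $i\sum_\mu Q^{\JC}_\mu\sigma^z Q^{\JC}_\mu$. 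Likewise $Q_\mu(H\otimes\id)Q_\mu=(Q^{\JC}_\mu H Q^{\JC}_\mu)\otimes\id$, and since $T\mapsto T\otimes\id$ is an injective Lie-algebra homomorphism, the claim is equivalent to $Q^{\JC}_\mu H Q^{\JC}_\mu\in\mathfrak{m}^{[K']}_{\JC}$ for every $\mu\le K'$.

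I would then make the single-cavity block structure explicit. For $\mu\ge1$ the eigenspace $Q^{\JC}_\mu\mathcal{K}$ is two-dimensional, spanned by the two states of total excitation $\mu$; for $\mu=0$ it is one-dimensional. A short computation shows that in each two-dimensional block the interaction $a^*\sigma^-+a\sigma^+$ restricts to an off-diagonal matrix with entries $\sqrt{\mu}$ and $\sigma^z$ restricts to a fixed traceless diagonal matrix, i.e.\ they play the roles of $\sqrt{\mu}\,\sigma^x$ and $\sigma^z$. Writing $X_\mu,Y_\mu,Z_\mu$ for the copies of $i\sigma^x,i\sigma^y,i\sigma^z$ supported on block $\mu$, the two generators become the ``diagonal'' elements $A=\sum_{\mu\ge1}\sqrt{\mu}\,X_\mu$ and $B=\sum_{\mu\ge1}Z_\mu$ (plus an inert scalar on the one-dimensional block $\mu=0$, which commutes with everything and drops out of all brackets). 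The hypothesis $H\otimes\id\in\su(N_L)$ forces each left block $Q^{\JC}_\mu H Q^{\JC}_\mu$ to be anti-self-adjoint and traceless (for $\mu=0$ it must vanish), i.e.\ an element of $\mathfrak{su}(2)_\mu:=\operatorname{span}_{\mathbb{R}}\{X_\mu,Y_\mu,Z_\mu\}$. It therefore suffices to prove the inclusion $\bigoplus_{\mu=1}^{K'}\mathfrak{su}(2)_\mu\subseteq\mathfrak{m}^{[K']}_{\JC}$.

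The heart of the matter, and the step I expect to be the main obstacle, is separating the finitely many blocks out of the two simultaneously acting diagonal generators; this is exactly where the distinctness of the Jaynes--Cummings couplings $\sqrt{\mu}$ enters, and it is the content of Proposition~4.8 of \cite{keyl2014controlling}. Abbreviating $Z(f):=\sum_{\mu\ge1}f(\mu)Z_\mu$ and similarly $X(f),Y(f)$, the relations $[X(f),Y(g)]=-2Z(fg)$ and its cyclic permutations hold blockwise. Starting from $A=X(h)$ with $h(\mu)=\sqrt{\mu}$ and $B=Z(\mathbf{1})$, one computes $[B,A]=-2Y(h)$ and then $[X(h),Y(h)]=-2Z(h^2)=-2Z(\mu)$, and iterating produces $Z(\mu^k)\in\mathfrak{m}^{[K']}_{\JC}$ for all $k\ge1$. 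Since the eigenvalues $\mu\in\{1,\dots,K'\}$ are distinct and nonzero, the functions $\mu\mapsto\mu^k$ for $k=1,\dots,K'$ are linearly independent (a Vandermonde argument) and hence span all functions on these $K'$ points, so suitable real linear combinations isolate each individual $Z_{\mu_0}$. Finally $[Z_{\mu_0},A]=-2\sqrt{\mu_0}\,Y_{\mu_0}$ with $\sqrt{\mu_0}\neq0$ yields $Y_{\mu_0}$, and one further bracket yields $X_{\mu_0}$, so $\mathfrak{su}(2)_{\mu_0}\subseteq\mathfrak{m}^{[K']}_{\JC}$ for each $\mu_0$. Taking the direct sum over $\mu_0=1,\dots,K'$ gives the required inclusion and hence the lemma.
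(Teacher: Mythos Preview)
Your proposal is correct and follows the same route as the paper: both reduce the two-cavity statement to the single left cavity by factoring $Q_\mu=Q_\mu^{\JC}\otimes\id$ and observing that all generators act trivially on the right factor, and then invoke the single-cavity result (Proposition~4.8 of \cite{keyl2014controlling}). The paper's proof stops at the citation, whereas you go further and actually supply a self-contained proof of that proposition via the Vandermonde argument on the blockwise couplings $\sqrt{\mu}$; this is more than the paper asks for here, but it is correct and makes the lemma independent of the external reference.
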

\begin{proof}
	The statement follows from Proposition~4.8 of~\cite{keyl2014controlling} which states: the operators in Eq.~\eqref{eq:JC_generators} generate the algebra of anti-self-adjoint operators on $\Hil$ that are traceless on the eigenspaces of $N_L$.
\end{proof}

\begin{proof}[Proof of Lemma~\ref{lemma_technical_2}]
	Calculating $A_1-A_2= (2\oI{L})^{-1} \left[\DriftJ, \sigma^z_L \right]=\tfrac{1}{2}\left[a^*_L \sigma^-_L + a^{\phantom{*}}_L \sigma^+_L, \sigma^z_L\right]$ and $A_1+A_2= (4\oI{L})^{-1} \big[[\DriftJ,  \sigma^z_L], \sigma^z_L \big]$ shows that the operators $A_1$ and $A_2$ are elements of the Lie algebra $\lC$. Besides, they can be written as 
	\begin{align}
		A_1=B_1\otimes \id \;,\qquad & A_2=B_2\otimes \id ,
	\end{align}
	for the Jaynes-Cummings operators $B_1=a^* \otimes \sigma^-$ and $B_2=a \otimes \sigma^+$ acting on $\operatorname{L}^2(\mathbb{R})\otimes\mathbb{C}^2$. We apply Lemma~\ref{lem:JC_result} to the operators $i(B_1+B_2)$ and $B_1-B_2$. This gives 
	\begin{align*}\label{eq:incl_PAP0}
		Q_{n_L} ( C\otimes \id) Q_{n_L} \in \Lie{ i\sum_{\mu=0}^{K'}  Q_{\mu} (a^*_L \sigma^-_L  +a^{\phantom{*}}_L \sigma^+_L) Q_{\mu}, i \sum_{\mu=0}^{K'}Q_{\mu}\, \sigma^z_L Q_{\mu}  }
	\end{align*} 
	for all  $n_L\le K'$ and for $C\in \{(B_1-B_2), i(B_1+B_2)\}$.
	On the level of complexified Lie algebras we find the following: for all $n_L\le K'$ and $j\in \{1,2\}$ 
	\begin{equation}\label{eq:incl_PAP1}
		Q^{n_L} A_j Q^{n_L}  \in \LieC{ \sum_{\mu=0}^{K'} Q^{\mu} \DriftJ Q^{\mu}, \sum_{\mu=0}^{K'} Q^{\mu} \sigma^z_L Q^{\mu} } .
	\end{equation} 
	The generators of the Lie algebra on the right hand side are block diagonal with respect to $N$, $N_L$, and $N_R$ and act non-trivially only on the left cavity. For such a block diagonal operator $A$ we find that $\sum_{\mu=0}^{K'}Q^{\mu} A^{(K')} Q^{\mu}= A^{(K')}$. Hence we can perform two manipulations of inclusion~\eqref{eq:incl_PAP1}. On the one hand we project to $P^{K'}$ on both sides and set $K'=n$ to obtain
	\begin{equation}\label{eq:incl_PAP2}
		Q^{n_L} A_j^{(n)} Q^{n_L} \in \LieC{  \DriftJ^{(n)},(\sigma^z_L)^{(n)} } \subset \lCn
	\end{equation} for all $n_L\le n$. On the other hand, we can project to $P^{K'}$ on both sides of inclusion~\eqref{eq:incl_PAP1} and sum from $K'=0$ to $K'=K$. The same argument as before (block diagonal operators) allows for pulling this sum into the Lie algebra: For all $n_L\le K$ we find
	\begin{equation}\label{eq:incl_PAP3}
		Q^{n_L} A_j^{[K]} Q^{n_L} \in  \LieC{ \DriftJ^{[K]}, (\sigma^z_L)^{[K]} }\subset \lCK . 
	\end{equation} 
	For $A_3$ and $A_4$, the proofs are completely analogous, using $A_3=\id \otimes B_1$, $A_4=\id\otimes B_2 $ and replacing $L$ with $R$. 
	
	In order to prove the inclusions for $A_5$ and $A_6$, consider 
	\begin{align}\label{eq:JCH_2_proofAi_1}
		A_5 =\tfrac{1}{2}\left( \HHtwo + [a^*_La^{\phantom{*}}_L, \HHtwo]\right)\; , \quad A_6 =\tfrac{1}{2}\left(\HHtwo- [a^*_La^{\phantom{*}}_L, \HHtwo]\right).
	\end{align}
	This shows that $A_5$ and $A_6$ are elements of the Lie algebra generated by $a^*_La^{\phantom{*}}_L$ and $\HHtwo$. But since the photonic Hamiltonian $a^*_La^{\phantom{*}}_L$ is not among the control Hamiltonians, we will first show that on the level of truncated Hilbert spaces, we can construct it from operators commuting with $N_L$ and $N_R$. We know from the Jaynes-Cummings model that $P^{n_L}_\JC i(a^*a\otimes \id) P^{n_L}_\JC$ can be written as a linear combination of $iP^{n_L}_\JC$ and $P^{n_L}_\JC i(\id\otimes \sigma^z)P^{n_L}_\JC$ for every $n_L\in\mathbb{N}$ (c.f~Remark after Proposition 4.8 from~\cite{keyl2014controlling}). This can be again converted to a statement on complexified Lie algebras on the JCH model; hence
	\begin{equation}\label{eq:JCH_2_proofAi_2}
		Q^{n_L} a^*_La^{\phantom{*}}_L Q^{n_L}\in \LieC{ Q^{n_L} ,  Q^{n_L} \sigma^z_LQ^{n_L} }\; .
	\end{equation} 
	The generators of the Lie algebra and $a^*_La^{\phantom{*}}_L$ are again block diagonal with respect to $N$, $N_L$, and $N_R$. Hence the same considerations as for $A_1$ and $A_2$ apply. We project both sides of inclusion~\eqref{eq:JCH_2_proofAi_2} to $P_n$ and sum from $n_L=0$ to $n_L=n$. This or additionally summing from $n=0$ to $n=K$ yield  
	\begin{align}
		(a^*_La^{\phantom{*}}_L)^{(n)} \in \LieC{\id^{n}, (\sigma^z_L)^{(n)} } \quad \text{or} \quad (a^*_La^{\phantom{*}}_L)^{[K]}\in \LieC{ \id^{[K]}, (\sigma^z_L)^{[K]}}\; ,
	\end{align} respectively. Combining this with Eq.~\eqref{eq:JCH_2_proofAi_1} gives the desired inclusions $ A^{(n)}_5 \in \lCn$ and $A_5^{[K]} \in \lCK$.\\
	We obtain the inclusions for $A_6$ by replacing $L$ with $R$.
\end{proof}

The operators from Lemma~\ref{lemma_technical_2} are used as a starting point to construct generators of $\mathfrak{sl}(d_n; \mathbb{C})$ and $ \bigoplus_{n=0}^{K}\mathfrak{sl}(d_n; \mathbb{C})$. We start with $\mathfrak{sl}(d_n; \mathbb{C})$ in the following Proposition and hence fix the total excitation number $n$.

\begin{prop}[Lie algebra inclusion for fixed $n$]\label{prop_ind_algebra_n}
	For all $n\in \mathbb{N}_0$, we have $ \mathfrak{sl}(d_n,\mathbb{C}) \subseteq \lCn $.
\end{prop}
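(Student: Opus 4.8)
The plan is to prove the inclusion by realizing $\mathfrak{sl}(d_n,\mathbb{C})$ through a connected collection of matrix units. I would fix $n\ge 1$ (the case $n=0$ is trivial since $d_0=1$ and $\mathfrak{sl}(1,\mathbb{C})=0$) and work in the basis of $\Hil^{(n)}$ labelled by the quantum numbers $(n_L,b_L,b_R)$ with $0\le n_L\le n$, $b_L\le n_L$, $b_R\le n-n_L$ supplied by Lemma~\ref{lem:2cav_Hn}. Writing $E_{vw}$ for the matrix unit sending basis vector $w$ to $v$, the goal becomes: exhibit, inside $\lCn$, the units $E_{vw}$ and $E_{wv}$ for a set of pairs $\{v,w\}$ forming a connected graph on the $d_n=4n$ basis vectors. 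Once this is done, the standard identities $[E_{uv},E_{vw}]=E_{uw}$ (for $u\neq w$) propagate connectivity to all off-diagonal units, and $[E_{vw},E_{wv}]=E_{vv}-E_{ww}$ produces the whole Cartan part, which together span $\mathfrak{sl}(d_n,\mathbb{C})$ and give the claim.

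The operators supplied by Lemma~\ref{lemma_technical_2} are sums of matrix units rather than individual ones, so the first technical step is a separation argument. I would build a diagonal operator $D\in\lCn$ whose eigenvalues, and also whose pairwise differences, are pairwise distinct. The combination $D=\alpha\,N_L^{(n)}+\beta\,(\sigma^z_L)^{(n)}+\gamma\,(\sigma^z_R)^{(n)}$ lies in $\lCn$, because $N_L^{(n)}=(a^*_La^{\phantom{*}}_L)^{(n)}+\tfrac12(\sigma^z_L)^{(n)}+\tfrac12\id^{(n)}$ with $(a^*_La^{\phantom{*}}_L)^{(n)}\in\lCn$ by Eq.~\eqref{eq:technical_lemma_a*a}; on the state $(n_L,b_L,b_R)$ its eigenvalue is $\alpha n_L+\beta(b_L-\tfrac12)+\gamma(b_R-\tfrac12)$, so a generic (e.g.\ rationally independent) choice of $\alpha,\beta,\gamma$ makes both the eigenvalues and their differences distinct. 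Then for any $X\in\lCn$ the iterated brackets $\operatorname{ad}_D^k(X)$ scale each component $E_{vw}$ of $X$ by $(\lambda_v-\lambda_w)^k$, and a Vandermonde inversion extracts every individual matrix unit occurring in $X$. Applying this to the operators $A_1^{(n)},\dots,A_6^{(n)}\in\lCn$ of Lemma~\ref{lemma_technical_2} isolates all their matrix units.

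Reading off the edges then gives the following. The Jaynes--Cummings operators $A_1,A_2$ (respectively $A_3,A_4$) connect $(n_L,b_L,b_R)$ to $(n_L,1-b_L,b_R)$ (respectively $(n_L,b_L,1-b_R)$), i.e.\ single-label atomic flips inside one $N_L$-block; the hopping operators $A_5,A_6$ connect $(n_L,b_L,b_R)$ to $(n_L\pm1,b_L,b_R)$ whenever the coefficient $\sqrt{(n_L-b_L+1)(n-n_L-b_R)}$ (resp.\ $\sqrt{(n_L-b_L)(n-n_L-b_R+1)}$) is non-zero, i.e.\ along adjacent blocks with the internal labels fixed. Both directions of every edge are present since $A_2,A_4,A_6$ are the transposes of $A_1,A_3,A_5$. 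I would then verify connectivity: each interior block $1\le n_L\le n-1$ is internally connected through its $b_L$- and $b_R$-flips; each two-dimensional boundary block $n_L\in\{0,n\}$ is internally joined by its single surviving flip; and the hopping edges with non-vanishing coefficients tie every block to its neighbour, so the whole graph is connected.

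The step I expect to be the main obstacle is the bookkeeping of the hopping coefficients at the block boundaries: which of the four a priori possible hopping edges between two adjacent blocks actually survive depends on whether the source still carries a photon ($n-n_L-b_R\ge1$ for $A_5$, $n_L-b_L\ge1$ for $A_6$), and one must check that enough of them remain non-zero — in conjunction with the within-block flips — to keep the graph connected for \emph{every} $n$, including the degenerate two-dimensional boundary blocks. The separation argument and the concluding matrix-unit generation lemma are then routine.
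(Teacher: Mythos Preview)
Your overall framing---matrix units plus graph connectivity---is sound and is also the mechanism underlying the paper's argument. The gap is in your separation step. Your diagonal operator $D=\alpha\,N_L^{(n)}+\beta\,(\sigma^z_L)^{(n)}+\gamma\,(\sigma^z_R)^{(n)}$ does have pairwise distinct eigenvalues, but its eigenvalue on the state $(n_L,b_L,b_R)$ is an \emph{affine} function of these labels, so $\lambda_v-\lambda_w$ depends only on the increment $(\Delta n_L,\Delta b_L,\Delta b_R)$. Each $A_i$ has a fixed increment vector across all of its matrix units (for instance every unit in $A_5^{(n)}$ has increment $(1,0,0)$, and every unit in $A_1^{(n,n_L)}$ has increment $(0,-1,0)$). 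Hence every matrix unit occurring in a given $A_i^{(n)}$ has the \emph{same} $\lambda_v-\lambda_w$, the operator is an eigenvector of $\operatorname{ad}_D$, and the Vandermonde inversion returns only $A_i^{(n)}$ itself. No choice of $\alpha,\beta,\gamma$ (rationally independent or not) avoids this, because all the diagonal operators you have available at this stage---$\id^{(n)}$, $(\sigma^z_L)^{(n)}$, $(\sigma^z_R)^{(n)}$, $(a^*_La^{\phantom{*}}_L)^{(n)}$, $(a^*_Ra^{\phantom{*}}_R)^{(n)}$---are affine in $(n_L,b_L,b_R)$; iterating $\operatorname{ad}$ with several of them does not help, since the joint weight still only sees the increment vector.

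This is not a cosmetic issue. On an interior block $1\le n_L\le n-1$ the restricted operators $A_1^{(n,n_L)},\dots,A_4^{(n,n_L)}$ act as $\sigma^{\pm}_L\otimes\id$ and $\id\otimes\sigma^{\pm}_R$ on $\mathbb{C}^2\otimes\mathbb{C}^2$ and therefore generate only $\mathfrak{su}(2)\oplus\mathfrak{su}(2)\subsetneq\mathfrak{sl}(4,\mathbb{C})$; in particular no single off-diagonal matrix unit lies in their span. Extracting individual units genuinely requires the cross-block interaction with $A_5^{(n)},A_6^{(n)}$. The paper exploits precisely this: the two-dimensional boundary block $n_L=0$ is the one place where $A_3^{(n,0)}=\sqrt{n}\,E_{1,2}$ is already a single matrix unit, and commuting this seed with $A_5^{(n)}$ (e.g.\ $[A_3^{(n,0)},A_5^{(n)}]\sim E_{1,4}$) produces further isolated units, which are then propagated block by block via an explicit induction on $n_L$. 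Your route can be repaired by adopting this boundary-bootstrapping idea; attempting instead to manufacture a ``non-linear'' diagonal element such as $(\sigma^z_L\sigma^z_R)^{(n)}$ inside $\lCn$ would itself presuppose individual cross-block units, so it leads back to the same construction.
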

Remember that $ \mathfrak{sl}(d_n,\mathbb{C})$ denotes the special linear algebra of degree $d_n=4n$ for $n\ge1$ and $d_0=1$ over the complex numbers and that $\lCn $ is the complex Lie algebra generated by $\DriftJ^{(n)}$, ${\sigma^z_L}^{(n)}$, ${\sigma^z_R}^{(n)}$, $\HHtwo^{(n)}$, and $\id^{(n)}$. We make some preliminary considerations for the proof:
 \begin{itemize}
 	\item[1)] We introduce an ordering of the basis vectors $\ket{n;n_L;b_L,b_R}$ of $\Hil^{(n)}$ that is given by the following bijective map
 	\begin{equation}\label{eq:basis_ordering}
	 	\ket{n;n_L;b_L,b_R} \mapsto 4n_L+2b_L+b_R-1+2\delta_{n_L,0}-\delta_{n_L,n}\delta_{b_L,1}.
 	\end{equation}For fixed $n$, the basis is mapped to the set $\{1,2,\dots,4n\}$. 
 	Vectors are arranged by firstly fixing $n_L$ and then ordering $(b_L,b_R)$ as follows: $(0,0),(0,1),(1,0),(1,1)$.
 	\item[2)] Operators acting on the finite dimensional subspaces $\Hil^{(n)}$ can be represented as complex $(4n\times4n)$-matrices. Using the basis ordering from 1), introduce the matrix $E_{i,j}$ for $i,j\in \{1, \dots, 4n\}$ which is defined by its entries $k,l$ given by 
 	\begin{equation}
	 	(E_{i,j})_{k,l}= \delta_{i,k} \delta_{j,l}=
	 	\begin{cases}
		 	1, & \text{if } k=i \text{ and } l=j\\
		 	0, & \text{otherwise}.
	 	\end{cases}
 	\end{equation}
 	with the Kronecker delta $\delta$. 
 	This simplifies calculations with commutators as they yield
 	\begin{equation}
	 	[E_{i,j},E_{k,l}]=E_{i,l}\delta_{j,k}-E_{k,j}\delta_{l,i} \quad .
 	\end{equation}
 	Additionally, the above matrices can be used to define generators of $\mathfrak{sl} \left( 4n; \mathbb{C}\right)$. Take $E_{i,i+1}$ and $E_{i+1,i}=(E_{i,i+1})^\text{T}$ for $i\in\{1,\dots, 4n-1\}$. They generate $\mathfrak{sl} \left( 4n; \mathbb{C}\right)$ since one can construct arbitrary traceless $4n\times4n$-matrices by taking commutators of them.
 \end{itemize}
Hence the proof idea is to construct $E_{i,i+1}$ for $i\in\{1,\dots, 4n-1\}$ via mutual commutators and linear combinations of $A_1^{(n,n_L)},\dots, A_4^{(n,n_L)}$, $A_5^{(n)}$ and $A_6^{(n)}$.

\begin{figure}[ht]
	\centering
	\def\svgwidth{14cm}
	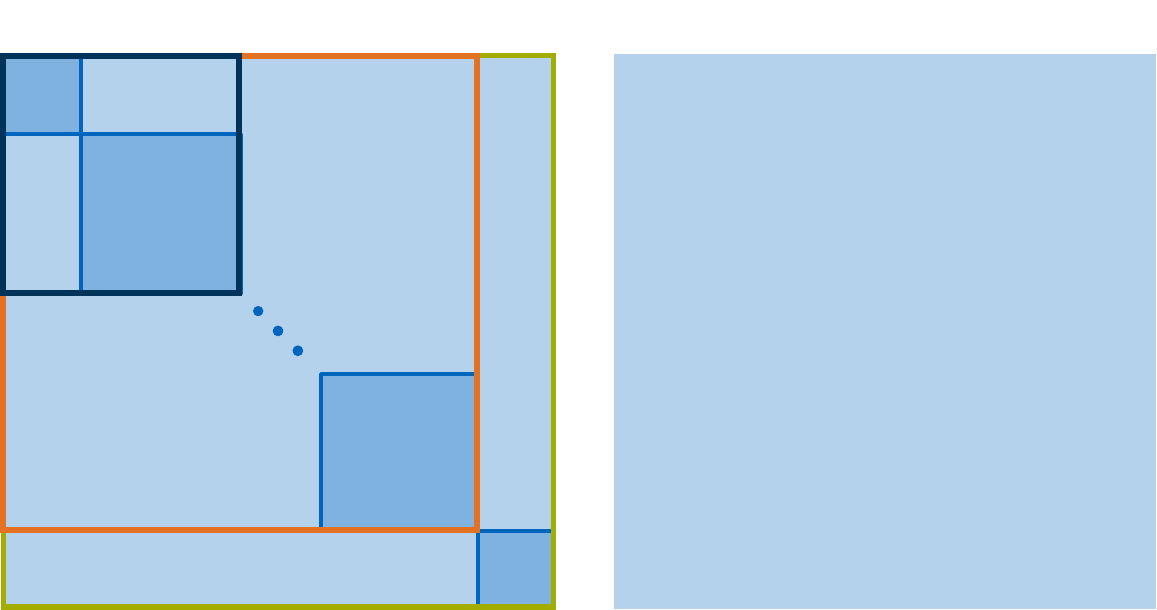
	\caption{(A) Block for fixed total excitation number $n\ge1$ with basis ordering as in~\eqref{eq:basis_ordering}; Blockdiagonal structure of operators commuting with $N$ in light blue and with $N_L$ and $N_R$ in darker blue shading. Illustration of the proof of Proposition~\ref{prop_ind_algebra_n}: Via multiple commutators of $A_i^{(n,n_L)}$ and $A^{(n)}_j$ where $i=1,\dots,4$ and $j=5,6$ we construct the matrices $E_{i,i+1}$. Note that $A_i^{(n,n_L)}$ for $i=1,\dots,4$ are block-diagonal w.r.t.~$N_L $ and $N_R$ and have non-zero entries only inside one of the $n_L$-blocks; in contrast, $A_5^{(n)}$ and $A_6^{(n)}$ only commute with $N$ and hence live in the rest of the light blue block for fixed $n$. The matrices $E_{i,i+1}$ have only one non-zero entry that is on the first offdiagonal and they generate the Lie algebra of all traceless matrices confined the light blue block with green edging, i.e., $\mathfrak{sl}(4n,\mathbb{C})$; First we construct those $E_{i,i+1}$ that are confined to the block with dark blue edging which serves as an induction basis; We inductively extend this to all $E_{i,i+1}$ in the orange edging and in a final step to the remaining ones in the green edging. 
	(B) Illustration of the inductive step where generators $E_{i,i+1}$ for $i\le 4m+1$ (inside block with dashed orange edging) are in $\mathfrak{sl}(4n,\mathbb{C})$ by hypothesis and where we construct those $E_{i,i+1}$ with $i=4m+2,\dots, 4m+5$.}
	\label{fig:blockdiag}
\end{figure}

\begin{proof}[Proof of Proposition~\ref{prop_ind_algebra_n}]
	For $n=0$, the statement is trivial since the set $\mathfrak{sl}(1;\mathbb{C})$ is empty.
	
	For $n\ge1$, we have $d_n=4n$. The statement to show is hence $\mathfrak{sl}(4n;\mathbb{C})\subseteq \lCn $. Due to Lemma~\ref{lemma_technical_2}, the operators  $A_1^{(n,n_L)},\dots,A_4^{(n,n_L)}$, $A_5^{(n)}$, and $A_6^{(n)}$ are elements of $\lCn$ for $n_L\le n$. We use them to construct the generators $E_{i,i+1}$ of $\mathfrak{sl}(4n;\mathbb{C})$ -- the matrices $E_{i+1,i}$ are constructed analogously by taking transposes. For $n=1$, straightforward calculation gives the desired result. 
	
	For $n\ge 2$, we prove by induction on $m$ that $\mathfrak{sl}(2+4m;\mathbb{C})\subseteq \lCn$ for all $1\le m \le n-1$. The proof idea is graphically illustrated in Fig.~\ref{fig:blockdiag}.
	
	The \textbf{induction basis} corresponds to the statement for $m=1$. The matrices
	\begin{align}
		A_1^{(n,1)}&=E_{3,5}+E_{4,6}&=\left(A_2^{(n,1)}\right)^\text{T}\\
		A_3^{(n,0)}&=\sqrt{n}E_{1,2} &=\left(A_4^{(n,0)}\right)^\text{T}\\
		A_3^{(n,1)}&=\sqrt{n-1}\left(E_{3,4}+E_{5,6}\right)&=\left(A_4^{(n,1)}\right)^\text{T}\\
		A_5^{(n)}&=\sqrt{n}  E_{1,3}+ \sqrt{n-1}E_{2,4} + F_5 &=\left(A_6^{(n)}\right)^\text{T}
	\end{align} are in $\lCn$. 
	Note that in contrast to $A_1^{(n,0)}$, $A_4^{(n,0)}$, $A_3^{(n,1)}$, and $A_4^{(n,1)}$, the operators $A_5^{(n)}$ and $A_6^{(n)}$ are not constrained to $i,j\le6$ (dark-blue edged block in Fig.~\ref{fig:blockdiag} (A)), but they have non-zero entries in the rest of $\Hil^{(n)}$; the matrix entries of $F_5$ satisfy $(F_5)_{i,j}= 0$ for $i,j\le2$ but may be non-zero for $i,j\ge3$. 
	Thus calculate the commutators
	\begin{align}
		\left[A_3^{(n,0)},A_5^{(n)} \right]&=\left[ \sqrt{n}E_{1,2}, \sqrt{n} E_{1,3}+ \sqrt{n-1} E_{2,4} \right]= \sqrt{n}\sqrt{n-1} E_{1,4}\nonumber\\
		\left[A_4^{(n,0)},A_5^{(n)} \right]&=\left[ \sqrt{n}E_{2,1},\sqrt{n}  E_{1,3}+ \sqrt{n-1}E_{2,4}\right]=\sqrt{n}^2 E_{2,3} \; . \label{eq:comm_A_3A_5} 
	\end{align}
	Hence $ E_{1,4}$, and $E_{2,3}$ as well as their transposes are elements of $ \lCn$. 
	Use them to calculate
	\begin{align*}
		\left[E_{1,4},A_4^{(n,1)}\right]&= \left[E_{1,4},\sqrt{n-1}\left(E_{4,3}+E_{6,5}\right) \right]= \sqrt{n-1} E_{1,3}
	\end{align*}
	and furthermore $\left[E_{3,1},E_{1,4}\right]=E_{3,4}$, and $\big[E_{4,3},A_1^{(n,1)}\big]=\left[E_{4,3}, E_{3,5}+E_{4,6}\right]= E_{4,5}$. 
	The last missing generators are $E_{5,6}$ and $E_{6,5}$. They are obtained via
	\begin{align*}
		\left[E_{5,4},A_1^{(n,1)}\right]+E_{3,4}&=\left[E_{5,4}, E_{3,5}+E_{4,6}\right]+E_{3,4}=E_{5,6} \;.
	\end{align*}
	For the \textbf{inductive step} assume that $\mathfrak{sl}(2+4m)\subseteq \lCn$ for some $1\le m \le n-2$ and show that $\mathfrak{sl}(2+4(m+1))\subseteq \lCn$. By induction hypothesis $E_{i,j}\in \lCn$ for $i,j\le 4m+2$. For suitable constants $\alpha_1,\dots, \alpha_3\in \mathbb{R}$ we have
	\begin{align*}
		&\left[ E_{4m+2,4m-1},A_5^{(n)} \right]+\alpha_1 E_{4m-2,4m-1}= \sqrt{m}\sqrt{n-m}E_{4m+2,4m+3}\\
		&\left[ E_{4m+2,4m},A_5^{(n)}\right]+\alpha_2 E_{4m-2,4m}= \sqrt{m-1}\sqrt{n-m}E_{4m+2,4m+4}\\
		&\left[ E_{4m+2,4m+1},A_5^{(n)}\right]+\alpha_3 E_{4m-2,4m+1}= \sqrt{m}\sqrt{n-m-1}E_{4m+2,4m+5}\; .
	\end{align*}
	Calculating commutators of these operators with each other yields for suitable $\alpha_4\in \mathbb{R}$:
	\begin{align*}
		&\left[E_{4m+3,4m+2},E_{4m+2,4m+4}\right]=E_{4m+3,4m+4}\\
		&\left[E_{4m+4,4m+2},E_{4m+2,4m+5}\right] = E_{4m+4,4m+5}\\
		&\left[E_{4m+5,4m+4},A_1^{(n,m+1)}\right]+\alpha_4 E_{4m+3,4m+4} \sim E_{4m+5,4m+6}\; .
	\end{align*}
	We just showed that $E_{i,i+1}$ for $i\in \{4m+2,\dots,4m+5\}$ and their transposes are elements of $ \lCn$. Hence $\mathfrak{sl}(2+4(m+1))\subseteq \lCn$ by linearity which concludes the induction. \\
	In order to finish the proof of Proposition~\ref{prop_ind_algebra_n}, we have to construct the missing generators $E_{4n-1,4n-2}$, $E_{4n-1,4n}$ (and their transposes) of $\mathfrak{sl}(4n;\mathbb{C})$. Due to the previous induction, matrices $E_{i,j}$ are elements of $\lCn$ for $i,j\le 4n-2$. Hence calculate
	\begin{align*}
		&\left[ E_{4n-2,4n-5}, A_5^{(n)} \right]+\alpha_5 \,E_{4n-6,4n-5}\sim E_{4n-2,4n-1}\\
		&\left[ E_{4n-2,4n-3}, A_5^{(n)} \right]+\alpha_6 E_{4n-6,4n-3}\sim E_{4n-2,4n}\\
		& \left[E_{4n-1,4n-2}, E_{4n-2,4n}\right]=E_{4n-1,4n}
	\end{align*}which concludes the proof.
\end{proof}

\begin{prop}[Adapted Lie algebra rank condition]\label{prop_algebra_K}
	For all $K\in \mathbb{N}_0$, we have $\bigoplus\limits_{n=0}^K \mathfrak{sl} (d_n;\mathbb{C})\subseteq \lCK$.
\end{prop}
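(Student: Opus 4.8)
The plan is to deduce the global inclusion from the fibrewise inclusion of Proposition~\ref{prop_ind_algebra_n} by a decoupling argument that exploits the fact that the simple blocks have pairwise distinct sizes. First I would record that, since all five generators $\DriftJ^{[K]}$, $(\sigma^z_L)^{[K]}$, $(\sigma^z_R)^{[K]}$, $\HHtwo^{[K]}$ and $\id^{[K]}$ commute with $N$ (Proposition~\ref{lem:commutation}), they are block diagonal with respect to the eigenprojections $P_n$, so that
\begin{equation}
	\lCK \subseteq \bigoplus_{n=0}^K \mathfrak{gl}(d_n;\mathbb{C}),
\end{equation}
where the $n$-th summand acts on $\Hil^{(n)}$ and $d_n=4n$ for $n\ge 1$, $d_0=1$ (Lemma~\ref{lem:2cav_Hn}). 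For each $n\le K$ the block projection $p_n(A):=A^{(n)}=P_nAP_n$ restricts to a Lie-algebra homomorphism on the block-diagonal operators, and it carries each generator $X^{[K]}$ to the generator $X^{(n)}$ of $\lCn$; hence $p_n(\lCK)=\lCn\supseteq\mathfrak{sl}(d_n;\mathbb{C})$ by Proposition~\ref{prop_ind_algebra_n}.

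Next I would remove the central block-scalar directions by passing to the derived algebra. Since $\bigoplus_n\mathbb{C}\,\id^{(n)}$ is central in $\bigoplus_n\mathfrak{gl}(d_n;\mathbb{C})$ and every commutator is block-traceless, one has
\begin{equation}
	[\lCK,\lCK]\subseteq\bigoplus_{n=0}^K\mathfrak{sl}(d_n;\mathbb{C}).
\end{equation}
On the other hand $p_n$ commutes with taking commutators, so $p_n([\lCK,\lCK])=[\lCn,\lCn]\supseteq[\mathfrak{sl}(d_n;\mathbb{C}),\mathfrak{sl}(d_n;\mathbb{C})]=\mathfrak{sl}(d_n;\mathbb{C})$, the last equality holding because $\mathfrak{sl}$ is perfect. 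Combining the two displays, $\mathfrak{g}:=[\lCK,\lCK]$ is a subalgebra of the semisimple Lie algebra $\mathfrak{s}:=\bigoplus_{n=1}^K\mathfrak{sl}(d_n;\mathbb{C})$ (the $n=0$ factor is zero) whose projection onto each simple summand is surjective. It then suffices to prove $\mathfrak{g}=\mathfrak{s}$, for this yields $\bigoplus_{n=0}^K\mathfrak{sl}(d_n;\mathbb{C})\subseteq\lCK$ as claimed.

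The crux is the decoupling statement: a subalgebra $\mathfrak{g}$ of a direct sum $\bigoplus_i\mathfrak{s}_i$ of simple Lie algebras whose projection onto each factor is surjective equals the whole sum, provided the $\mathfrak{s}_i$ are pairwise non-isomorphic. I would prove this by induction on the number of factors using Goursat's lemma: writing $\mathfrak{s}=\mathfrak{s}_{i_0}\oplus\mathfrak{b}$, the induction hypothesis shows $\mathfrak{g}$ surjects onto $\mathfrak{b}$, and Goursat's lemma produces an ideal $\mathfrak{i}$ of $\mathfrak{s}_{i_0}$ and an ideal $\mathfrak{j}$ of $\mathfrak{b}$ with $\mathfrak{s}_{i_0}/\mathfrak{i}\cong\mathfrak{b}/\mathfrak{j}$; simplicity of $\mathfrak{s}_{i_0}$ forces $\mathfrak{i}\in\{0,\mathfrak{s}_{i_0}\}$, while an isomorphism of the simple algebra $\mathfrak{s}_{i_0}$ onto a quotient of $\mathfrak{b}$ (a sum of some of the remaining, non-isomorphic factors) is impossible unless $\mathfrak{i}=\mathfrak{s}_{i_0}$, so $\mathfrak{s}_{i_0}\subseteq\mathfrak{g}$; splitting off this ideal and applying the induction hypothesis to $\mathfrak{g}\cap\mathfrak{b}$ finishes the argument. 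The pairwise non-isomorphy is precisely where the block sizes enter: by Lemma~\ref{lem:2cav_Hn} the nonzero blocks have dimensions $d_n=4n$, which are distinct for distinct $n\ge1$, so the algebras $\mathfrak{sl}(4n;\mathbb{C})$ have pairwise distinct dimensions $(4n)^2-1$ and are therefore pairwise non-isomorphic. I expect this decoupling step to be the main obstacle, since without the distinctness of the $d_n$ the subalgebra $\mathfrak{g}$ could be a proper diagonal subalgebra and the inclusion would fail; everything else is bookkeeping built on Proposition~\ref{prop_ind_algebra_n}.
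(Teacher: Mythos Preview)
Your argument is correct and takes a genuinely different route from the paper's. The paper proves the inclusion constructively: using the $N_R$-projected operators $S^{n_R}A_j^{[K]}S^{n_R}$ from Lemma~\ref{lemma_technical_2}, it first isolates the top block $n=K$ (setting $n_R=K$ forces $n=K$ because $n_R\le n\le K$), reruns the matrix-unit construction of Proposition~\ref{prop_ind_algebra_n} inside that block to obtain $\{0\}\oplus\mathfrak{sl}(4K;\mathbb{C})\subseteq\lCK$, and then descends recursively to smaller $n$ by setting $n_R=K-1,K-2,\dots$ and subtracting the matrix units already built. Your approach instead feeds Proposition~\ref{prop_ind_algebra_n} into a structural decoupling: the derived algebra $[\lCK,\lCK]$ lands in the semisimple sum $\bigoplus_{n}\mathfrak{sl}(4n;\mathbb{C})$, surjects onto each simple factor via $p_n$, and a Goursat induction forces equality because the factors $\mathfrak{sl}(4n;\mathbb{C})$ are pairwise non-isomorphic. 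Your method is cleaner, correctly identifies the only possible obstruction (a diagonal embedding between isomorphic blocks, which the size disparity $d_n=4n$ rules out), and avoids repeating any of the explicit commutator bookkeeping. The paper's method, on the other hand, is more elementary in that it needs no abstract Lie theory beyond the definition, it is closer in spirit to an explicit control protocol, and it would survive unchanged even if two blocks happened to have equal dimension, since the isolation is achieved by the physical projector $S^{n_R}$ rather than by a dimension count.
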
 

The Lie algebra $\lCK$ is defined in Eq.~\eqref{eq:lCK_lCn_definition}. The construction of generators of $\bigoplus_{n=0}^K \mathfrak{sl} (4n;\mathbb{C})$ differs slightly from Proposition~\ref{prop_ind_algebra_n}. In the proof of the former all excitation numbers ($n$, $n_R$, and $n_L$) of the operators $A_1,\dots A_4$ were fixed. However, now only one of them is, either $n_L$ or $n_R$. 
If one considers the commutator of an operator with fixed $n_L$ and one with fixed $n_R$, then the result is no longer confined to any excitation number. We avoid the above-mentioned problems by exploiting the relation $n_L+n_R=n$ which ensures that $n_R\le n$.

\begin{proof}[Proof of Proposition~\ref{prop_algebra_K}]
	The statement is trivial for $K=0$ since $\mathfrak{sl}(1;\mathbb{C})$ is empty.
	
	For $K\ge 1$ the proof idea is to start on $\mathfrak{sl} (4K;\mathbb{C})$ and recursively construct the other generators of $\mathfrak{sl} (4n;\mathbb{C})$ where $n\le K$. For $0\le n \le K$, the matrices $\{0\}\oplus E_{l,l\pm1}^{(n)}\oplus \{0\}$ with $1\le l\le 4n$ are generators of $\bigoplus_{n=0}^K \mathfrak{sl} (4n;\mathbb{C})$. 
	They are constructed from $\lCK$ as in Proposition~\ref{prop_ind_algebra_n} with one difference: We replace $A_i^{(n)}$ by $A_i^{[K]}$ for $i=1,\dots,6$. Lemma~\ref{lemma_technical_2} gives that for $n_L,n_R\le K$
	\begin{equation}
		Q^{n_L} A_i^{[K]} Q^{n_L},\, S^{n_R}A_j^{[K]} S^{n_R},\, A_k^{[K]} \in \lCK 
	\end{equation} where $i=1,2$, $j=3,4$, and $k=5,6$.
	
	Let us start by setting $n_R=K$, i.e.,~by considering the operators $S^{n_R=K} A^{[K]}_jS^{n_R=K}\in \lCK$ for $j=3,4$. They act non-trivially only on the subspace $\Hil^{(n=K)}$, since the restrictions $n_R\le n\le K$ and $n_R=K$ imply that $n=K$. Now, proceed analogously to the proof of Proposition~\ref{prop_ind_algebra_n} beginning with the calculations from Eq.~\eqref{eq:comm_A_3A_5}, but replace $A_i^{(n)}$ by $A_i^{[K]}$. We find
	\begin{align}
		\left[S^{K} A^{[K]}_3 S^{K},A_5^{[K]} \right]&\sim E^{(K)}_{1,4}\; , & \left[S^{K} A^{[K]}_4 S^{K},A_6^{[K]} \right]&\sim E^{(K)}_{4,1}\; , \nonumber\\
		\left[S^{K} A^{[K]}_4 S^{K},A_5^{[K]} \right]&\sim E^{(K)}_{2,3}\;, & \left[S^{K} A^{[K]}_3 S^{K},A_6^{[K]} \right]&\sim E^{(K)}_{3,2} \; . \label{eq:algebra_K_firstgenerator}
	\end{align}
	We thereby constructed the first generators on $\Hil^{(n=K)}$. 
	Following the proof of Proposition~\ref{prop_ind_algebra_n} the other generators of $\mathfrak{sl}(4K;\mathbb{C})$ can be traced back to concatenated commutators of elements in $\lCK$ with the results of Eq.~\eqref{eq:algebra_K_firstgenerator}. Hence $\{0\}\oplus \mathfrak{sl}(4K; \mathbb{C})\subseteq \lCK$ by linearity.
	
	Now proceed with $n_R=K-1$: In Eq.~\eqref{eq:algebra_K_firstgenerator} replace $S^{K} A^{[K]}_3 S^{K}$ by $S^{K-1} A^{[K]}_3 S^{K-1}$. The results are linear combinations of (one of) $E^{(K)}_{1,4}$, $E^{(K)}_{3,2}$, $ E^{(K)}_{2,3}$, or $E^{(K)}_{4,1}$ and (one of) $E^{(K-1)}_{1,4}$, $E^{(K-1)}_{3,2}$, $E^{(K-1)}_{2,3}$, or $E^{(K-1)}_{4,1}$. But since the former are already in $\lCK$, we can simply subtract them from the result and hence the latter also are. Use these generators of $\{0\}\oplus \mathfrak{sl}(4(K-1); \mathbb{C})\oplus \{0\}$ again as a starting point to construct all others. 
	Recursively apply this procedure until $n_R=0$ to construct all generators of $\bigoplus_{n=0}^K \mathfrak{sl} (d_n;\mathbb{C})$ and thereby conclude the proof.
\end{proof}

\subsection{Breaking the symmetry}\label{sec:2cavities_breaking}
What still needs to be found is the complementary operator. Note that this operator has to break the U(1) symmetry, i.e., it must not commute with $N$. 
A natural candidate is the atom operator $\sigma^x_L$, since it interchanges the atom's excited and ground state in cavity $L$.
Let us define the subspaces $\mathcal{H}_0:=\{0\}$  and
\begin{align*}
	\mathcal{H}_+ &:= \text{span} \left\{ \ket{n;n_L;1,b_R}\in\mathcal{H} | n,n_L \in \mathbb{N}, b_R\in \{0,1\}, n_L+b_R\le n \right\} ,\\
	\mathcal{H}_- &:= \text{span} \left\{ \ket{n;n_L;0,b_R}\in\mathcal{H} |  n,n_L \in \mathbb{N}_0, b_R\in \{0,1\}, n_L+b_R\le n\right\} ,
\end{align*} such that $\mathcal{H} = \mathcal{H}_- \oplus \mathcal{H}_0 \oplus \mathcal{H}_+$. Define $E_\pm$ as projections onto $\Hil_\pm$ and $\mathcal{H}^{(n)}_\pm:=P^{(n)}\mathcal{H}_\pm$.

\begin{prop}[Complementarity] \label{prop:complementary}
	The operator $\sigma^x_L$ is complementary to $N$.
\end{prop}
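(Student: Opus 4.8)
The plan is to verify the four conditions of Definition~\ref{dfn:compl} for $H=\sigma^x_L$, using the decomposition $\mathcal{H}=\mathcal{H}_-\oplus\mathcal{H}_0\oplus\mathcal{H}_+$ just introduced with $\mathcal{H}_0=\{0\}$. Conditions (i)--(iii) reduce to bookkeeping in the basis $\ket{n;n_L;b_L,b_R}$, whereas condition (iv) carries the genuine content and is where the work lies.

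For (i), observe that $E_\pm$ is the spectral projection onto the $b_L=1$ and $b_L=0$ eigenspaces of the left atom, while $P^{(n)}$ projects onto the total-excitation eigenvalue $n$; both are diagonal in the basis $\ket{n;n_L;b_L,b_R}$ and depend on independent quantum numbers, so $[E_\pm,P^{(n)}]=0$. Condition (ii) is vacuous since $\mathcal{H}_0=\{0\}$. For (iii), $\sigma^x_L$ is bounded, so $\dom(N)\subseteq\mathcal{H}=\dom(\sigma^x_L)$ is automatic. The key computation is
\[
  \sigma^x_L\,\ket{n+1;n_L;1,b_R}=\ket{n;n_L-1;0,b_R},
\]
obtained by flipping the left atom from $\ket 1$ to $\ket 0$ while leaving the left photon number $n_L-1$ unchanged, so the total excitation drops by one. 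This yields the intertwining relation $\sigma^x_L P_+^{(n+1)}\ket\psi=P_-^{(n)}\sigma^x_L\ket\psi$ (the $b_L=1$, $N=n+1$ part of $\psi$ is exactly the preimage, under $\sigma^x_L$, of the $b_L=0$, $N=n$ part of $\sigma^x_L\psi$), and shows that $\sigma^x_L$ maps the basis of $P_+^{(n+1)}\mathcal{H}$ bijectively onto the basis of $P_-^{(n)}\mathcal{H}$. Since $\sigma^x_L$ is unitary, $P_-^{(n)}\sigma^x_L P_+^{(n+1)}$ is then a partial isometry with source $P_+^{(n+1)}$ and target $P_-^{(n)}$, as required.

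The main obstacle is condition (iv). First I would identify the relevant subspace: $\mathcal{H}^{(0)}=P^{(0)}\mathcal{H}$ is the one-dimensional vacuum line $\mathbb{C}\ket g$ with $\ket g=\ket{0;0;0,0}$, and $\mathcal{H}_+^{(1)}=P_+^{(1)}\mathcal{H}$ is the one-dimensional line $\mathbb{C}\ket e$ with $\ket e=\ket{1;1;1,0}$, the only $N=1$ state with $b_L=1$. Thus $\mathcal{H}^{(0)}\oplus\mathcal{H}_+^{(1)}=\operatorname{span}\{\ket g,\ket e\}$ is two-dimensional and its one-dimensional projections form a Bloch sphere. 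On this subspace $\sigma^x_L$ swaps $\ket g\leftrightarrow\ket e$, hence acts as the Pauli matrix $\sigma^x$, so $\exp(it\sigma^x_L)$ realises the full one-parameter group of rotations about the $x$-axis of the sphere. To obtain a second, independent axis I would exhibit suitable $U\in\SU(N)$ commuting with $P^{(0)}\oplus P_+^{(1)}$: the determinant-one constraint on the one-dimensional sector $\mathcal{H}^{(0)}$ forces $U\ket g=\ket g$, while on the four-dimensional sector $\mathcal{H}^{(1)}$ (here $d_1=4$ by Lemma~\ref{lem:2cav_Hn}) one is free to assign $\ket e$ an arbitrary phase $e^{i\theta}$, compensating the determinant on the three-dimensional orthogonal complement $\mathcal{H}^{(1)}\ominus\mathcal{H}_+^{(1)}$ and taking $U=\id$ on every other charge sector. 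Such $U$ acts on $\operatorname{span}\{\ket g,\ket e\}$ as $\operatorname{diag}(1,e^{i\theta})$, i.e.\ as rotations about the $z$-axis.

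Finally I would conclude transitivity: rotations about two distinct axes generate all of $SO(3)$, which acts transitively on the one-dimensional projections of a two-dimensional space. Hence the group generated by $\{\exp(it\sigma^x_L)\}$ and the admissible $\SU(N)$ elements acts transitively on $\mathcal{H}^{(0)}\oplus\mathcal{H}_+^{(1)}$, establishing (iv). The only points requiring care are checking that the phase-compensating $U$ genuinely lies in $\SU(N)$ (block diagonal with respect to $N$, and of determinant one on every eigenspace $P^{(n)}\mathcal{H}$) and that it commutes with $P^{(0)}\oplus P_+^{(1)}$, both of which hold by construction.
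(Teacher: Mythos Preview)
Your proof is correct and, for conditions (i)--(iii), follows essentially the same bookkeeping as the paper in the basis $\ket{n;n_L;b_L,b_R}$.

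For condition (iv) you take a genuinely different---and more complete---route. You correctly identify $\mathcal{H}^{(0)}\oplus\mathcal{H}_+^{(1)}=\operatorname{span}\{\ket g,\ket e\}$ with $\ket e=\ket{1;1;1,0}$ (the paper's $\ket{1;1;0,1}$ is a typo: it violates $n_L+b_R\le n$ and has $b_L=0$), observe that $\sigma^x_L$ restricts to the Pauli $\sigma^x$ on this qubit, and then \emph{use} the $\SU(N)$ part of the group: you explicitly construct $U\in\SU(N)$ commuting with $P^{(0)}\oplus P_+^{(1)}$ that acts as $\operatorname{diag}(1,e^{i\theta})$ on $\operatorname{span}\{\ket g,\ket e\}$, absorbing the phase in the three-dimensional complement $\mathcal{H}^{(1)}\ominus\mathcal{H}_+^{(1)}$. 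Two independent rotation axes on the Bloch sphere then yield transitivity on rank-one projections. The paper instead asserts that the relevant $\SU(N)$ elements contribute nothing and that the group generated by $\exp(it\sigma^x_L)$ alone already acts transitively; as written this is not quite right, since a single one-parameter subgroup produces only rotations about one axis of the sphere. Your exploitation of the $\SU(N)$ phase freedom---made possible precisely because $d_1=4>1$---is the clean way to close that gap.
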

\begin{proof}
	 We check $(i)$--$(vi)$ from Definition~\ref{dfn:compl} for the decomposition $\mathcal{H} = \mathcal{H}_- \oplus \mathcal{H}_0 \oplus \mathcal{H}_+$.	 It is easy to see that $E_\pm$ commutes with $P_n$ and that $P_0 E_-=P_0$. Defining $P^{(n)}_{\pm}=P_n E_\pm$ we find that they are non-zero for all $n \in \mathbb{N}$. Condition $(i)$ of Definition~\ref{dfn:compl} is trivial because the subspace $\mathcal{H}_0$ is empty. 
	 Since $\sigma^x_L$ is a bounded operator we have $ \dom(\sigma^x_L)=\mathcal{H}$ and the first statement of $(ii)$ is also true.
	 To prove that $\sigma^x_L P_+^{(n+1)}\ket{\psi}= P_-^{(n)}\sigma^x_L \ket{\psi}$ holds for any $\ket{\psi} \in \mathcal{H}$ and $n>1$, consider the action of $\sigma^x_L$ on basis vectors of $\mathcal{H}$:
	 \begin{align*}
		 & \sigma^x_L \ket{n;n_L;b_L,b_R}= \ket{n-2b_L+1;n_L-2b_L+1;1-b_L,b_R}.
	 \end{align*}
	 The operator $\sigma^x_L$ changes the total excitation number by lowering or raising the atomic excitation number of the left cavity if $b_L=1$ or $b_L=0$, respectively. The short calculations 
	 \begin{align*}
		 P_-^{(m)} \sigma^x_L & \ket{n;n_L;b_L,b_R}= 
		 \begin{cases}
			 \ket{n-1;n_L-1;0,b_R}, & \text{if }b_L=1 \text{ and } n-1=m\\
			 0, & \text{otherwise, }
		 \end{cases}\\
		 \sigma^x_L P_+^{(m+1)}& \ket{n;n_L;b_L,b_R}= 
		 \begin{cases}
			 \ket{n-1;n_L-1;0,b_R}, & \text{if }b_L=1 \text{ and } n-1=m\\
			 0, & \text{otherwise}
		 \end{cases}
	 \end{align*} 
	 yield that $P_-^{(m)} \sigma^x_L \ket{n;n_L;b_L,b_R}= \sigma^x_L P_+^{(m+1)} \ket{n;n_L;b_L,b_R}$. By linearity, $P_-^{(n)} \sigma^x_L \ket{\psi}= \sigma^x_L P_+^{(n+1)} \ket{\psi}$ for all $\ket{\psi} \in \mathcal{H}$ and $n\ge 1$. Since the operator $P_-^{(m)} \sigma^x_L P_+^{(m+1)} $ is bounded and we have 
	 \begin{equation*}
		 P_-^{(m)} \sigma^x_L P_+^{(m+1)} \ket{n;n_L;b_L,b_R}= 
		 \begin{cases}
			 \ket{n;n_L;b_L,b_R}, & \text{if }b_L=1 \text{ and } n=m+1\\
			 0, & \text{otherwise},
		 \end{cases}
	 \end{equation*}
	 its restriction to the orthogonal complement of the kernel of $P_-^{(m)} \sigma^x_L P_+^{(m+1)} $  
	 is an isometry. Thus, it is a partial isometry with $P_+^{(m+1)}$ as its source and $P_-^{(m)}$ as its target projection.
	 Condition $(iii)$ is satisfied; this can be easily seen by examining the subspaces $\mathcal{H}_{\phantom{-}}^{(0)}= \text{span} \left\{ \ket{0;0;0,0}\right\}$ and $\mathcal{H}_+^{(1)}=\text{span} \left\{ \ket{1;1;0,1}\right\}$. They are one-dimensional and thus, there are no traceless unitaries in $\SU(N)$ that commute with $P_{\phantom{-}}^{(0)}\oplus P_+^{(1)}$. The space of one-dimensional projections onto $\mathcal{H}_{\phantom{-}}^{(0)}$ and $\mathcal{H}_+^{(1)}$ is simply spanned by $P_{\phantom{-}}^{(0)}$ and $P_+^{(1)}$ which are exactly those projections the group generated by $\exp(it\sigma^x_L)$ acts transitively on.	
\end{proof}

\begin{figure}[ht]
	\centering
	\def\svgwidth{12cm}
	\large
	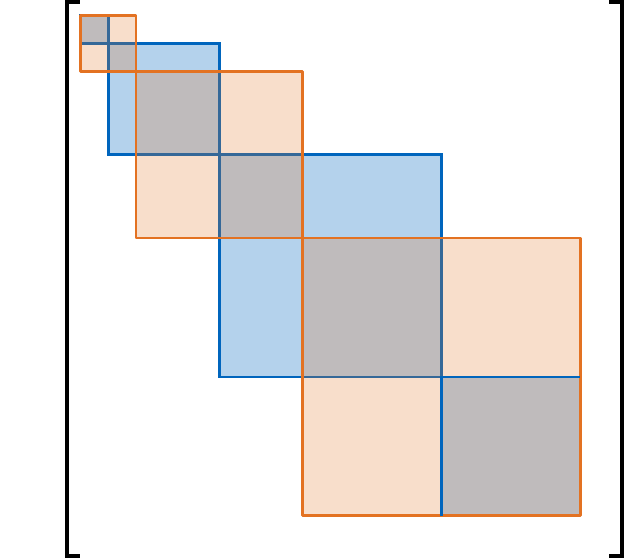
	\caption{Block-diagonal structure of operators commuting with $N$ (light blue) and operators complementary to $N$ (orange); operators that only act nontrivially on $\Hil^{(n)}_\pm$ are those confined to overlapping blocks (light blue and orange); complementary operator $\sigma^x_L$ interchanges $\mathcal{H}^{(n)}_-$ and $\mathcal{H}^{(n+1)}_+$. }
	\label{fig:blockdiag_compl}
\end{figure}

The blockdiagonal structure of operators commuting with $N$ and those complementary to $N$ is illustrated in Fig.~\ref{fig:blockdiag_compl}.

\subsection{Controllability result}
Combining the results of this section, we conclude controllability of the JCH model for two cavities.
\begin{thm}[Controllability of the JCH control system for two cavities]\label{thm:control_2}
	For $M=2$, the control system from Proposition~\ref{prop:JCH_con_sys} is strongly operator controllable.
\end{thm}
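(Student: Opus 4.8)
The plan is to verify that the system satisfies every hypothesis of Theorem~\ref{thm:sym_strategy} with the charge-type operator $N=N_L+N_R$ from Lemma~\ref{lem:N_chargetype}. I first observe that it suffices to treat the reduced control system $(\DriftJ,\sigma^z_L,\sigma^z_R,\HHtwo,\sigma^x_L,\id,\Hil,\hat{\mathcal{A}})$, in which $\sigma^x_R$ has been dropped: adding further control Hamiltonians can only enlarge the reachable set $\mathcal{R}$, so strong operator controllability of this reduced system immediately implies it for the full system of Proposition~\ref{prop:JCH_con_sys}. In this tuple I let $\sigma^x_L$ play the role of the symmetry-breaking operator $H_{d-1}$ and $\id$ the role of $H_d$, while the symmetry-respecting operators are $H_0=\DriftJ$ and $H_1,\dots,H_{d-2}=\sigma^z_L,\sigma^z_R,\HHtwo$.

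Most of the hypotheses are already in place. Condition $(i)$ (self-adjointness of each $\DriftJ+H$ on $\dom(\DriftJ)\cap\dom(H)$) follows from Lemma~\ref{lem:selfad}, since each such operator is a real linear combination of the operators treated there and is essentially self-adjoint on $\fset$. Condition $(ii)$ (recurrence) is exactly Theorem~\ref{thm:rec_JCH}. Condition $(iii)$ (symmetry) is Proposition~\ref{lem:commutation}, which shows that $\DriftJ$, $\sigma^z_L$, $\sigma^z_R$, $\HHtwo$, and $\id$ all commute with $N$. Condition $(v)$ is Proposition~\ref{prop:complementary}, establishing that $\sigma^x_L$ is complementary to $N$. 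Finally, condition $(vi)$ follows from Lemma~\ref{lem:2cav_Hn}$(i)$, since $\dim\Hil^{(n)}=4n>2$ for every $n\ge1$.

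The one substantive ingredient is the adapted Lie algebra rank condition $(iv)$, namely $\SU(N)\subseteq\G$. Here I would invoke Proposition~\ref{prop:suff_trunc_compl}, which reduces this group-level inclusion to the finite-dimensional algebraic statement $\bigoplus_{n=0}^K\mathfrak{sl}(d_n;\mathbb{C})\subseteq\lCK$ for every $K\in\mathbb{N}_0$. That inclusion is precisely Proposition~\ref{prop_algebra_K}, whose proof builds on the fixed-$n$ construction of the matrix units $E_{i,i+1}$ in Proposition~\ref{prop_ind_algebra_n}, which in turn starts from the truncated building blocks $A_1,\dots,A_6$ guaranteed to lie in the complexified truncated algebras by Lemma~\ref{lemma_technical_2}.

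With all of $(i)$--$(vi)$ verified, Theorem~\ref{thm:sym_strategy} yields strong operator controllability of the reduced---and hence of the full---two-cavity system. I expect the genuine difficulty to lie entirely in condition $(iv)$: one must show that the control Hamiltonians generate the whole special linear algebra on each charge sector $\Hil^{(n)}$ despite the finer $N_L,N_R$ block structure, and then glue these sector-wise constructions consistently across all truncation levels $K$. The delicate point is that commuting an operator with fixed $n_L$ against one with fixed $n_R$ need not preserve any excitation number; this is circumvented by exploiting the relation $n_L+n_R=n$ in the recursive argument of Proposition~\ref{prop_algebra_K}.
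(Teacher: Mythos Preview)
Your proposal is correct and follows essentially the same route as the paper: you verify conditions $(i)$--$(vi)$ of Theorem~\ref{thm:sym_strategy} via Lemma~\ref{lem:selfad}, Theorem~\ref{thm:rec_JCH}, Proposition~\ref{lem:commutation}, Proposition~\ref{prop:complementary}, Lemma~\ref{lem:2cav_Hn}, and the combination of Proposition~\ref{prop:suff_trunc_compl} with Proposition~\ref{prop_algebra_K}. Your explicit remark that one may drop $\sigma^x_R$ and work with the reduced system is a helpful clarification that the paper leaves implicit (it simply omits $\sigma^x_R$ when restating the conditions in Section~\ref{sec:2cavities}), but otherwise the arguments coincide.
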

\begin{proof}
	The recurrence condition is satisfied due to Theorem~\ref{thm:rec_JCH}. The drift $\DriftJ$ and every control Hamiltonian $H\in\{\id$, $\sigma^z_L$, $\sigma^z_R$, $\sigma^x_L$, $\HHtwo\}$ admit $\fset$ as a total set of analytic vectors and generate a Lie algebra such that all its elements, e.g.~all $\DriftJ+H$, have unique self-adjoint extensions (Lemma~\ref{lem:selfad}). There exists a charge type operator $N$ (Lemma~\ref{lem:N_chargetype}) that commutes with the operators $\id$, $\sigma^z_L$, $\sigma^z_R$, $\HHtwo$ and $\DriftJ$ (Lemma~\ref{lem:commutation}). Combining Proposition~\ref{prop:suff_trunc_compl} with Proposition~\ref{prop_algebra_K}, we find that these operators generate a dynamical group which contains the Lie group $\SU (N)$. The control Hamiltonian $\sigma^x_L$ is complementary to $N$ (Proposition~\ref{prop:complementary}). Additionally, we find that $\dim \Hil^{(n)}=4n> 2$ for $n\ge 1$ (Lemma~\ref{lem:2cav_Hn}). Hence all conditions in Theorem~\ref{thm:sym_strategy} are satisfied and the quantum control system for $M=2$ is strongly operator controllable.
\end{proof}

\section{Controllability of the general JCH model}\label{sec:graph}

In this chapter, we inductively generalize the result from Theorem~\ref{thm:control_2} to an arbitrary number of cavities. We provide a proof for Theorem~\ref{thm:contr_JCH}, i.e., we show strong operator controllability of the Jaynes-Cummings-Hubbard model.

\subsection{Graph representation of JCH model}
It will be convenient to represent the JCH model with $M$ cavities as a graph with $M$ vertices. Every vertex of the graph stands for a cavity containing a harmonic oscillator mode and a two-level system. An edge represents photon hopping between the corresponding cavities. 
This graph is referred to as the hopping interaction graph of the quantum control system. It is given by a set $I\subseteq \{(i,j) |\,1\le i<j\le M\}$ such that an element $(i,j)\in I$ represents an edge between vertices $i$ and $j$. 
For simplicity and also for physical reasons, we only consider control systems that correspond to connected hopping interaction graphs.

The proof of controllability of the JCH model builds on this graph representation and is depicted in Fig.~\ref{fig:graph_ind}: We use an induction on the number of cavities where the basis is proven in the previous section. We divide the control system for $M+1$ cavities into a system for $M$ cavities and another one for two cavities such that both systems overlap on the $M$th cavity. The first one is controllable by induction hypothesis and the second one by inductions basis. We conclude that therefore the whole system also is. In order to do so, we use a general result on the controllability of overlapping quantum control systems that is stated in the following proposition.

\begin{prop}[Controllability of overlapping systems]\label{prop:overlapping_Keyl}
	Let $\mathcal{K}_1$, $\mathcal{K}_2$, and $\mathcal{K}_3$ be three separable, potentially infinite dimensional Hilbert spaces and consider self-adjoint operators $H_1,\dots, H_n$ on $ \mathcal{K}_1\otimes \mathcal{K}_2$ and $K_1,\dots, K_m$ on $\mathcal{K}_2\otimes \mathcal{K}_3$. Let the relations $\G(H_1,\dots, H_n)=\mathcal{U}(\mathcal{K}_1\otimes \mathcal{K}_2)$ and $\G(K_1,\dots, K_m)= \mathcal{U}(\mathcal{K}_2 \otimes \mathcal{K}_3)$ hold.
	Then 
	\begin{equation}
		\G(H_1\otimes \id,\dots, H_n\otimes\id, \id\otimes K_1, \dots, \id\otimes K_m)=\mathcal{U}(\mathcal{K}_1\otimes \mathcal{K}_2\otimes \mathcal{K}_3)\ .
	\end{equation}
\end{prop}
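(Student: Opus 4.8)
The plan is to peel the three–fold tensor product apart into the two given two–fold systems and then show that, because the two subsystems \emph{share} the middle factor $\mathcal{K}_2$, the unitaries they supply already suffice to reach everything. Write $G:=\G(H_1\otimes\id,\dots,H_n\otimes\id,\id\otimes K_1,\dots,\id\otimes K_m)$ and recall that by definition $G$ is a strongly closed subgroup of $\mathcal{U}(\mathcal{K}_1\otimes\mathcal{K}_2\otimes\mathcal{K}_3)$. The argument reduces to two claims: (A) $G$ contains both $\mathcal{U}(\mathcal{K}_1\otimes\mathcal{K}_2)\otimes\id$ and $\id\otimes\mathcal{U}(\mathcal{K}_2\otimes\mathcal{K}_3)$; and (B) the smallest strongly closed subgroup containing these two families is all of $\mathcal{U}(\mathcal{K}_1\otimes\mathcal{K}_2\otimes\mathcal{K}_3)$.

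For (A) I would first record the elementary fact that $V\mapsto V\otimes\id$ is strongly continuous on the norm–bounded unitaries, so that $V_\alpha\to V$ strongly implies $V_\alpha\otimes\id\to V\otimes\id$ strongly. Consequently the strong closure of $\{V\otimes\id : V\in\Gamma_0\}$ contains $\overline{\Gamma_0}^{\,s}\otimes\id$ for any group $\Gamma_0$ of unitaries on $\mathcal{K}_1\otimes\mathcal{K}_2$. Applying this to the group $\Gamma_0$ generated by the $\exp(itH_k)$ and using the hypothesis $\G(H_1,\dots,H_n)=\mathcal{U}(\mathcal{K}_1\otimes\mathcal{K}_2)$ gives $G\supseteq\mathcal{U}(\mathcal{K}_1\otimes\mathcal{K}_2)\otimes\id$; the second inclusion is identical with the roles of the outer factors exchanged.

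Claim (B) is the heart, and I would prove it by a localisation–plus–approximation scheme. Fix orthonormal bases, let $\mathcal{K}_i^{(N)}\subseteq\mathcal{K}_i$ be the span of the first $N$ basis vectors and $W_N:=\mathcal{K}_1^{(N)}\otimes\mathcal{K}_2^{(N)}\otimes\mathcal{K}_3^{(N)}$. Every unitary $U_{12}\otimes\id$ with $U_{12}$ supported on $\mathcal{K}_1^{(N)}\otimes\mathcal{K}_2^{(N)}$ lies in $\mathcal{U}(\mathcal{K}_1\otimes\mathcal{K}_2)\otimes\id\subseteq G$ and \emph{preserves} $W_N$ (it acts trivially on the $\mathcal{K}_3^{(N)}$ leg), symmetrically for $\id\otimes U_{23}$; hence every word in such generators preserves $W_N$ and restricts on $W_N$ to the corresponding word of restrictions. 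On $W_N$ these restrictions furnish exactly the two full groups $\mathcal{U}(\mathcal{K}_1^{(N)}\otimes\mathcal{K}_2^{(N)})\otimes\id$ and $\id\otimes\mathcal{U}(\mathcal{K}_2^{(N)}\otimes\mathcal{K}_3^{(N)})$. The decisive step is that these two \emph{Lie–generate} $\mathfrak{u}(W_N)$: in matrix units $e^{(1)}_{ab}\otimes e^{(2)}_{cd}\otimes e^{(3)}_{ef}$ an arbitrary off–diagonal middle unit ($c\neq d$) is a single commutator $[\,e^{(1)}_{ab}\otimes e^{(2)}_{cg}\otimes\id,\ \id\otimes e^{(2)}_{gd}\otimes e^{(3)}_{ef}\,]$, and one then produces diagonal middle units and finally the fully diagonal units $e^{(1)}_{ab}\otimes e^{(2)}_{cc}\otimes e^{(3)}_{ee}$ by combining further commutator differences with the already available full–leg generators $e^{(1)}_{ab}\otimes e^{(2)}_{cc}\otimes\id$. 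Since the generated Lie algebra is all of $\mathfrak{u}(W_N)$ and $\mathcal{U}(W_N)$ is connected, the subgroup generated on $W_N$ is $\mathcal{U}(W_N)$. Given a target $U$, finitely many states $\psi_l$ and $\epsilon>0$, I would then choose $N$ so large that $\psi_l$ and $U\psi_l$ lie within $\epsilon/3$ of $W_N$, pick $V\in\mathcal{U}(W_N)$ with $V(P_N\psi_l)\approx P_NU\psi_l$ (possible because finite–rank modifications of the identity are strongly dense and the relevant Gram matrices nearly match), realise $V$ as a word $g\in G$ with $g|_{W_N}=V$, and conclude $\|g\psi_l-U\psi_l\|\le\epsilon$. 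As $G$ is strongly closed this yields $U\in G$, i.e.\ $G=\mathcal{U}(\mathcal{K}_1\otimes\mathcal{K}_2\otimes\mathcal{K}_3)$.

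The main obstacle is the finite–dimensional generation inside (B): a commutator of a generator carrying $\id$ on $\mathcal{K}_3$ with one carrying $\id$ on $\mathcal{K}_1$ always acts \emph{tracelessly} on the shared factor $\mathcal{K}_2$, so the genuinely diagonal operators $e^{(1)}_{ab}\otimes e^{(2)}_{cc}\otimes e^{(3)}_{ee}$ have to be recovered indirectly, by solving a small linear system built from commutator–differences together with the full–leg generators. This is precisely the point where the shared factor must have dimension at least two (for a one–dimensional overlap the two families commute and generate only local unitaries, and the statement fails); in the application the overlap is an entire cavity, so this is harmless. A secondary, purely topological point is to verify that tensoring with the identity and restriction to $W_N$ interact correctly with the strong closure, which is exactly why the generators are chosen to leave $W_N$ invariant.
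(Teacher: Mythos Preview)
The paper does not actually supply a proof of this proposition; it simply refers to Theorem~4.5 of \cite{heinze2016master} and Lemma~5.5 of \cite{hofmann2017controlling}. So there is no in--paper argument to compare against, and your task was effectively to reconstruct a proof from scratch.

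Your reconstruction is sound and follows what is almost certainly the intended route in the cited references: (A) pass from the dynamical groups on the two--factor spaces to $\mathcal{U}(\mathcal{K}_1\otimes\mathcal{K}_2)\otimes\id$ and $\id\otimes\mathcal{U}(\mathcal{K}_2\otimes\mathcal{K}_3)$ inside $G$ by strong continuity of $V\mapsto V\otimes\id$; (B) localise to finite--dimensional blocks $W_N$, observe that the finitely supported generators preserve $W_N$, and use the finite--dimensional Lie--algebraic fact that $\mathfrak{u}(\mathcal{K}_1^{(N)}\otimes\mathcal{K}_2^{(N)})\otimes\id$ together with $\id\otimes\mathfrak{u}(\mathcal{K}_2^{(N)}\otimes\mathcal{K}_3^{(N)})$ generate all of $\mathfrak{u}(W_N)$; then lift back via a standard Gram--matrix approximation argument. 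Each of these pieces is correct as you state it. In particular, your matrix--unit computation for the off--diagonal middle indices is right, and the recovery of the diagonal pieces $e^{(1)}_{ab}\otimes e^{(2)}_{cc}\otimes e^{(3)}_{ef}$ by combining commutator differences with the available ``full--leg'' generators $e^{(1)}_{ab}\otimes e^{(2)}_{cc}\otimes\id$ does go through once one notes that the traceless diagonal matrices together with the identity span all diagonal matrices on the middle factor.

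You are also right to flag that the argument tacitly requires $\dim\mathcal{K}_2\ge 2$: for a one--dimensional overlap the two families commute and one only obtains product unitaries, so the proposition as literally stated in the paper is false in that degenerate case. This hypothesis is indeed harmless in the application (where $\mathcal{K}_2=\Hil_1=\mathrm{L}^2(\mathbb{R})\otimes\mathbb{C}^2$), but it is a genuine omission from the statement, and it is to your credit that you isolated exactly where it enters.
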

The proof is given in Theorem~4.5 of~\cite{heinze2016master} and in Lemma~5.5 of~\cite{hofmann2017controlling}.

\subsection{Proof of Theorem~\ref{thm:contr_JCH}}
We use following two Lemmas in the proof of Theorem~\ref{thm:contr_JCH} when we want to conclude equality between the two dynamical groups
\begin{align*}
	&\G (\id,\sigma^z_1, \dots, \sigma^z_{M+1} ,\sigma^x_1, \dots, \sigma^x_{M+1},\DriftJ, \HH{I} ) \ \text{ and }\\
	&\G (\id,\sigma^z_1, \dots, \sigma^z_{M+1} ,\sigma^x_1, \dots, \sigma^x_{M+1},\DriftJ', \DriftJ'',H'_\tH , H''_\tH )\ .
\end{align*} The first Lemma considers the different hopping Hamiltonians of both groups whereas the second one the different drifts.

\begin{lem}[Hopping Hamiltonians]\label{lem:suff_hop_JCH}
	Consider the quantum control system from Proposition \ref{prop:JCH_con_sys} for $M+1$ cavities with a tree-like hopping interaction graph $J$. Let it be strongly operator controllable. Let $I$ be a supergraph of $J$ such that $J$ is a spanning tree of $I$, set $J_M:=J \setminus \{(M,M+1)\}$ and $\HH{I}=\sum_{(i,j)\in I} (a^*_i a^{\phantom{*}}_j +a^{\phantom{*}}_i a^*_j)$, $H'_\tH:=\sum_{(i,j)\in J_M} (a^*_i a^{\phantom{*}}_j +a^{\phantom{*}}_i a^*_j)$ and $H''_\tH:= a^*_M a^{\phantom{*}}_{M+1} + a^{\phantom{*}}_{M} a^*_{M+1}$. 
	Then 
	\begin{equation}
	\exp \left( itH'_\tH\right),\;\exp \left( itH''_\tH \right) \in \G (\id,\sigma_1^z,\dots, \sigma^z_{M+1},\HH{I} )\; .
	\end{equation}
	and the system corresponding the hopping interaction graph $I$ is strongly operator controllable.
\end{lem}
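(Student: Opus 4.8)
The plan is to reduce controllability for the supergraph $I$ to the controllability already assumed for its spanning tree $J$. Since the two control systems differ only in their hopping generator ($\HH{I}$ versus $\HH{J}=H'_\tH+H''_\tH$), it suffices to show that the single collective hopping $\HH{I}$, together with the remaining generators of the system, can reconstruct the individual hopping terms and hence $\exp(it\HH{J})$. I would therefore organize the proof as: (1) produce the photonic number operators, (2) use them to peel apart $\HH{I}$ into individual edges, yielding $\exp(itH'_\tH)$ and $\exp(itH''_\tH)$ in the dynamical group, and (3) assemble $\exp(it\HH{J})$ and invoke the tree controllability.

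For step (1) I would note that, although $\HH{I}$ commutes with every atomic operator $\sigma^z_i$, the drift $\DriftJ$ carries the Jaynes--Cummings interaction terms, so repeating the computation of Lemma~\ref{lemma_technical_2} on each truncation gives the truncated number operators $(a^*_ia^{\phantom{*}}_i)^{[K]}$ inside the Lie algebra generated by $\DriftJ$ and the $\sigma^z_i$. These number operators do \emph{not} commute with $\HH{I}$, so adjoining them makes the relevant Lie algebra genuinely non-abelian, which is exactly what is needed to separate edges.

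The crux is step (2), a second-quantization identification. The assignment $E_{ij}\mapsto a^*_ia^{\phantom{*}}_j$ extends to a Lie-algebra homomorphism $\rho$ from $\mathfrak{gl}(M+1,\mathbb{C})$ into the operators on $\Hil_{M+1}$, under which $a^*_ia^{\phantom{*}}_i=\rho(E_{ii})$ and $\HH{I}=\rho\!\big(\sum_{(i,j)\in I}(E_{ij}+E_{ji})\big)$ is the image of the adjacency matrix of $I$. Consequently the Lie algebra generated by the $a^*_ia^{\phantom{*}}_i$ and $\HH{I}$ is $\rho$ of the Lie algebra generated by all diagonal matrices and the adjacency matrix of $I$; since $I$ is connected, this latter algebra is all of $\mathfrak{u}(M+1)$ (the standard fact that on-site potentials plus a connected hopping graph generate the full unitary algebra). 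Hence it contains $\rho(E_{M,M+1}+E_{M+1,M})=H''_\tH$ and $\rho\!\big(\sum_{(i,j)\in J_M}(E_{ij}+E_{ji})\big)=H'_\tH$. Carrying this inclusion out on each finite-dimensional truncation $\Hil^{[K]}$ (legitimate because every operator here commutes with the charge-type operator $N$) and lifting by the mechanism of Proposition~\ref{prop:suff_trunc_compl} and Theorem~2.1 of \cite{keyl2014controlling} places $\exp(itH'_\tH)$ and $\exp(itH''_\tH)$ in the dynamical group.

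Finally, for step (3), $H'_\tH$ and $H''_\tH$ are self-adjoint and their sum $\HH{J}=H'_\tH+H''_\tH$ is essentially self-adjoint on $\fset$ by Lemma~\ref{lem:selfad}, so Trotter's formula, exactly as in the proof of Proposition~\ref{prop:suff_cond}, yields $\exp(it\HH{J})$ in the dynamical group of the $I$-system. This group therefore contains every generator of the $J$-system, which is strongly operator controllable by hypothesis, so it equals $\mathcal{U}(\Hil_{M+1})$; recurrence (Theorem~\ref{thm:rec_JCH}) together with Proposition~\ref{prop:suff_cond} then gives strong operator controllability for the graph $I$. The main obstacle is precisely the extraction in step (2): isolating individual edges out of a single collective hopping operator genuinely requires both the photonic number operators and the connectivity of $I$, and must be performed on the finite truncations and then lifted, rather than manipulated directly among the unbounded operators.
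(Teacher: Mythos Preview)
Your argument is correct and follows the same overall strategy as the paper: extract the truncated photon number operators, commute them against $\HH{I}^{[K]}$ to isolate individual edge hoppings, lift to the group level via the block-diagonal machinery of Proposition~\ref{prop:suff_trunc_compl}, and deduce controllability for $I$ from that of $J$. Two remarks on presentation. First, in step~(1) you invoke the drift $\DriftJ$, but the displayed conclusion of the lemma places $\exp(itH'_\tH)$ and $\exp(itH''_\tH)$ in $\G(\id,\sigma^z_1,\dots,\sigma^z_{M+1},\HH{I})$, which does \emph{not} contain $\DriftJ$; fortunately the result you actually cite, Eq.~\eqref{eq:technical_lemma_a*a} of Lemma~\ref{lemma_technical_2}, already gives $(a^*_i a^{\phantom{*}}_i)^{[K]}\in\LieC{\id^{[K]},(\sigma^z_i)^{[K]}}$ without the drift, so the lemma's exact statement goes through. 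Second, in step~(2) the paper does not appeal to the general fact that diagonals plus a connected adjacency matrix generate $\mathfrak{u}(M{+}1)$; it simply writes down the explicit double commutator $\big[(a^*_i a^{\phantom{*}}_i)^{[K]},\big[\HH{I}^{[K]},(a^*_k a^{\phantom{*}}_k)^{[K]}\big]\big]=(a^*_i a^{\phantom{*}}_k+a^{\phantom{*}}_i a^*_k)^{[K]}$ for $(i,k)\in I$, which is precisely the image under your homomorphism $\rho$ of the matrix identity $[E_{ii},[A_I,E_{kk}]]=E_{ik}+E_{ki}$. Since $J\subseteq I$, every edge of $J$ is obtained directly this way, so the full $\mathfrak{u}(M{+}1)$ claim, while true, is stronger than required. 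Your step~(3) via Trotter is the right way to pass from individual edges to $\exp(it\HH{J})$.
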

\begin{proof}
	Consider the hopping control Hamiltonian $\HH{I}$ for $M+1$ cavities with a connected hopping interaction graph $I$ and let $J$ denote a spanning tree subgraph of $I$ such that $(M,M+1)\in J$ and $H'_\tH + H''_\tH=\sum_{(i,k)\in J} (a^*_i a^{\phantom{*}}_k +a^{\phantom{*}}_i a^*_k)$. We find for all $K\in \mathbb{N}_0$ and $(i,k)\in J\subseteq I$ that
	\begin{align}
	\left[\big(a^*_i a^{\phantom{*}}_i\big)^{[K]},\big[\HH{I}^{[K]}, \big(a^*_k a^{\phantom{*}}_k \big)^{[K]}\big] \right]&=\left(a^*_i a^{\phantom{*}}_k + a^{\phantom{*}}_i a^*_k\right)^{[K]} .\label{eq:Mcavity_algebraK}
	\end{align}
	The relation $\big(a^*_i a^{\phantom{*}}_i \big)^{[K]} \in \LieC{\id^{[K]}, (\sigma_i^z)^{[K]} } $was shown to hold for $M=2$ in Lemma~\ref{lemma_technical_2}. It can be easily seen that this generalizes from Eq.~\eqref{eq:technical_lemma_a*a} where $M=2$ to an arbitrary number of cavities $M\ge2$. Combining this with Eq.~\eqref{eq:Mcavity_algebraK} gives
	\begin{equation}\label{eq:suff_hop_proof2}
	\left(a^*_i a^{\phantom{*}}_k + a^{\phantom{*}}_i a^*_k\right)^{[K]} \in \LieC{\HH{I}^{[K]}, \id^{[K]}, (\sigma_i^z)^{[K]}, (\sigma_k^z)^{[K]} } .
	\end{equation}
	Since the operators considered commute with $N$, they generate a Lie subalgebra of $\mathfrak{u}(N)$ (c.f Propositions 4.4 and 4.5 from \cite{keyl2014controlling}). Therefore, we can apply the same argument as in Proposition~\ref{prop:suff_trunc_compl} and generalize from the truncated complexified algebras to the corresponding dynamical Lie group: Hence for all $(i,k)\in J$ we find $\exp [it\,( a^*_i a^{\phantom{*}}_k +a^{\phantom{*}}_i a^*_k) ] \in \G (\id,\sigma_i^z,\sigma_k^z, \HH{I}) $.\\
	We can multiply all such exponentials for $(i,k)\in J$ and find
	\begin{equation}
	\exp\Big[it \sum_{(i,k)\in J} (a^*_i a^{\phantom{*}}_k +a^{\phantom{*}}_i a^*_k )\Big]\in \G (\id, \sigma_1^z,\dots, \sigma_{M+1}^z, \HH{I} ).
	\end{equation} This reads as follows: Collective control on $\HH{I}$ is sufficient to approximately tune individual hopping interactions $ a^*_i a^{\phantom{*}}_k + a^{\phantom{*}}_i a^*_k$. This implies the two statements we want to show:
	As first consequence we have for $J_M:= J\setminus \{(M,M+1)\}$ that
	\begin{equation}
	\exp(itH'_\tH ),\, \exp(itH''_\tH)\in \G (\id, \sigma_1^z,\dots, \sigma_{M+1}^z, \HH{I} )	.
	\end{equation} Secondly, if a quantum control system is strongly operator controllable for some graph $J$ then the system corresponding to a supergraph $I$ for which $J$ is a spanning tree is also strongly operator controllable. 
\end{proof}

\begin{lem}[Drift Hamiltonians]\label{lem:suff_drift_JCH}
	As in Lemma~\ref{lem:suff_hop_JCH}, consider a quantum control systems from Proposition~\ref{prop:JCH_con_sys} with $M+1$ cavities. The operators $\DriftJ':=\sum_{k=1}^{M} H_{\JC,k}$ and $\DriftJ'':=\sum_{k=M}^{M+1} H_{\JC,k}$ satisfy
	\begin{equation}
	\exp \left( it\DriftJ'\right),\;\exp \left( it\DriftJ'' \right) \in \G (\id,\sigma_1^z,\dots, \sigma^z_{M+1},\DriftJ)\; .
	\end{equation}
\end{lem}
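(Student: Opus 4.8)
The plan is to reduce the claim to a statement about finite-dimensional complexified Lie algebras on the truncated spaces $\Hil^{[K]}$, exactly as in the proof of Lemma~\ref{lem:suff_hop_JCH}, and then lift it to the dynamical group. Since $\DriftJ$, $\id$, and all the $\sigma^z_k$ commute with the charge-type operator $N$ (Proposition~\ref{lem:commutation}), they generate a Lie subalgebra of $\tilde{\mathfrak{u}}(N)$, and the machinery behind Proposition~\ref{prop:suff_trunc_compl} (Propositions~4.4 and~4.5 of~\cite{keyl2014controlling}) tells us that an inclusion of truncated complexified algebras for every $K\in\mathbb{N}_0$ upgrades to a containment of one-parameter groups in $\G(\id,\sigma^z_1,\dots,\sigma^z_{M+1},\DriftJ)$. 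The central claim I would establish is therefore that for each cavity $k$ and every $K$,
\begin{equation*}
	H_{\JC,k}^{[K]}\in\LieC{\DriftJ^{[K]},\id^{[K]},(\sigma^z_1)^{[K]},\dots,(\sigma^z_{M+1})^{[K]}}.
\end{equation*}
Both $\DriftJ'=\sum_{k=1}^{M}H_{\JC,k}$ and $\DriftJ''=H_{\JC,M}+H_{\JC,M+1}$ are real linear combinations of the $H_{\JC,k}$, so their truncations lie in the same (linearly closed) algebra; both again commute with $N$ and are essentially self-adjoint on $\fset$ by the symmetry result (Proposition~\ref{lem:commutation}) and Lemma~\ref{lem:selfad}, so the lifting applies and yields the two desired one-parameter groups.

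To prove the central claim I would split $H_{\JC,k}=\oP{k}\,a^*_ka^{\phantom{*}}_k+\oA{k}\,\sigma^z_k+\oI{k}\,(a^*_k\sigma^-_k+a^{\phantom{*}}_k\sigma^+_k)$ and treat the three summands separately on $\Hil^{[K]}$. The term $(\sigma^z_k)^{[K]}$ is a generator by definition; the photon-number term satisfies $(a^*_ka^{\phantom{*}}_k)^{[K]}\in\LieC{\id^{[K]},(\sigma^z_k)^{[K]}}$ by the $M$-cavity generalization of Eq.~\eqref{eq:technical_lemma_a*a} already invoked in Lemma~\ref{lem:suff_hop_JCH}. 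For the interaction term I would reuse the commutator identity from Lemma~\ref{lemma_technical_2}: because every $H_{\JC,j}$ with $j\neq k$ acts trivially on cavity $k$ and hence commutes with $\sigma^z_k$, one has $[\DriftJ,\sigma^z_k]=[H_{\JC,k},\sigma^z_k]=\oI{k}[a^*_k\sigma^-_k+a^{\phantom{*}}_k\sigma^+_k,\sigma^z_k]$, and a second commutator with $\sigma^z_k$ recovers the symmetric combination,
\begin{equation*}
	(a^*_k\sigma^-_k+a^{\phantom{*}}_k\sigma^+_k)^{[K]}=\tfrac{1}{4\oI{k}}\big[[\DriftJ^{[K]},(\sigma^z_k)^{[K]}],(\sigma^z_k)^{[K]}\big].
\end{equation*}
Here I would note that, since all operators in sight commute with $N$, they are block-diagonal in the $P_n$-decomposition, so truncation is a Lie-algebra homomorphism, $[A^{[K]},B^{[K]}]=[A,B]^{[K]}$, and the displayed identity is the honest truncation of its untruncated counterpart. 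Assembling the three contributions with the real coefficients $\oP{k},\oA{k},\oI{k}$ then places $H_{\JC,k}^{[K]}$ in the algebra, as claimed.

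The steps, in order, are: (1) verify the truncation/bracket compatibility for $N$-block-diagonal operators; (2) extract $(\sigma^z_k)^{[K]}$, $(a^*_ka^{\phantom{*}}_k)^{[K]}$, and the interaction term and assemble $H_{\JC,k}^{[K]}$; (3) take linear combinations to land $(\DriftJ')^{[K]}$ and $(\DriftJ'')^{[K]}$ in the algebra for every $K$; and (4) invoke the lifting argument of Proposition~\ref{prop:suff_trunc_compl}. I expect the only genuinely delicate point to be step~(4), the passage from the increasing family of finite-dimensional inclusions to a statement about strongly closed one-parameter groups on the infinite-dimensional $\Hil$; this is precisely where commutation with $N$ is indispensable, and it is what lets us avoid any direct manipulation of the unbounded exponentials $\exp(it\DriftJ')$. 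Everything else is the same bookkeeping already carried out for the hopping Hamiltonians in Lemma~\ref{lem:suff_hop_JCH}.
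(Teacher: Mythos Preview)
Your proposal is correct and follows essentially the same route as the paper's proof: extract the interaction term $(a^*_k\sigma^-_k+a^{\phantom{*}}_k\sigma^+_k)^{[K]}$ via the double commutator $[[\DriftJ,\sigma^z_k],\sigma^z_k]$, combine with $(a^*_ka^{\phantom{*}}_k)^{[K]}\in\LieC{\id^{[K]},(\sigma^z_k)^{[K]}}$ and the trivially available $(\sigma^z_k)^{[K]}$ to place each $H_{\JC,k}^{[K]}$ in the truncated complexified algebra, and then lift to the dynamical group exactly as in Lemma~\ref{lem:suff_hop_JCH}. Your write-up is in fact more explicit than the paper's on the bookkeeping points (truncation being a Lie homomorphism on $N$-block-diagonal operators, and the self-adjointness/symmetry needed for the lift), but the argument is the same.
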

\begin{proof}
	Let us consider the drift operators $\DriftJ'$, $ \DriftJ''$ and $\DriftJ:=\sum_{k=1}^{M+1} H_{\JC,k}$. For every $K\in \mathbb{N}_0$ and $k\in \{1,\dots,M+1\}$ we find 
	\begin{equation}
	\left[(\sigma_k^z)^{[K]} ,\big[ (\sigma_k^z)^{[K]}, \DriftJ^{[K]} \big]\right]=\oIk (a^*_k \sigma^-_k + a^{\phantom{*}}_k \sigma^+_k)^{[K]} .	
	\end{equation}
	Together with $(a^*_k a^{\phantom{*}}_k )^{[K]}\in \LieC{ \id^{[K]}, (\sigma_k^z)^{[K]} }$ this shows that 
	\begin{equation}
	\left( \DriftJ'\right)^{[K]},\, \left(\DriftJ''\right)^{[K]}\in \LieC{ \id^{[K]}, (\sigma_1^z)^{[K]} ,\dots, (\sigma_{M+1}^z)^{[K]}, \DriftJ^{[K]} }.
	\end{equation} 
	On the group level, we get the desired inclusions. 	
\end{proof}

\begin{figure}[ht]
	\centering
	\def\svgwidth{10cm}
	\large
	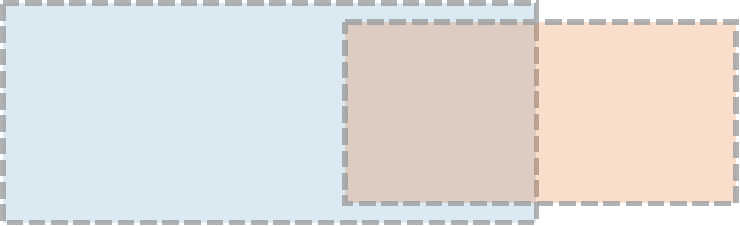
	\caption{Spanning tree graph for a JCH model with $M+1$ cavities. Vertices in blue represent cavities containing a JC model, edges in green represent hopping of photons between cavities. Illustration of the proof of Theorem~\ref{thm:contr_JCH} where the controllability of the $M+1$ cavity system is traced back to the controllability of two overlapping systems: One for $M$ cavities (blue box) and one for 2 cavities (orange box).}
	\label{fig:graph_ind}
\end{figure}

\begin{proof}[Proof of Theorem~\ref{thm:contr_JCH}]
	We show this by induction on the number $M\ge2$ of cavities. The statement for $M=2$ is proven in Theorem~\ref{thm:control_2}.
	
	For the \textbf{inductive step}, assume that the control system from Proposition~\ref{prop:JCH_con_sys} with $M$ cavities is strongly operator controllable and consider the model for $(M+1)$ cavities. As described above, its hopping interaction graph can be given by a set $I\subseteq\{(i,j)|\, 1\le i<j\le M+1\}$. The control system satisfies the recurrence condition due to Theorem~\ref{thm:rec_JCH}. Hence it is strongly operator controllable if and only if
	\begin{equation}\label{eq:proof_Mcavitycontrol_1}
		\G \left(\id,\sigma^z_1, \dots, \sigma^z_{M+1} ,\sigma^x_1, \dots, \sigma^x_{M+1},\DriftJ, \HH{I} \right)=\mathcal{U} \left(\Hil_{M+1} \right) 
	\end{equation} 
	where $\DriftJ:=\sum_{k=1}^{M+1} \left( \oPk \,a^*_ka^{\phantom{*}}_k +\oAk \, \sigma^z_k + \oIk  (a^*_k \sigma^-_k + a^{\phantom{*}}_k \sigma^+_k )\right)$ and $\HH{I}$ is as in Eq.~\eqref{eq:def_HHI} given by $\HH{I}=\sum_{(i,k)\in I} (a^*_i a^{\phantom{*}}_k + a^{\phantom{*}}_i a^*_k)$.
	
	Let $J$ be a spanning tree of $I$. Lemma~\ref{lem:suff_hop_JCH} shows that controllability of a system with this hopping interaction graph $J$ is sufficient for the controllability of the original system $I$. In a connected tree-like graph, one can always find a vertex with the following two properties: (i) it has only one edge connecting it to the rest of the graph and (ii) if one removes it (together with the corresponding edge), then the remaining graph is still a tree-like connected graph. 
	This (and potentially a relabeling of the cavities) allows for assuming that $\{(M,M+1)\}\in J$ and that $(i,M+1)\notin J$ for all $i\neq M$.
	The set $J_M:=J\setminus \{(M,M +1)\}$ represents a connected tree-like graph for $M$ cavities and by induction hypothesis, the corresponding quantum control system is strongly operator controllable. Thus 
	\begin{equation}
		\G \left(\id,\sigma^z_1, \dots, \sigma^z_M ,\sigma^x_1, \dots, \sigma^x_M, \DriftJ', H'_\tH \right)=\mathcal{U}\left(\Hil_{M} \right) 
	\end{equation}
	where $\DriftJ':=\sum_{k=1}^{M} \left( \oPk \,a^*_ka^{\phantom{*}}_k +\oAk \, \sigma^z_k + \oIk  (a^*_k \sigma^-_k + a^{\phantom{*}}_k \sigma^+_k )\right)$ and $H'_\tH:=\sum_{(i,k)\in J_M} (a^*_i a^{\phantom{*}}_k + a^{\phantom{*}}_i a^*_k)$ 
	represent an $M$-cavity drift and hopping Hamiltonian, respectively.
	The subgraph $\{(M,M+1)\}\subseteq J$, that consists of the cavities $M$, $M+1$ and the edge connecting them, is pictured in an orange box in Fig.~\ref{fig:graph_ind}. 
	The corresponding quantum control system is just a two cavity JCH model. It is strongly operator controllable due to Theorem~\ref{thm:control_2}, i.e., we have 
	\begin{equation}
		\G \left(\id,\sigma^z_M, \sigma^z_{M+1} ,\sigma^x_M, \sigma^x_{M+1},H''_0,H''_\tH \right) = \mathcal{U}\left(\Hil_{2} \right)
	\end{equation}
	where $\DriftJ'':=\sum_{k=M}^{M+1} \left( \oPk \,a^*_ka^{\phantom{*}}_k +\oAk \, \sigma^z_k + \oIk  (a^*_k \sigma^-_k + a^{\phantom{*}}_k \sigma^+_k )\right)$ and $H''_\tH:=a^*_M a^{\phantom{*}}_{M+1} + a^{\phantom{*}}_M a^*_{M+1}$. \\
	Hence we constructed two strongly operator controllable quantum control systems whose Hilbert spaces overlap on cavity $M$. We can apply Proposition~\ref{prop:overlapping_Keyl} with the identifications $\mathcal{K}_1=\Hil_{M-1}$ and $\mathcal{K}_2=\Hil_1=\mathcal{K}_3$ to obtain 
	\begin{align}
		\G \left(\id,\sigma^z_1, \dots, \sigma^z_{M+1} ,\sigma^x_1, \dots, \sigma^x_{M+1},\DriftJ',\DriftJ'' ,H'_\tH, H''_\tH \right)=\mathcal{U}\left(\Hil_{M+1} \right) \label{eq:G_JCH_M+1_prime} 
	\end{align}
	Note that the dynamical group on the left hand side is not the one in Eq.~\eqref{eq:proof_Mcavitycontrol_1}. 
	But Lemmas~\ref{lem:suff_hop_JCH} and \ref{lem:suff_drift_JCH} imply that
	\begin{align*}
		\G \left(\id,\sigma^z_1, \dots, \sigma^z_{M+1} ,\sigma^x_1, \dots, \sigma^x_{M+1}, \DriftJ', \DriftJ'',H'_\tH , H''_\tH \right)&\\
		\subseteq \G(\id,\sigma^z_1, \dots, \sigma^z_{M} ,\sigma^x_1, \dots, \sigma^x_{M},\DriftJ, \HH{I} ) &\; .
	\end{align*}
	The dynamical group on the right hand side is by definition a subgroup of the unitary group which is due to Eq.~\eqref{eq:G_JCH_M+1_prime} equal to the left hand side. We conclude that Eq.~\eqref{eq:proof_Mcavitycontrol_1} holds and hence that the system is strongly operator controllable for $M+1$ cavities.
\end{proof}

\subsection{Discussion of results}
We proved the main result of the article (Theorem~\ref{thm:contr_JCH}) stating a control system for the JCH model which is strongly operator controllable. We now discuss the assumptions that we made for this statement. 

The system's free and uncontrolled evolution is governed by the drift Hamiltonian 
\begin{equation}
	\DriftJ= \sum_{i=1}^M \left[ \oP{i} a^*_ia^{\phantom{*}}_i +\oA{i} \sigma^z_i+ \oI{i} (a^*_i \sigma^-_i + a^{\phantom{*}}_i \sigma^+_i )\right]
\end{equation} which corresponds to the sum of JC Hamiltonians on every cavity. The crucial element for the controllability proof is the atom-photon interaction in every cavity. Thereby, we indirectly assumed that the corresponding coupling strength $\oI{k}$ is large enough for all cavities $k$ so that its inverse is small compared to the lifetime of the cavities. If the converse was true, i.e., if $(\oI{k})^{-1}$ was large compared to the cavity lifetime then the drift would be negligible. In that case, the control Hamiltonians from Proposition~\ref{prop:JCH_con_sys} would not be sufficient. We would have to include the atom-photon interaction ($\HI$ from Eq.~\eqref{eq:notation_HP_HI}) as an additional control Hamiltonian to achieve strong operator controllability.
Among the control Hamiltonians we find the Pauli matrices $\sigma^x_i$ and $\sigma^z_i$ on all atoms $i=1\dots,M$. By this we assume to be able to individually address and fully control all atoms. Via the hopping control Hamiltonian $\HH{I}$ we can collective turn on (or off) all hopping interactions. In an optical lattice representing the cavities this could be realized via rapidly ramping up the lattice depth. We note that individual addressability (of photons or single hopping interactions) is not necessary to achieve controllability.

We can easily extend the controllability result to certain related models. The first example was given in the previous paragraph where we replaced the drift Hamiltonian by the atom-photon interaction $\HI$ as an additional control Hamiltonian. We discussed in which scenarios this is reasonable.
A second possibility, but from our viewpoint weaker result is the following: we replace the control Hamiltonians $\sigma^z_i$ by  $a^*_ia^{\phantom{*}}_i$. Then the controllability proof becomes even simpler: in Lemma~\ref{lem:suff_hop_JCH} we can stop after Eq.~\eqref{eq:Mcavity_algebraK} and do not need to consider Eq.~\eqref{eq:suff_hop_proof2} (in Lemma~\ref{lemma_technical_2} one easily obtains that $A_1-A_2$ and $A_1+A_2$ are generated by the drift and $a^*_ia^{\phantom{*}}_i$; and we do not need Eq.~\eqref{eq:technical_lemma_a*a} but can stop the proof after Eq.~\eqref{eq:JCH_2_proofAi_1}). Note that this replacement corresponds to assuming individual addressability of all cavity modes and is therefore less practical: we would still have to control all atoms (at least along one axis) via the complementary operators $\sigma^x_i$.
A third related model where we can apply our controllability result to is the JC model. It was shown in~\cite{keyl2014controlling} that the control Hamiltonians $\sigma^x$, $\sigma^z$ and $a\otimes \sigma^+ +a^\dagger\otimes \sigma^-$ achieve strong operator controllability on the Hilbert space $L^2(\mathbb{R})\otimes \mathbb{C}^2$. Here we can generalize this result in the following sense: Assuming that the summand $a\otimes \sigma^+ +a^\dagger\otimes \sigma^-$ appears in the drift Hamiltonian, control of the atom via $\sigma^x$ and $\sigma^z$ is sufficient.
Note that the controllability proof included an induction on the number of cavities. This implies that we cannot say anything about the controllability of an infinite JCH model (with an infinite number of cavities). 

We now discuss some indirect conditions in Theorem~\ref{thm:contr_JCH} that are assumed in the definition of a quantum control system (c.f~Definition~\ref{dfn:con_sys}). In Section~\ref{sec:Problem_Results} we argued why they are necessary for the well-definedness of the problem but here we argue that they can be relaxed for the concrete JCH model. We assumed piecewise constant control functions such that only one of them is nonzero at each time. This means that the controls can be instantly switched on from 0 to a finite value (also vice versa) and that only one of them is `on' at a given time. For the JCH model, the above restriction on the control functions is not necessary. We showed that the control Hamiltonians and the drift generate a Lie algebra such that all its elements admit a joint dense domain of essential self-adjointness. Hence the time evolution operator is well-defined for any element of this Lie algebra. We can therefore take control functions that are piecewise continuous, and that are simultaneously switched on.

\section{Conclusion and outlook}\label{sec:conclusion}

We showed strong operator controllability of the Jaynes-Cummings-Hubbard model. We assumed that the system's undisturbed evolution is governed by Jaynes-Cummings Hamiltonians of the cavities. Our main result states that by controlling the individual atoms via Pauli matrices $\sigma^x$ and $\sigma^z$ as well as by globally switching on and off the photon hopping between cavities we can achieve the following two things: We can approximately tune the system from a given initial pure state into any desired pure state with arbitrary high precision; we can also approximate any unitary operator on the system by a suitably tuned time evolution operator. 

Although the main result only shows the existence of suitable control functions, the proofs are partly constructive. In the two cavity case, it is possible to deduce a concrete choice of control functions to link two pure states: the proof suggests to steer the system first into the zero state (no photon and all atoms in the ground state) and then into the final state. Obviously, this strategy is not optimal for many pairs of states. Hence it would be a next step to develop concrete and optimal controllability schemes for the Jaynes-Cummings-Hubbard model.

The proof of our main result relies on a strategy to analyze controllability of an infinite dimensional system by exploiting its symmetries. We build on former results~\cite{keyl2014controlling} and generalize them to control systems with a non-zero drift.  
This constitutes the second result of this article that goes beyond the JCH model. We give a list of sufficient conditions for strong operator controllability and provide a proof in the full infinite dimensional setting analyzing convergence in the strong operator topology. The key condition is that all control Hamiltonians but one admit a U$(1)$ symmetry, i.e., they commute with a symmetry operator $N$ with eigenvalues $n\in \mathbb{N}_0$ of finite multiplicity. Then, the symmetry obeying operators generate a block diagonal Lie algebra that can be written as an infinite direct sum of finite dimensional Lie algebras. The key idea is that one can cut off this decomposition at a sufficiently high number without sacrificing approximations in the strong sense. To analyze controllability one hence has to check inclusions on an increasing sequence of finite dimensional Lie algebras. Since the last control Hamiltonian breaks this symmetry in a special way, it ensures controllability.
In contrast to other methods that make systems effectively finite dimensional by truncating the Hilbert space, we now provide convergence analysis in the strong operator topology on the full Hilbert space. Hence we establish Lie theoretic tools to grasp the full infinite dimensionality of the system.

A direct application of our symmetry methods to other systems would be a natural next step: In such systems, some control Hamiltonians would have to commute with a number operator that conserves e.g.~some charge, excitation or occupation number. Note that the finite multiplicity of its eigenvalues is crucial for every statement that we make. This prohibits a direct use for systems with an infinite number of modes, i.e.~the full electromagnetic fields. 
But nevertheless, these methods might be used as a starting point in this direction. 
Another direction of future research would be to study open quantum control systems and to include cavity losses in the JCH model.

\section*{Acknowledgements}
The authors would like to thank Thomas Schulte-Herbr\"uggen for valuable discussions. MH is supported by the International Max Planck Research School for Quantum Science and Technology at the Max-Planck-Institut f\"ur Quantenoptik.


\end{document}